\newcommand{\bk}[1]{\left |#1 \right \rangle}
\newcommand{\ketbra}[2]{|#1\rangle\! \langle #2|}
\newcommand{\nrm}[1]{\| #1 \|_{2}}
\newcommand{\negl}[1]{\text{negl}(#1)}
\newcommand{\henc}[1]{\textnormal{MHE.Enc}_{#1}}
\newcommand{\aenc}[1]{\textnormal{AltMHE.Enc}_{#1}}
\newcommand{\tn}[1]{\textnormal{#1}}
\newcommand{\e}{\vec{e}}
\newcommand{\s}{\vec{s}}
\newcommand{\x}{\vec{x}}
\renewcommand\vec{\mathbf}
\newcommand{\C}{\mathbb{C}}
\newcommand{\N}{\mathbb{N}}
\newcommand{\R}{\mathbb{R}}
\newcommand{\Z}{\mathbb{Z}}
\newcommand{\I}{\mathbb{I}}
\newcommand{\mS}{\mathbb{S}}
\newcommand{\vt}{\vec{t}}
\newcommand{\mi}{\mathrm{i}}
\newcommand{\tV}{\widetilde{V}}
\newcommand{\vh}{\vec{h}}
\newcommand{\vg}{\vec{g}}
 \newtheorem{thm}{Theorem}
  \newtheorem{prop}[thm]{Proposition}
 \newtheorem{cor}[thm]{Corollary}
 \newtheorem{lem}[thm]{Lemma}
 \newtheorem{defn}[thm]{Definition}
\numberwithin{thm}{section}
\numberwithin{equation}{section}
\newenvironment{breakablealgorithm}
  {% \begin{breakablealgorithm}
   \begin{center}
     \refstepcounter{algorithm}% New algorithm
     \hrule height.8pt depth0pt \kern2pt% \@fs@pre for \@fs@ruled
     \renewcommand{\caption}[2][\relax]{% Make a new \caption
       {\raggedright\textbf{\ALG@name~\thealgorithm} ##2\par}%
       \ifx\relax##1\relax % #1 is \relax
         \addcontentsline{loa}{algorithm}{\protect\numberline{\thealgorithm}##2}%
       \else % #1 is not \relax
         \addcontentsline{loa}{algorithm}{\protect\numberline{\thealgorithm}##1}%
       \fi
       \kern2pt\hrule\kern2pt
     }
  }{% \end{breakablealgorithm}
     \kern2pt\hrule\relax% \@fs@post for \@fs@ruled
   \end{center}
  }
\newtheorem{remark}{Remark}[section]
\newenvironment{proof}{\noindent\textit{Proof: }}{}
\begin{document}

\title{Quantum Fully Homomorphic Encryption by Integrating Pauli One-time Pad with Quaternions}

\author{Guangsheng Ma}
\email{gsma@amss.ac.cn}
\affiliation{Academy of Mathematics and Systems Science, Chinese Academy of Sciences, Beijing, China}
\affiliation{School of Mathematics and Physics, North China Electric Power University, Beijing, China}
\thanks{supported by China National Key Research and Development Projects 2020YFA0712300, 2018YFA0704705, Chinese Postdoctoral Science Foundation 2020M680716.}
\author{Hongbo Li}
\email{hli@mmrc.iss.ac.cn}
\affiliation{Academy of Mathematics and Systems Science, Chinese Academy of Sciences, Beijing, China}
\affiliation{University of Chinese Academy of Sciences, Beijing, China}
\maketitle

\begin{abstract}
Quantum fully homomorphic encryption (QFHE) allows to evaluate quantum circuits on encrypted data. We present a novel QFHE scheme, which extends Pauli one-time pad encryption by relying on the quaternion representation of SU(2). With the scheme, evaluating 1-qubit gates is more efficient, and evaluating general quantum circuits is polynomially improved in asymptotic complexity.

Technically, a new encrypted multi-bit control technique is proposed, which allows to perform any 1-qubit gate whose parameters are given in the encrypted form. With this technique, we establish a conversion between the new encryption and previous Pauli one-time pad encryption, bridging our QFHE scheme with previous ones. Also, this technique is useful for private quantum circuit evaluation.

The security of the scheme relies on the hardness of the underlying quantum capable FHE scheme, and the latter sets its security on the learning with errors problem and the circular security assumption.
\end{abstract}

\section{Introduction}
Fully homomorphic encryption (FHE) scheme is an encryption scheme that allows any efficiently computable circuit to perform on plaintexts by a third party holding the corresponding ciphertexts only. As the quantum counterpart, quantum FHE (QFHE) allows a client to delegate quantum computation on encrypted plaintexts to a quantum server, in particular when the client outsources the computation to a quantum server and meanwhile hides the plaintext data from the server.

%The earliest work related to the QFHE can be traced back to the studies on the quantum blind computing, which has a similar goal with QFHE that is secure delegation of quantum computation. However, from the point of view of quantum generalizing homomorphic computing, these solutions have drawbacks such as more than one round interaction, requiring the client a certain of quantum computing capability. After Gentry's ground-breaking work on classical FHE, more and more work consider this topic under the context of quantum analogy of FHE, A main obstacle to prevent these method to become a QFHE is the need . The question of how to design a QFHE scheme has not been addressed well until the work of \cite{mahadev2018classical} occurring.

There are two main differences between quantum FHE and classical FHE. First, in QFHE, the plaintexts are quantum states (or qubits), rather than classical bits. Second, in QFHE, the homomorphic operations are quantum gates, rather than arithmetic ones. Since it is possible to simulate arbitrary classical computation in the quantum setting, a QFHE scheme allows to perform any computation task running on a classical FHE, but not vice versa. From this point, QFHE is a more general framework. It has drawn a lot of attention in the last decade, e.g., \cite{brakerski2018quantum,broadbent2015quantum,childs2001secure,dulek2016quantum,mahadev2018classical,ouyang2018quantum,yu2014limitations}.\\
\\
\textbf{Previous Works.} In 2015, Broadbent and Jeffery \cite{broadbent2015quantum} proposed a complete QHE scheme based on quantum Pauli one-time pad encryption. Specifically, They encrypted every single-qubit of a quantum state (plaintext) with a random Pauli gate (called Pauli one-time pad \cite{ambainis2000private}), and then encrypted the two classical bits used to describe the Pauli pad with a classical FHE, and considered homomorphic evaluations of the universal gates \{Clifford gates, $T$-gate\} for quantum computation. They showed that the evaluation of a Clifford gate can be easily done by public operations on the quantum ciphertext and classical encrypted bits of the pad; the latter will make use of the homomorphic property of classical FHE. They also showed two different approaches to evaluating the non-Clifford gate $T$, at the cost of ciphertext size (or depth of decryption circuit) growing with the number of supported $T$-gate, yielding a QHE for circuits with a constant number of $T$-gates. Since then, how to efficiently evaluate the non-Clifford gate became a key issue.

In 2016, Dulek, Schaffner and Speelman \cite{dulek2016quantum} introduced some special quantum gadgets for achieving the evaluation of $T$-gate, where each gadget is not reusable and duplicable due to its quantum property. Their scheme requires the client to generate a number of quantum gadgets proportional to the number of the $T$-gates to be evaluated, allowing to privately and compactly outsource quantum computation at the cost of additional preparation of quantum evaluation key. In comparison with \cite{broadbent2015quantum}, the dependence on the number of non-Clifford gate is transformed from the ciphertext size (or depth of decryption circuit) to the quantum key.

In 2018, Mahadev proposed the first QFHE scheme with a fully classical key generation process, which reduced the requirement for the quantum capability on the client, so that the client can be completely classical. This scheme used the Pauli one-time pad encryption, and made the evaluations of the universal gates \{Clifford gates, Toffoli-gate\}. To evaluate a Toffoli gate, Mahadev proposed a revolutionary technique called controlled-CNOT operation, which allows to implement a controlled-CNOT gate while keeping the control bit private. With a new approach to evaluating non-Clifford gates, the scheme of \cite{mahadev2018classical} satisfies the compactness requirement of fully homomorphic encryption, and meanwhile there is no longer an explicit bound on the number of supported non-Clifford gates.

One particular requirement of Mahadev's encrypted CNOT operation is that the control bit must be encrypted by an FHE scheme of exponential modulus and equipped with a trapdoor. Later in 2018, Brakerski \cite{brakerski2018quantum} improved Mahadev's work by proposing an alternative approach to realize the encrypted CNOT operation, where the underlying FHE was significantly simplified by reducing the exponential modulus to polynomial modulus, and where the requirement of a trapdoor was also removed. Due to the polynomial noise ratio of the underlying FHE, Brakerski's QFHE scheme achieves a higher level of security, which matches the best-known security for classical FHE, up to polynomial factors. Also, Brakerski showed a close connection between the quantum homomorphic evaluation and the circuit privacy of classical FHE.

As pointed out in \cite{brakerski2018quantum}, one of the most promising applications of QFHE in anticipation is private outsourcing of quantum computation. Improving the efficiency of evaluation is a fundamental question in the studies on homomorphic encryption. In this paper, we focus on improving the efficiency of evaluating quantum algorithms (circuits).

Usually, quantum algorithms (gate-level circuits) are designed by using single-qubit gates and controlled gates (CNOT), such as the famous quantum Fourier transform (cf. Figure 1). When evaluating ``1-qubit+CNOT"-style quantum algorithms with existing ``Clifford+non-Clifford''-style QHE schemes, e.g., \cite{broadbent2015quantum,dulek2016quantum,mahadev2018classical,brakerski2018quantum}, it is required to first decompose each evaluated 1-qubit gate into Clifford/non-Clifford gates, followed by evaluating them one by one (each evaluation requires to perform at least 1 quantum gate). Practically, in average cases, tens of thousands of Clifford/non-Clifford gates are required to approximate a 1-qubit gate within a few bits of precision \cite{william2018Efficient}. So, we consider that if it is possible to design a QFHE scheme that allows to more conveniently and efficiently evaluate 1-qubit gates and thus quantum algorithms, particularly measured in terms of the quantum cost.

This inconvenience in evaluating 1-qubit gate is essentially derived from the small pad space of the encryption scheme. One idea for improvement is to enlarge the pad space from Pauli group to the group SU(2), relying on the notion of approximate computation. This notion, useful for classical FHE \cite{cheon2017homomorphic}, has recently been used in the QFHE setting \cite{mahadev2018classical}. We design a new QFHE scheme based on the above idea, where an important issue addressed is evaluating the CNOT gate in the much more complicated pad setting. Interestingly, our work also provides a tool that allows hiding the evaluated 1-qubit gate from the server. It may be useful for private circuit evaluation in the quantum setting \cite{mohassel2013hide,mohassel2014actively,chardouvelis2021rate}.\\
\\
\textbf{Our Contributions.} We design a new QFHE scheme, which is based on a generalized one-time pad encryption method, called the quaternion one-time pad encryption. We call our quantum ciphertext \emph{the quaternion one-time pad encrypted state} (\textbf{QOTP-encrypted state}), in contrast to \emph{the Pauli one-time pad encrypted state} (\textbf{Pauli-encrypted state}) used in \cite{broadbent2015quantum}. Our scheme has several properties as follows:
\begin{enumerate}
\item[$\bullet$]\textbf{Efficiency.} The cost of evaluating single-qubit is completely classical and not expensive compared to previous QFHE schemes.

\vspace{0.2cm}
With previous ``Clifford+non-Clifford'' QFHE schemes, evaluating a general 1-qubit gate within a specific precision $\epsilon$ requires to evaluate a sequence of Clifford and non-Clifford gates of length O($\log^{2} \frac{1}{\epsilon}$) (by the optimal Solovay-Kitaev algorithm\footnote{The original Solovay-Kitaev algorithm can find a sequence of O$(\log^{3.97}(1/\epsilon))$ quantum gates from a chosen finite set of generators of a density subset of $SU(2)$ to approximate any unitary $SU(2)$ in precision $\epsilon$. However, for the specific finite set $\{ \tn{Clifford gate}, \tn{T-gate} \}$, there is a better version of the SK algorithm with approximation factor O$(\log^{2}(1/\epsilon))$; see \cite{dawson2005solovay} for more details. }), which requires to perform at least O($\log^{2} \frac{1}{\epsilon}$) 1-qubit quantum gates; in comparison, using our scheme only requires to classically homomorphically compute a simple degree-2 polynomial function in $O(\log \frac{1}{\epsilon})$-bit numbers, cf. (\ref{21}).

\vspace{0.1cm}

Practically, in the average case, a sequence of Clifford+T gates of length 25575 is required to approximate a general element of SU(2) within 0.0443 trace distance \cite{william2018Efficient}; in contrast, $14$-bit gate key can represent any element of SU(2) within $\frac{1}{2^{12.5}}$ $L^{2}$-distance, cf. Lemma \ref{ts1}, which guarantees the trace distance no more than $\frac{1}{2^5}=0.0315$, cf. (\ref{42}).

\vspace{0.2cm}

\item[$\bullet$]\textbf{Privacy.} Our scheme achieves the private 1-qubit gate evaluation.

\vspace{0.21cm}
In our scheme, the server only needs to know the encryption of the 1-qubit gate to be evaluated. In contrast, previous schemes require each evaluated gate to be applied in an explicit way.

\vspace{0.2cm}

\item[$\bullet$]\textbf{Conversion.} Our scheme is able to switch back and forth with previous QFHE schemes that are based on Pauli one-time pad encryption.

\vspace{0.2cm}

We show that it is possible to transform a QOTP-encrypted state into its Pauli-encrypted form (cf. Proposition \ref{65}), and a Pauli-encrypted state is in natural QOTP-encrypted form (cf. Lemma \ref{pro1}).

\vspace{0.1cm}

Roughly speaking, the overhead of transforming a QOTP-encrypted state to its Pauli-encrypted form in precision $\epsilon$ is only a fraction O($\frac{1}{\log \frac{1}{\epsilon}}$) of that of evaluating a general 1-bit gate in the same precision $\epsilon$ using the previous QFHE scheme of \cite{mahadev2018classical}, cf. `Efficiency Comparison' in Section \ref{secc5}.

\vspace{0.1cm}

\end{enumerate}

%The decomposition way is transformed from ``Clifford+non-Clifford" decomposition to binary decomposition, the improvement in efficiency relies on a simple fact that using $k$-bit entries is sufficient to represent the element of SU(2) in precision O$(\frac{1}{2^k})$, while the number of required Clifford/non-Clifford gate to achieve the same precision is O($k^{3.97}$).Specifically, transforming a QOTP-encrypted state into its Pauli-encrypted form within precision $\epsilon$ requires O($\log \frac{1}{\epsilon}$) uses of Algorithm \ref{20}, while evaluating a general 1-qubit gate within precision $\epsilon$ using previous QFHE scheme of \cite{mahadev2018classical} requires O($\log^{2} \frac{1}{\epsilon}$) uses of encrypted-CNOT operations. Here, the runtime of encrypted-CNOT operation of \cite{mahadev2018classical} is chosen as the basis for comparison, since it is roughly equal to that of Algorithm \ref{20}.

In comparison with the previous ``Clifford+non-Clifford''-style QFHE schemes, our scheme is less costly in evaluating 1-qubit gates, but more costly in evaluating CNOT gates. For evaluating quantum circuits consisting of $p$ percentage CNOT gates and $(1-p)$ percentage 1-qubit gates within the precision $\tn{negl}(\lambda)$, the complexity advantage of our scheme over the previous ones is O$(\frac{  (1-p)\lambda^2}{ p \lambda  })=O(\lambda)$, when constant $p$ is away from both one and zero. Therefore, except for the extreme case where there are overwhelmingly many CNOTs and negligible 1-qubits gates, our scheme is polynomially better asymptotically, cf. Section \ref{secc5}.

 %An efficiency comparison shows that our QFHE scheme could evaluate some quantum algorithms more efficiently in certain cases. Specifically, for the task of evaluating the quantum Fourier transform on a system of $\lambda$-qubits, the quantum cost of completing this task with our QFHE scheme is only a fraction O($\frac{1}{\lambda^2}$) of the previous one, in the case where we assume the worst non-Clifford-count in approximation by Solovay-Kitaev algorithm, cf. Section \ref{secc5}.

Moreover, by the conversion between our new QFHE scheme and previous ``Clifford +non-Clifford" QHE schemes \cite{broadbent2015quantum,dulek2016quantum,mahadev2018classical,brakerski2018quantum}, one can evaluate the quantum circuits in a hybrid way, which may be more efficient than using a single scheme: for parts of circuits mainly consisting of Clifford gates (or easily approximated by Clifford gates), they can be evaluated in the Pauli one-time pad setting; for parts containing single-qubit gates difficult to approximate, they can be evaluated in the QOTP setting.

The second contribution of this work is a new technique called \emph{encrypted conditional rotation} (encrypted-CROT), which allows the server to perform (up to a Pauli mask) any 1-qubit unitary operator whose parameters are given in encrypted form, cf. Theorem \ref{75}. This technique can bring the following benefits:

\begin{enumerate}

\item[$\bullet$] It provides an approach to private 1-qubit gate evaluation for the QHE schemes based on the Pauli one-time pad.

\vspace{0.2cm}

To be more explicit, this technique allows the server to perform any 1-qubit gate whose parameters are given in encrypted form, and the introduced Pauli mask can be merged with the encryption pad.

\vspace{0.2cm}

\item[$\bullet$] It can be used in the QHE scheme of \cite{broadbent2015quantum} towards constructing a ``Clifford +T''-style QFHE scheme, cf. Remark \ref{cw1}.

\vspace{0.2cm}

It is providing a meaningful alternative to ``Clifford+Toffoli''-style QFHE of \cite{mahadev2018classical}, because although any non-Clifford gate, together with Clifford group, is universal for quantum computation, the efficiency of approximating a particular quantum gate with different non-Clifford gates is different. In practice, the 1-qubit-level T-gate is a more popular choice than the 3-qubit-level Toffoli gate, as the representative element of non-Clifford gates \cite{dawson2005solovay,kliuchnikov2015practical,william2018Efficient}.

\vspace{0.2cm}

\item[$\bullet$] It allows to transform a QOTP-encrypted state into its Pauli-encrypted form.
\end{enumerate}

To the best of our knowledge, this work enriches the family of QFHE schemes by providing the first one of not ``Clifford +non-Clifford''-style. Due to the absence of the Clifford gate decomposition, the scheme avoids some difficulties in its (practical) implementation, but possibly loses some potential advantages in error-correction or fault-tolerant. With the conversion between these QFHE schemes, it is possible to exploit their respective strengths, and provide diverse options for evaluating distinct quantum circuits. This work also provides a useful tool for private circuit evaluations in the quantum setting.

\subsection{Technical Overview.}
Our basic idea to improve the efficiency is to avoid decomposing 1-qubit gate into numerous Clifford+non-Clifford gates during the evaluation process. This idea is hard to realize in previous Pauli one-time pad setting. We show why it is hard. In the QHE scheme based on Pauli one-time pad, traced back to \cite{broadbent2015quantum}, a 1-qubit state (plaintext) is encrypted in form $X^{a}Z^{b}\bk{\psi}$, where $X,Z$ are Pauli matrices, and the Pauli keys $a,b\in\{0,1\}$ are also encrypted by using a classical FHE. Any Clifford gate can be easily evaluated in this setting.

Now, we use $U(\alpha,\beta,\gamma)$ to denote a 1-qubit gate $U$ in Euler angle representation, i.e., $\alpha,\beta,\gamma\in[0,1)$, known as (scaled) Euler angles,
\begin{align}\label{45}
U(\alpha,\beta,\gamma)=R_{\alpha}T_{\beta}R_{\gamma}, \quad \textnormal{where} \quad  R_{\alpha}=\begin{bmatrix}1 & \\ & e^{2\mi\pi\alpha}\end{bmatrix}, T_{\beta}=\begin{bmatrix}\cos(\pi\beta) & -\sin(\pi\beta)\\\sin(\pi\beta) & \cos(\pi\beta)\end{bmatrix}.
\end{align}
To evaluate a 1-qubit gate $U(\alpha,\beta,\gamma)$, by the conjugate relation between the 1-qubit gate $U(\alpha,\beta,\gamma)$ and Pauli pads $X^{a}Z^{b}$, i.e.,
\begin{align}\label{ddd1}
U\big{(} (-1)^{a} \alpha, (-1)^{a+b}\beta, (-1)^{a}\gamma\big{)}X^{a}Z^{b}=X^{a}Z^{b}U(\alpha,\beta,\gamma)
\end{align}
it seems sufficient to directly perform the operator $U\big{(} (-1)^{a} \alpha, (-1)^{a+b}\beta, (-1)^{a}\gamma\big{)}$ on the encrypted state. Unfortunately, things are not so simple. We ignore the fact that the parameters of this operator depend on the secret keys $a,b$, which are not allowed to be known by the server. Indeed, even realizing a simple operation with a private 1-bit parameter takes a lot of effort (cf. $\tn{CNOT}^{x}$ of \cite{mahadev2018classical}).

On the other hand, we observe that the ease of evaluating Clifford gates comes from the Pauli pad, since the encrypted pad keys make private Pauli operators possible. If we choose the pad among all single-qubit unitary gates, then it will be easy to evaluate any 1-qubit gate; below, we call this new one-time pad encryption scheme \emph{the quaternion one-time pad encryption} (\textbf{QOTP}). Specifically, to evaluate a 1-qubit gate $V$ on a QOTP-encrypted state $U\bk{\psi}$, it suffices to update the one-time pad from $U$ to $UV^{-1}$, since $U\bk{\psi}=U V^{-1}(V \bk{\psi})$. This update can be easily done by classical FHE computations on encrypted pad keys, cf. Section 4.2.

Still, things are not so simple. Indeed, in the QOTP setting, the evaluation of CNOT gate (necessary for universal quantum computation) is not easy: similar to the case of (\ref{ddd1}), the 2-qubit-level CNOT gate does not preserve the pad space SU(2)$\times$SU(2) by conjugation, and the problem seems to be more complicated than before, since it is now on a 2-qubit system. Looking closely, we find that this problem can be solved in a relatively simple way by going back to the 1-qubit system.

Our solution is to rely on a conversion between QOTP and Pauli one-time pad. Specifically, we want to be able to transform a QOTP-encrypted state, together with the encrypted pad key into a Pauli-encrypted form. This allows to easily evaluate the CNOT gate on the converted ciphertext, and the resulting Pauli-encrypted state is in natural QOTP-encrypted form.

Transforming a QOTP-encrypted state to its Pauli-encrypted form is highly nontrivial. In fact, this means evaluating the decryption circuits of QOTP in the Pauli one-time pad setting, similar to the implementation of bootstrapping in classical FHE. However, apart from the hard-to-use information-theoretical secure quantum ciphertexts, the only thing we can use here for bootstrapping is the encrypted pad keys. Current QFHE techniques of taking one encrypted 1-bit as control are insufficient in utilizing encrypted multi-bit pad key. To achieve the desired conversion, we develop a new technique.\\
%We are inspired by the fact that the ciphertext format can be switched by bootstrapping in classical homomorphic encryption. Using a quantum analogy of bootstrapping, we are able to change a complicated QOTP encryption into a simpler Pauli-encrypted form, and thus enabling easy evaluation of CNOT gates in the Pauli pad setting. Let us see how to realize this particular bootstrapping, i.e., evaluating the decryption circuits of QOTP in the Pauli one-time pad setting.\textbf{Key Technique.}
\\
\textbf{Key Technique.} The new technique is an encrypted multi-bit control technique, which allows to implement (up to a Pauli matrix) any 1-qubit gate whose parameters are given in encrypted form. To see the transformation functionality of this technique, given a QOTP encryption $U(\alpha,\beta,\gamma)\bk{\psi}$ and the encrypted pad key Enc$(\alpha,\beta,\gamma)$, performing $U(\alpha,\beta,\gamma)^{-1}$ on the QOTP encryption will output a state $\bk{\psi}$ in Pauli-encrypted form.

As for the implementation of the technique, by the Euler representation (\ref{45}), the key is to implement such an operation that allows to perform any rotation $R_{\alpha}$ of the angle $\alpha$ given in encrypted form.
We call such an operation the encrypted-CROT. Below, we outline how to achieve it.

Inspired by the fact that the conditional rotation is realized by successive 1-bit controlled rotations, i.e., $R^{-1}_{\alpha}=R^{-1}_{\alpha_12^{-1}}...R^{-1}_{\alpha_m2^{-m}}$, where $\alpha=\sum^{m}_{j=1}\alpha_j2^{-j}$, $\alpha_j\in\{0,1\}$, we first consider the implementation of the encrypted 1-bit controlled rotation. By the idea of \cite{mahadev2018classical} for achieving encrypted 1-bit controlled CNOT operation, we show that it is possible to implement the encrypted 1-bit controlled rotation of arbitrary rotation angle $\omega\in[0,1)$, up to a Pauli matrix and a rotation of double angle $2\omega$. Unlike Mahadev's encrypted-CNOT, there is an additional random rotation $R^{d}_{2\omega}$ (the random bit $d\in\{0,1\}$) that is introduced to the output state. Although such rotation is undesired, it also serves as a mask to protect the output and is necessary for security, making it difficult to remove directly. Looking closely, in the multi-bit case, we observe that these undesired rotations can be removed gradually by relying on the implementation structure of the multi-bit conditional rotation.

Specifically, to realize the Enc($\alpha$)-controlled rotation $R^{-1}_{\alpha}$, given the encrypted angle Enc($\alpha$), first use as control the encrypted least significant bit $\alpha_m$ to perform the encrypted 1-bit controlled rotation $R^{\alpha_m}_{2^{-m}}$. The resulting undesired rotation is of angle $2^{-(m-1)}$, and then can be merged with the controlled rotations in the waiting list; this merging is done by homomorphic evaluations on encrypted pad keys. Using an iterative procedure, we are able to realize the desired rotation $R^{-1}_{\alpha}$, and the undesired rotation is of an angle finally growing to $1/2$, becoming a Pauli mask.

While the above implementation of encrypted-CROT relies on the Euler angle representation of SU(2), we observe that the quaternion representation of SU(2) provides an arithmetic circuit implementation of much smaller depth for the product in SU(2), more consistent with our main purpose of speeding up the evaluation of 1-qubit gate. So, in the QOTP encryption scheme, we use the quaternion-valued pad key, and the corresponding Euler angles of the pad can be obtained by classical homomorphic computations.

%Through the computational overhead lens, our scheme involves more classical FHE computations, in particular, the evaluation of 1-qubit gates is done entirely by classical FHE computations without any physical quantum operation. This undoubtedly leads to an increase in the cost of classical homomorphic evaluations, while a reduction in quantum cost is its benefit. The latter is most needed in QFHE application scenarios, where quantum computing power is a precious and scarce resource.

The encrypted-CROT technique is useful. It can make QHE of \cite{broadbent2015quantum} a QFHE. Specifically, the main problem in \cite{broadbent2015quantum} is to evaluate non-Clifford gate $T$, as defined in (\ref{z4}), on the Pauli-OTP encrypted state $X^{a}$Z$^{b}\bk{\psi}$. By the conjugate relation\footnote{It follows by combining $P^{a}X^{a}=X^{a}P^{a}Z^{a}$ and $TX^{a}Z^{b}=X^{a}Z^{a+b}P^{a}T$, cf. (5) in \cite{broadbent2015quantum}.} $TX^aZ^b = P^aX^aZ^bT$, after performing $T$ on $X^{a}Z^{b}\bk{\psi}$, the server still needs to correct $P^a$. Simple implementation of the operator $P^a$ may reveal the secret Pauli key $a$. Now with the encrypted-CROT, it is able to perform controlled phase-gate $P^a$ (i.e., rotation of angle $a/4$) with control bit $a$ given in encrypted form; see Remark \ref{cw1} for more details.

Also, the encrypted-CROT enables the private 1-bit gate evaluation in the Pauli-OTP setting. To evaluate a 1-qubit gate $V$ privately on some Pauli-encrypted state $Z^{a}X^{b}\bk{\psi}$, when given the encrypted parameters of the unitary operator $VX^{-b}Z^{-a}$, using the encrypted-CROT allows to prepare the desired state $V\bk{\psi}$ in Pauli-encrypted form.

In comparison with the scheme in \cite{mahadev2018classical}, our QFHE scheme avoids the Clifford decomposition of 1-qubit gates during evaluations, while the 1-qubit pads still need to be split into many 1-bit rotations to achieve encrypted-CROT when evaluating CNOT gates. This eventually leads to an improvement in asymptotic complexity of evaluating general circuits, as analyzed in detail in Section \ref{secc5}. Intuitively, the improvement relies on the simple fact that using $O(\lambda)$-bit binary parameters allows to describe the 1-qubit gate within approximation error $\frac{1}{2^\lambda}$; while approximating 1-qubit gate to the same accuracy, on average case $O(\lambda^2)$ number of T-gate is required by the best known algorithms. At a high level, our QFHE harnesses the efficiency advantage of binary decomposition over Clifford decomposition. In practice, finding the binary decomposition is also much easier than finding the Clifford decomposition.

%This change of representation is not necessary in the real-valued quantum computation (cf. \cite{aharonov2003simple}, Lemma 4.6 of \cite{kitaev1997quantum}), where all the involved 1-qubit quantum gates are in SO(2), and in that case, the rotation representation: $\begin{bmatrix}\cos2\pi\alpha & -\sin2\pi\alpha\\ \sin2\pi\alpha & \cos2\pi\alpha \end{bmatrix}$ where $\alpha\in[0,1)$, already provides a low-depth circuit implementation for the product in SO$(2)$. The former can be improved by the rapidly evolving technique of classical FHE, and

\subsection{Paper Organization}
We begin with some preliminaries in Section \ref{pr4}. Section \ref{secc3} presents the encrypted multi-bit control technique\,{\textemdash}\,the main technique of this paper. Section \ref{secc4} provides the QOTP encryption scheme and methods for performing homomorphic evaluations on QOTP-encrypted state. In Section \ref{secc5}, we present a new QHE scheme, show that it is a leveled QFHE, and make an efficiency comparison between the new QFHE and the previous QFHE in \cite{mahadev2018classical}.

\section{Preliminaries}\label{pr4}

\subsection{Notation}
%A super-polynomial function, $SP(\cdot)$, is a function such that for any polynomial $P(\cdot)$, $\displaystyle\lim_{\lambda\rightarrow\infty} \frac{P(\lambda)}{SP(\lambda)}=0$.

A negligible function $f=f(\lambda)$ is a function in a class \textnormal{negl}($\cdot$) of functions, such that for any polynomial function $P(\lambda)$, it holds that $\displaystyle\lim_{\lambda\rightarrow\infty} f(\lambda)P(\lambda)=0$. A probability $p(\lambda)$ is overwhelming if $1-p=\tn{negl}(\lambda)$. For all $q\in\N$, let $\Z_q$ be the ring of integers modulo $q$ with the representative elements in the range $(-q/2,q/2]\bigcap\Z$. We use $\mi$ to denote the imaginary unit, and use $\I$ to denote the identity matrix whose size is obvious from the context. We use $\mS^3=\{ \vt \big{|} ||\vt||_2=1, \vt \in\R^4 \}$ to denote the unit 3-sphere.

The $L^2$-norm of vector $\vec{a}=(a_j)$ is denoted by $\nrm{\vec{a}}:=\sqrt{\sum_{j} |a_j|^2}$. The $L^2$-spectral norm of matrix $A=(a_{ij})$ is $||A||_2=\max_{||\vec{v}||_2=1}||A\vec{v}||_2$.
The $L^\infty$-norm of $A$ is $||A||_\infty=\max_{i,j}|a_{ij}|$.

For a qubit system that has probability $p_i$ in state $\bk{\psi_i}$ for every $i$ in some index set, the density matrix is defined by $\rho=\sum_{i}p_i\ketbra{\psi_i}{\psi_i}$.

%and for any exponential function $E(x)$,
%\begin{equation}
%lim_{\lambda\rightarrow\infty} \frac{f(\lambda)}{E(\lambda)}=0.
%\end{equation}
%For instance, $\lambda^{log \lambda}$ is a such kind of function.

\textbf{$H$-distance and trace distance.}
Let $X$ be a finite set. For two quantum states $|\psi_1\rangle=\sum_{x\in X}f_1(x)|x\rangle$ and $|\psi_2\rangle=\sum_{x\in X}f_2(x)|x\rangle$, the \emph{H-distance}\footnote{For states of positive real amplitude, this distance is often referred to as \emph{the Hellinger distance}, and can be bounded by \emph{the total variation distance}, cf. Lemma 12.2 in \cite{Harsha2011Hellinger}.} between them is
\begin{align}\label{cc1}
\| |\psi_1\rangle- |\psi_2\rangle \|^{2}_{H}= \frac{1}{2}\sum_{x\in X} |f_1(x)-f_2(x)|^2.
\end{align}
The \emph{trace distance} between two normalized states $|\psi_1\rangle$ and $|\psi_2\rangle$ is
\begin{align}\label{s3}
\| |\psi_1\rangle - |\psi_2\rangle \|_{tr}=  \frac{1}{2} tr\left( \sqrt{(|\psi_1\rangle\langle\psi_1|- |\psi_2\rangle\langle\psi_2|)^\dagger(|\psi_1\rangle\langle\psi_1|- |\psi_2\rangle\langle\psi_2|)} \right).
\end{align}
If $|\psi_1\rangle$ and $|\psi_2\rangle$ are pure states, their $H$-distance and trace distance are related as following (cf. Thm 9.3.1 in \cite{wilde2013quantum}):
\begin{align}\label{42}
\| |\psi_1\rangle - &|\psi_2\rangle \|_{tr} \leq \sqrt{1-|\langle \psi_1|\psi_2 \rangle|^2}= \sqrt{1- |1- \langle \psi_1|\psi_1-\psi_2\rangle|^2}\nonumber\\
& \leq \sqrt{2| \langle \psi_1|\psi_1-\psi_2\rangle|}+| \langle \psi_1|\psi_1-\psi_2\rangle|\leq 2 \sqrt{ \| |\psi_1\rangle- |\psi_2\rangle \|_{H}}+ \sqrt{2} \| |\psi_1\rangle- |\psi_2\rangle \|_{H},
\end{align}
where $|\psi_1-\psi_2\rangle$ denotes $|\psi_1\rangle-|\psi_2\rangle$, and the last inequality is by Cauchy-Schwarz inequality.

\textbf{Discrete Gaussian distribution.} The discrete Gaussian distribution over $\Z_q$ with parameter $B\in\N$ ($B\leq \frac{q}{2}$) is supported on $\{x\in\Z_q:\,|x|\leq B\}$ and has density function
\begin{equation}\label{s1.1}
 D_{\Z_q,B}(x) \,=\, \frac{e^{\frac{-\pi |x|^2}{B^2}}}{\sum\limits_{x\in\Z_q,\, |x|\leq B}e^{\frac{-\pi | x|^2}{B^2}}} \;.
\end{equation}
For $m\in\N$, the discrete Gaussian distribution over $\Z_q^m$ with parameter $B$ is supported on $\{x\in\Z_q^m:\,||x||_{\infty}\leq B\}$ and has density
\begin{equation}
D_{\Z_q^m,B}(x) \,=\, D_{\Z_q,B}(x_1)\cdots D_{\Z_q,B}(x_m), \quad \forall x = (x_1,\ldots,x_m) \in \Z_q^m.
\end{equation}

\textbf{Pauli matrices.} The Pauli matrices $X,Y,Z$ are the following $2\times2$ unitary matrices:
 \begin{equation}\label{1.12}
X=\left[               %左括号
  \begin{array}{cc}
 & 1\\
1 &
  \end{array}
\right], \quad
Z=\left[                %左括号
  \begin{array}{cc}
 1& \\
 & -1
  \end{array}
\right],  \quad
Y=\left[               %左括号
  \begin{array}{cc}
 & -\mi\\
\mi &
  \end{array}
\right].
\end{equation}
The Pauli group (on 1-qubit) is generated by Pauli matrices. Any element in the group can be written (up to a global phase) as $X^{a}Z^{b}$ where $a,b\in\{0,1\}$.

\subsection{Representation of Single-qubit Gate }\label{sec2.3}

Any single-qubit gate can be represented by a $2\times2$ unitary matrix. We restrict our attention to the special unitary group SU(2), i.e., the group consisting of all $2\times2$ unitary matrices with determinant $1$, since any $2\times2$ unitary matrix can be written as the product of an element of SU(2) with a global phase factor, the latter being unimportant and unobservable by physical measurement (cf. Section 2.27 in \cite{nielsen2000quantum}). We first present the quaternion representation of SU(2). Recall from (\ref{1.12}) the Pauli matrices $X,\ Z,\ Y$, and denote $\sigma_1=\mi X$, $\sigma_2=\mi Z$, $\sigma_3=\mi Y$, where $\sigma_1,\sigma_2,\sigma_3\in SU(2)$. Remember that $\I_2$ denotes the $2\times2$ identity matrix. It is easy to verify that
\begin{align}
&\sigma_1\sigma_2=\sigma_3,\quad \sigma_2\sigma_3=\sigma_1,\quad \sigma_3\sigma_1=\sigma_2, \\
&\sigma_k\sigma_j=-\sigma_j\sigma_k, \qquad   k,j \in \{ 1,2,3\}, \quad k\neq j,\\
&\sigma_j^2=-\mathbb{I}_2, \qquad  \qquad  \qquad  j \in \{ 1,2,3\}.
\end{align}
So $\sigma_1, \sigma_2, \sigma_3$ can be viewed as a basis of the $\R$-space of pure quaternions.

\textbf{Elements of SU(2).} Any $2\times2$ unitary matrix must be of the form $\begin{bmatrix}x & y \\ w & z\end{bmatrix}$, where $x,y,w,z\in\C$, such that:
\begin{equation}
\left[               %左括号
  \begin{array}{cc}
 x&\ y\\
w&\ z
  \end{array}
\right]\left[                %左括号
  \begin{array}{cc}
 \bar{x}&\ \bar{w} \\
 \bar{y}&\ \bar{z}
  \end{array}
\right]=\left[                %左括号
  \begin{array}{cc}
 \bar{x}x+\bar{y}y,&\ x\bar{w}+y\bar{z} \\
 w\bar{x}+z\bar{y},&\ w\bar{w}+z\bar{z}
  \end{array}
\right]=\left[\begin{array}{cc}
 1& \\
 &1
  \end{array}
\right].
\end{equation}
Therefore, it holds that $x:(-y)=\bar{z}:\bar{w}$. Let $w=c\bar{y}$, $z=-c\bar{x}$ for some $c\in\C$. By $|x|^2+|y|^2=|w|^2+|z|^2=1$, one gets $|c|=1$. This implies that any $2\times2$ unitary matrix is of the form $\begin{bmatrix*}[r]x &\ y \\ c\bar{y} &\ -c\bar{x}\end{bmatrix*}$, where $c$ is unimodular. In particular, any element of SU(2), as a $2\times2$ unitary matrix with determinant 1, can be written as $\begin{bmatrix*}[r]x &\ y \\ -\bar{y} &\ \bar{x}\end{bmatrix*}$, where $x,y \in \C$ such that $|x|^2+|y|^2=1$.

\begin{defn}\label{2.1}
For any vector $\vec{t}=(t_1,t_2,t_3,t_4)\in\R^{4}$, the linear operator $U_{\vec{t}}$ indexed by $\vt$ is
\begin{align}\label{c2}
U_{\vec{t}}=\begin{bmatrix*}[r]
x && y  \\  -\bar{y} && \bar{x}\end{bmatrix*}, \hspace{1.5cm} \text{where} \quad x=t_1+t_3\mi, \quad y=t_4+t_2\mi.
\end{align}
\end{defn}

\begin{defn}
The quaternion representation of $U_{\vec{t}}\in\tn{SU(2)}$, where $\vt \in \mathbb{S}^3$ is
\begin{align}\label{341}
U_{(t_1,t_2,t_3,t_4)}=t_1\mathbb{I}_2+t_2\sigma_1+t_3\sigma_2+t_4\sigma_3,
\end{align}
where $\sigma_1,\sigma_2,\sigma_3$ are the basis pure quaternions.
\end{defn}
%By abuse of notation slightly we also use $U_{\vec{t}}$ to denote the operator defined as in \emph{(\ref{341})} for $|\vec{t}|\neq1$

Any element of SU$(2)$ has a unique unit 4-vector index. The \emph{inversion} and \emph{multiplication} in SU(2) are realized in the unit vector index form by:
\begin{align}\label{2023}
U_{(t_1,t_2,t_3,t_4)}^{-1}=U_{(t_1,-t_2,-t_3,-t_4)},
\end{align}
\begin{align}\label{21}
U_{(t_1,t_2,t_3,t_4)}U_{(k_1,k_2,k_3,k_4)}=U_{(t_1k_1-t_2k_2-t_3k_3-t_4k_4,\ t_1k_2+t_2k_1+t_3k_4-t_4k_3,}  \nonumber\\
_{t_1k_3+t_3k_1+t_4k_2-t_2k_4,\ t_1k_4+t_4k_1+t_2k_3-t_3k_2)}.&
\end{align}

When $|\ \nrm{\vec{t}}-1|\ll 1$ and $\nrm{\vec{t}}\neq1$, there are several methods to approximate the non-unitary operator $U_{\vec{t}}$ by a unitary operator. We give a specific method as follows:

\begin{lem}\label{2.2}
For any $\vt\in\R^{4}$ such that $\nrm{\vec{t}}\neq1$ and $\big{|} \nrm{\vec{t}}-1 \big{|} =m\leq 1$, there is an algorithm to find a vector $\vec{t}'$ such that $\nrm{\vec{t}'}=1$, $\nrm{\vec{t}-\vec{t'}}\leq \sqrt{3m}$, and $||U_{\vec{t}}-U_{\vec{t}'}||_2\leq\sqrt{3m}$.
\end{lem}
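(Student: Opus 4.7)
The natural choice of algorithm is simply to normalize: set $\vt' := \vt/\nrm{\vt}$. Then $\nrm{\vt'}=1$ by construction, and
\begin{equation*}
\nrm{\vt-\vt'} \;=\; \Big\| \vt\Big(1-\frac{1}{\nrm{\vt}}\Big) \Big\| \;=\; \big|\nrm{\vt}-1\big| \;=\; m,
\end{equation*}
which is already at most $\sqrt{3m}$ because $m\le 1 \le 3$ forces $m\le\sqrt{3m}$. So the first two conclusions fall out of the definition.

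The work is then in bounding the operator norm. I would first prove the auxiliary identity that, for \emph{any} $\vh=(h_1,h_2,h_3,h_4)\in\R^4$ (not only unit vectors), one has $\|U_{\vh}\|_2 = \nrm{\vh}$. To see this, extend Definition~\ref{2.1} (which is linear in $\vt$) by writing $U_{\vh}=h_1\I_2+h_2\sigma_1+h_3\sigma_2+h_4\sigma_3$. Since each $\sigma_j$ satisfies $\sigma_j^\dagger=-\sigma_j$ (as $\sigma_j\sigma_j^\dagger=\I_2$ and $\sigma_j^2=-\I_2$), one has $U_{\vh}^\dagger = h_1\I_2 - h_2\sigma_1 - h_3\sigma_2 - h_4\sigma_3$. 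Using the anticommutation relations of $\sigma_1,\sigma_2,\sigma_3$ recorded just before Definition~\ref{2.1}, a direct expansion gives
\begin{equation*}
U_{\vh}^\dagger U_{\vh} \;=\; h_1^2\I_2 - (h_2\sigma_1+h_3\sigma_2+h_4\sigma_3)^2 \;=\; (h_1^2+h_2^2+h_3^2+h_4^2)\I_2 \;=\; \nrm{\vh}^2\,\I_2,
\end{equation*}
since the cross terms in the square cancel pairwise by the anticommutation. Hence $U_{\vh}$ has constant singular values equal to $\nrm{\vh}$, giving $\|U_{\vh}\|_2=\nrm{\vh}$.

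Finally, by linearity of $\vh\mapsto U_{\vh}$ we have $U_{\vt}-U_{\vt'}=U_{\vt-\vt'}$, so the identity above yields
\begin{equation*}
\|U_{\vt}-U_{\vt'}\|_2 \;=\; \nrm{\vt-\vt'} \;=\; m \;\le\; \sqrt{3m},
\end{equation*}
completing the proof. The only step with any substance is the quaternionic identity $U_{\vh}^\dagger U_{\vh}=\nrm{\vh}^2 \I_2$, which I expect to be the main (and only mild) obstacle; everything else follows from the definition of $\vt'$ and elementary inequalities. Note that this approach actually delivers the sharper bound $m$ in place of $\sqrt{3m}$, which the statement of the lemma weakens presumably to absorb any additional slack a non-exact normalization routine might introduce in an actual numerical implementation.
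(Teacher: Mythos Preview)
Your proof is correct, and your key identity $U_{\vh}^\dagger U_{\vh}=\nrm{\vh}^2\I_2$ is exactly the same as the one the paper uses at the end of its argument. The difference lies in the choice of $\vt'$.

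You normalize: $\vt'=\vt/\nrm{\vt}$, which gives the sharp bound $\nrm{\vt-\vt'}=m\leq\sqrt{3m}$. The paper instead builds $\vt'$ greedily, copying coordinates of $\vt$ one at a time until the partial sum of squares reaches $1$, then adjusting a single coordinate via one square root (and zeroing the rest in the overshoot case). That construction only yields $\nrm{\vt-\vt'}\leq\sqrt{|\nrm{\vt}^2-1|}\leq\sqrt{3m}$, which explains why the stated bound is $\sqrt{3m}$ rather than $m$. Your closing remark is on target: the paper's algorithm modifies at most one coordinate and needs only a single square root and no division, which is friendlier for the finite-precision ($k$-bit binary fraction) arithmetic used downstream. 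Indeed, the specific formulas from this construction (equations (\ref{q1}) and (\ref{q2})) are later invoked verbatim to define the ``unitary approximation'' in Section~\ref{sec3.1}, so the paper cares about this particular $\vt'$, not just any unit vector close to $\vt$. As a bare proof of the lemma, though, your normalization argument is cleaner and strictly stronger.
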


\begin{proof}
We proceed by constructing an approximate vector $\vec{t}'$. Starting from a 4-dimensional vector $\vec{t}'=\vec{0}$, assign values $t'_i=t_i$ for $i$ from $1$ to $4$, one by one, as much as possible until $\sum^{4}_{i=1}|t'_i|^2=1$. More specifically, there are two cases in total:
\begin{enumerate}
\item $\nrm{\vec{t}}\geq1$. In this case, there must exist some $l\in \{1,2,3,4\}$ such that $\sum^{l}_{i=1}t_i^2\geq1$ and $\sum^{l-1}_{i=1}t_i^2<1$. Let sgn$(t_{l})$ be the sign of $t_{l}$. We set
 \begin{equation}\label{q1}
t'_{i}=\left\{               %左括号
  \begin{array}{cc}
t_{i}, \quad & 1\leq i \leq l-1 \\
sgn(t_{l})\sqrt{1-\sum^{l-1}_{s=1}t_s^2}, \quad &i=l. \\
0, \quad & i>l
  \end{array}
\right.
\end{equation}

As an example, if $\vec{t}=(\frac{1}{2},\frac{3}{4},\frac{1}{2},\frac{1}{2})$, then $\vec{t}'=(\frac{1}{2},\frac{3}{4},\frac{\sqrt{3}}{4},0)$.

\item $\nrm{\vec{t}}<1$. In this case, we set
 \begin{equation}\label{q2}
t'_{i}=\left\{               %左括号
  \begin{array}{cc}
t_{i}, \quad & 1\leq i \leq 3 \\
sgn(t_{4})\sqrt{1-\sum^{3}_{s=1}t_s^2}, &i=4. \\
  \end{array}
\right.
\end{equation}
As an example, if $\vec{t}=(\frac{1}{2},\frac{1}{2},\frac{1}{2},0)$, then $\vec{t}'=(\frac{1}{2},\frac{1}{2},\frac{1}{2},\frac{1}{2})$.

\end{enumerate}

In cases 1, we have $(t_l-t'_l)^2\leq t^2_l-t'^2_l$, and then
\begin{align}
\nrm{\vec{t}-\vec{t'}}=\sqrt{\displaystyle\sum^{4}_{i=1}(t_i-t'_i)^2}=\sqrt{(t_l-t'_l)^2+ \displaystyle\sum^{4}_{j=l+1}t^2_j} \leq \sqrt{ \sum^{4}_{j=1}t^2_j-\sum^{4}_{j=1}t'^2_j }.
\end{align}

In cases 2, we have $(t'_4-t_4)^2\leq t'^2_4-t^2_4$, and then
\begin{align}
\nrm{\vec{t}-\vec{t'}}=\sqrt{\displaystyle\sum^{4}_{i=1}(t'_i-t_i)^2}=\sqrt{ (t'_4-t_4)^2} \leq \sqrt{ \sum^{4}_{j=1}t'^2_j-\sum^{4}_{j=1}t^2_j }.
\end{align}

In both cases, we have
\begin{align}
\nrm{\vec{t}-\vec{t'}}\leq \sqrt{|\ \nrm{\vec{t}}^2-\nrm{\vec{t'}}^2  \ |}
=\sqrt{\left|(\nrm{\vec{t}}-1)(\nrm{\vec{t}}+1)\right|}\leq \sqrt{3m}.
\end{align}
By (\ref{341}), it holds that $U_{\vec{t}}-U_{\vec{t}'}=\sum^{3}_{i=0}(t_i-t'_i)\sigma_{i}$ where $\sigma_0=\I_2$, and thus $(U_{\vec{t}}-U_{\vec{t}'})^{\dagger}(U_{\vec{t}}-U_{\vec{t}'})=\nrm{\vec{t}-\vec{t'}}^2\ \I_2$. So, we have
\begin{align}
\nrm{U_{\vec{t}}-U_{\vec{t}'}}=\max\limits_{\nrm{\vec{v}}=1}\sqrt{ \vec{v}^{\dagger}(U_{\vec{t}}-U_{\vec{t}'})^{\dagger}(U_{\vec{t}}-U_{\vec{t}'})\vec{v} } = \nrm{\vec{t}-\vec{t'}}\leq\sqrt{3m}.
\end{align}
\end{proof}$\hfill\blacksquare$

The following is a direct corollary of Lemma \ref{2.2}.
\begin{cor}\label{c2.3}
For any $4$-dimensional vector-valued function $\vt=\vt(\lambda)$ that satisfies $\big{|} \nrm{\vec{t}}-1\big{|} = \textnormal{negl}(\lambda)$, one can find a vector-valued function $\vec{t'}=\vt'(\lambda)$ that satisfies $\nrm{\vec{t'}-\vec{t}}= \textnormal{negl}(\lambda)$ and $\|\vec{t}'\|_2=1$. Moreover, it holds that $\nrm{U_{\vt}-U_{\vt'}}= \textnormal{negl}(\lambda)$.
\end{cor}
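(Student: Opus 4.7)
The plan is to obtain the corollary by applying Lemma \ref{2.2} pointwise in $\lambda$ and then checking that the resulting error bound is still negligible. Let $m(\lambda):=\big|\,\|\vec{t}(\lambda)\|_2-1\,\big|$, which by hypothesis is $\tn{negl}(\lambda)$. In particular there exists $\lambda_0$ such that $m(\lambda)\leq 1$ for all $\lambda\geq\lambda_0$, so that the hypothesis $|\,\|\vec{t}\|_2-1|\leq 1$ of Lemma \ref{2.2} is satisfied for all sufficiently large $\lambda$.

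First I would define $\vec{t}'(\lambda)$ by cases. If $\|\vec{t}(\lambda)\|_2=1$, set $\vec{t}'(\lambda):=\vec{t}(\lambda)$, which trivially yields $\|\vec{t}'-\vec{t}\|_2=0$ and $U_{\vec{t}}=U_{\vec{t}'}$. Otherwise, apply the explicit construction of Lemma \ref{2.2} to $\vec{t}(\lambda)$ to obtain a unit vector $\vec{t}'(\lambda)$ with
\begin{align}
\|\vec{t}(\lambda)-\vec{t}'(\lambda)\|_2 \leq \sqrt{3\,m(\lambda)}, \qquad \|U_{\vec{t}(\lambda)}-U_{\vec{t}'(\lambda)}\|_2 \leq \sqrt{3\,m(\lambda)}.
\end{align}
(For the finitely many $\lambda<\lambda_0$ where $m(\lambda)>1$, the value of $\vec{t}'(\lambda)$ may be chosen arbitrarily on the unit sphere, since only asymptotic behavior matters for the negligibility claim.)

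It then remains to verify that $\sqrt{3\,m(\lambda)}$ is itself negligible in $\lambda$. This is the only non-bookkeeping step, and it is a short argument: for any polynomial $P(\lambda)$, one has $\sqrt{3\,m(\lambda)}\,P(\lambda)=\sqrt{3\,m(\lambda)\,P(\lambda)^2}$, and since $P(\lambda)^2$ is polynomial and $m(\lambda)=\tn{negl}(\lambda)$, the product $m(\lambda)\,P(\lambda)^2$ tends to $0$, hence so does its square root. Therefore $\|\vec{t}'-\vec{t}\|_2=\tn{negl}(\lambda)$ and $\|U_{\vec{t}}-U_{\vec{t}'}\|_2=\tn{negl}(\lambda)$, giving all three required properties.

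No step presents a genuine obstacle: the construction is exactly the one from Lemma \ref{2.2}, and the only substantive observation is the standard fact that the square root of a negligible function is negligible, which is what turns the $\sqrt{3m}$ error bound of the lemma into a $\tn{negl}(\lambda)$ bound in the corollary.
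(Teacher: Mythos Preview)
Your proposal is correct and matches the paper's approach exactly: the paper states the result as a ``direct corollary of Lemma \ref{2.2}'' without giving a separate proof, and your argument---apply Lemma \ref{2.2} pointwise and observe that $\sqrt{3\,m(\lambda)}$ is negligible whenever $m(\lambda)$ is---is precisely the intended deduction.
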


%For a 4-dimensional real-valued vector $|\vec{t}|\neq1$ that satisfies $|\ |\vec{t}|-1|\leq \textnormal{negl}(\lambda)$, we also use the notation $U_{(\vec{t})}$ to refer to the unitary $U_{(\vec{t'})}$, where $\vec{t'}$ is set to as same as $\vec{t}$ as possible\footnote{Namely, for the initial vector $\vec{t}'=\vec{0}$, assigning values $t'_i=t_i$ for $i$ from $1$ to $4$ until $\sum t_i^2=1$. For $|\vec{t}|\leq 1$, modifying $t'_4$ to meet $|\vec{t}'|=1$. It can be verified that $|\vec{t}'-\vec{t}|^2=|1-|\vec{t}|^2|$.} along with the direction from $t'_1$ to $t'_4$. We can verify that $|\vec{t}'-\vec{t}|\leq \textnormal{negl}(\lambda)$.

%As for the efficiency of this representation, a $\log^2 (\lambda)$-bit finite precision quaternion representation of SU(2) is enough to stimulate $poly(\lambda)$ number of one-qubit gates up to a negligible error (proved in Lemma \ref{234}).

%In our scheme, we introduce quaternion representation is for efficient evaluating single-qubit gate. The

%In general, quantum algorithms are designed with treating the qubits’ amplitudes as complex number. Unfortunately, there is no algorithm that can transform any given qubit to its real representation. To see this, considering two indistinguishable quantum states $|0\rangle$ and $i|0\rangle$. Translating them to their real representation meaning that we can distinguish them, since their real representation ($|00\rangle$ and $|01\rangle$) are orthogonal then distinguishable.

\textbf{Euler angle representation.} The unitary operator $U(\alpha,\beta,\gamma)$ in Euler angle representation is of the form:
\begin{equation}\label{3.11}
U(\alpha,\beta,\gamma):=\left[               %左括号
  \begin{array}{cc}   % 该矩阵一共3列，每一列都居中放置
1& \\
 & e^{2\mi\pi\alpha}
  \end{array}
\right]\left[             %左括号
  \begin{array}{cc}
\cos(\pi\beta) & -\sin(\pi\beta)\\
\sin(\pi\beta) & \cos(\pi\beta)
  \end{array}
\right]\left[               %左括号
  \begin{array}{cc}   % 该矩阵一共3列，每一列都居中放置
1& \\
 & e^{2\mi\pi\gamma}
  \end{array}
\right],   \quad  \alpha,\ \beta,\ \gamma\in[0,1).       %右括号
\end{equation}

We use $\xlongequal{\tn{i.g.p.f}}$ to denote that the equality holds after ignoring a global phase factor. By (4.11) in \cite{nielsen2000quantum}, for any unitary $U_{\vec{t}}$, there are parameters $\alpha, \beta, \gamma\in[0,1)$ such that $U(\alpha,\beta,\gamma)\xlongequal{\tn{i.g.p.f}}U_{\vt}$; conversely, for any $ \alpha,\ \gamma\in[0,1),\ \beta\in[\frac{1}{2},1]$,
\begin{align}
U(\alpha,\beta,\gamma) \xlongequal{\tn{i.g.p.f}} U(\alpha+\frac{1}{2} \mod 1,1-\beta,\gamma+\frac{1}{2} \mod 1).
\end{align}
Hence, for any $\vec{t}\in\mS^3$, there are parameters $\alpha,\ \gamma\in[0,1),\ \beta\in[0,\frac{1}{2}]$ such that \\ $U(\alpha,\beta,\gamma)\xlongequal{\tn{i.g.p.f}}U_{\vt}$. If $t_1^2+t_3^2\neq0$, then
\begin{align}\label{qo1}
U(\alpha,\beta,\gamma)&= \left[                %左括号
\begin{array}{cc}   % 该矩阵一共3列，每一列都居中放置
\cos(\pi\beta), & -\sin(\pi\beta)e^{2\pi\mi \gamma}\nonumber \\
\sin(\pi\beta)e^{2\pi\mi\alpha},  & \cos(\pi\beta)e^{2\pi\mi(\alpha+\gamma)}
  \end{array}
\right] \xlongequal{\tn{i.g.p.f}}\left[               %左括号
  \begin{array}{cc}   % 该矩阵一共3列，每一列都居中放置
t_1+t_3\mi ,& t_4+t_2\mi\\
-t_4+t_2\mi ,& t_1-t_3\mi
  \end{array}
\right]\\
&\xlongequal{\tn{i.g.p.f}}\frac{t_1+t_3\mi}{\sqrt{t^2_1+t^2_3}}\left[               %左括号
  \begin{array}{cc}   % 该矩阵一共3列，每一列都居中放置
\sqrt{t_1^2+t_3^2},&\ t_4+t_2 \mi \frac{\sqrt{t^2_1+t^2_3}}{t_1+t_3\mi}\\
-t_4+t_2 \mi\frac{\sqrt{t^2_1+t^2_3}}{t_1+t_3\mi} ,&  t_1-t_3\mi\frac{\sqrt{t^2_1+t^2_3}}{t_1+t_3\mi}
  \end{array}
\right],  %右括号
\end{align}
in particular,
\begin{align}\label{w2}
\begin{array}{lcl}
 \cos(\pi\beta)&=&\sqrt{t_1^2+t_3^2}, \vspace{1ex} \\
\sin(\pi\beta)&=&\sqrt{t_2^2+t_4^2},  \vspace{1ex} \\
e^{2\pi \mi\alpha}\sqrt{t_2^2+t_4^2}&=&\frac{-t_4+t_2 \mi}{t_1+t_3\mi}\sqrt{t^2_1+t^2_3},  \vspace{1ex} \\
-e^{2\pi \mi\gamma}\sqrt{t_2^2+t_4^2}&=&\frac{t_4+t_2 \mi}{t_1+t_3\mi}\sqrt{t^2_1+t^2_3}.
\end{array}
\end{align}

\subsection{Clifford gates and Pauli one-time pad encryption}
The following are formal definitions \cite{ambainis2000private,broadbent2015quantum} of some terminology mentioned in Section 1.
\begin{itemize}
\item[]
The \emph{Pauli group} on $n$-qubit system is $P_{n}=\{V_1 \otimes...\otimes V_n | V_j\in\{X,\ Z,\ Y,\ \I_2\},1\leq $\\$j\leq n \}$. The \emph{Clifford group} is the group of unitaries preserving the Pauli group:
$$C_{n}=\{ V\in U_{2^{n}} |V P_n V{^\dagger}=P_n \}.$$
A \emph{Clifford gate} refers to any element in the Clifford group. A generating set of the Clifford group consists of the following gates:
 \begin{equation}\label{z4}
X, \quad
Z,  \quad
P=\left[               %左括号
  \begin{array}{cc}
1 & \\
 & \mi
  \end{array}
\right], \quad
H=\frac{1}{\sqrt{2}}\left[               %左括号
  \begin{array}{cc}
1 &1 \\
1 & -1
  \end{array}
\right], \quad
\text{CNOT}=\left[               %左括号
  \begin{array}{cccc}
1 & & &\\
 & 1& &\\
 & & &1\\
 & &1 &
  \end{array}
\right].
 \end{equation}
Adding any \emph{non-Cliffod gate}, such as $T=\begin{bmatrix}1 & \\ & e^{\mi\frac{\pi}{4}}\end{bmatrix}$, to (\ref{z4}), leads to a universal set of quantum gates.
\end{itemize}

The \emph{Pauli one-time pad encryption}, traced back to \cite{ambainis2000private}, encrypts a multi-qubit state qubitwise. The scheme for encrypting 1-qubit message $\bk{\psi}$ is as follows:
\begin{itemize}
\setlength{\itemsep}{5pt}
\item \textbf{Pauli one-time pad encryption}
\item[$\bullet$] Keygen(). Sample two classical bits $a,b \leftarrow \{0,1\}$, and output $(a,b)$.
\item[$\bullet$] Enc($(a,b)$, $\bk{\psi}$). Apply the Pauli operator $X^{a}Z^{b}$ to a 1-qubit state $\bk{\psi}$, and output the resulting state $\bk{\widetilde{\psi}}$.
\item[$\bullet$] Dec($(a,b)$,$\bk{\widetilde{\psi}}$). Apply $X^{a}Z^{b}$ to $\bk{\widetilde{\psi}}$.
\end{itemize}
Since $XZ=-ZX$, the decrypted ciphertext is the input plaintext up to a global phase factor $(-1)^{ab}$. In addition, the Pauli one-time pad encryption scheme guarantees the information-theoretic security, since for any 1-qubit state $\bk{\psi}$, it holds that
\begin{align}
\frac{1}{4}\sum_{a,b\in \{0,1\} }X^{a}Z^{b}|\psi\rangle\langle\psi|Z^{b}X^{a}=\frac{\I_2}{2}.
\end{align}

\subsection{Pure and Leveled Fully Homomorphic Encryption}
The following definitions come from \cite{brakerski2018quantum} and \cite{mahadev2018classical}.
A homomorphic (public-key) encryption scheme HE = (HE.Keygen, HE.Enc, HE.Dec, HE.Eval) for single-bit plaintexts is a quadruple of PPT algorithms as following:

\begin{itemize}
\item \textbf{Key Generation.} The algorithm $(pk,evk,sk)\leftarrow$ HE.Keygen$(1^{\lambda})$ on input the security parameter $\lambda$ outputs a public encryption key $pk$, a public evaluation key $evk$ and a secret decryption key $sk$.
\item \textbf{Encryption.} The algorithm $c\leftarrow$ HE.Enc$_{pk}(\mu)$ takes as input the public key $pk$ and a single bit message $\mu\in\{0,1\}$, and outputs a ciphertext $c$. The notation HE.Enc$_{pk}(\mu;r)$ will be used to represent the encryption of message $\mu$ using random vector $r$.
\item \textbf{Decryption.} The algorithm $\mu^*\leftarrow$ HE.Dec$_{sk}(c)$ takes as input the secret key $sk$ and a ciphertext $c$, and outputs a message $\mu^*\in\{0,1\}$.
\item \textbf{Homomorphic Evaluation.} The algorithm $c_{f}\leftarrow$ HE.Eval$_{evk}(f,c_1,\ldots,c_l)$ on input the evaluation key $evk$, a function $f:\{0,1\}^l\rightarrow\{0,1\}$ and $l$ ciphertexts $c_1,\ldots,c_l$, outputs a ciphertext $c_f$ satisfying:
\begin{equation}
\mathrm{HE.Dec}_{sk}(c_f) = f(\mathrm{HE.Dec}_{sk}(c_1),\ldots,\mathrm{HE.Dec}_{sk}(c_l))
\end{equation}
with overwhelming probability.
\end{itemize}

%\remark{Some homomorphic encryption schemes do not require evaluation keys, cf. BGV \cite{brakerski2014leveled}, GSW \cite{gentry2013homomorphic}.}

\begin{defn}(Classical pure FHE and leveled FHE)\\
A homomorphic encryption scheme is compact if its decryption circuit is independent of the evaluated function. A compact scheme is (pure) fully homomorphic if it can evaluate any efficiently computable boolean function. A compact scheme is leveled fully homomorphic if it takes $1^L$ as additional input in key generation, where parameter $L$ is polynomial in the security parameter $\lambda$,  and can evaluate all Boolean circuits of depth $\leq L$.
\end{defn}

\textbf{Trapdoor for LWE problem.} Learning with errors (LWE) problem \cite{regev2009lattices} is the security basis of most FHE schemes. Let $m,n,q$ be integers, and let $\chi$ be a distribution on $\Z_{q}$. The search version of LWE problem is to find $\s\in\Z_q^n$ given the LWE samples ($A$, $A\s+\e$ mod $q$), where $A\in\Z_q^{m \times n}$ is sampled uniformly at random, and $\e$ is sampled randomly from the distribution $\chi^m$. Under a reasonable assumption on $\chi$ (namely, $\chi(0) > 1/q + 1/\tn{poly}(n))$, an algorithm for the search problem with running time $2^{O(n)}$ is known \cite{blum2003noise}, while no polynomial time algorithm is known.

%Let $\lambda$ be the security parameter.  is to distinguish distribution between ($A$, $A\s+\e$) and $(A, u)$, where $A\in\Z_q^{n \times m}$, $s\in\Z_q^n$, $u\in\Z_q^m$ are sampled uniformly at random, $\e$ drawn at random from some distribution $\chi(\lambda)^m$. A large number of current FHE scheme bases on the hardness of LWE$_{n,m,q,\chi}$, where

Although it is hard to solve LWE problem in the general case \cite{brakerski2013classical,peikert2009public,peikert2017pseudorandomness,regev2009lattices}, it is possible to generate a matrix $A\in Z_{q}^{m\times n}$ and a relevant matrix, called the \emph{trapdoor} of $A$ (cf. Definition 5.2 in \cite{micciancio2012trapdoors}), that allow to efficiently recover $\s$ from the LWE samples $(A, A\s + \e)$:

%LWE samples $(A, As + e)$, there is a special matrix called trapdoor (cf. Definition 5.2 in \cite{micciancio2012trapdoors}) that allows to recover $s,e$ from the samples.

\begin{lem}\label{4200} \rm{\textbf{(Theorem 5.1 in \cite{micciancio2012trapdoors}; Theorem 3.5 in \cite{mahadev2018classical})}} Let $n, m \geq1$, $q \geq 2$ such that $m = \Omega(n\log q)$. There is an
efficient randomized algorithm GenTrap$(1^n,1^m,q)$ that returns a matrix $A\in \mathbb{Z}^{m\times n}_q$ and a trapdoor $t_A\in\Z^{(m-n\log q)\times n\log q}$, such that the distribution of $A$ is negligibly (in $n$) close to the uniform distribution on $\Z^{m\times n}_q$. Moreover, there is an efficient algorithm ``Invert'' that, on input $A$, $t_A$ and $A\vec{s}+\vec{e}$, where $\vec{s}\in\Z^{n}_q$ is arbitrary, $\|\vec{e}\|_{2}\leq q/(C_T \sqrt{n \log q})$, and $C_T$ is a universal constant, returns $\vec{s}$ and $\vec{e}$ with overwhelming probability.
\end{lem}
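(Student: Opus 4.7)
The plan is to prove this via the gadget-based trapdoor construction of Micciancio-Peikert. Set $\bar m = m - n\log q$, and introduce the gadget matrix $G \in \Z_q^{(n\log q)\times n}$ defined as $G = I_n \otimes \vec g^{T}$ with $\vec g = (1, 2, 4, \ldots, 2^{\log q - 1})$. This matrix admits an elementary efficient inverter: given any $G\vec s + \vec e'$ with $\|\vec e'\|_\infty < q/4$, one recovers each coordinate of $\vec s \in \Z_q^n$ by reading off its error-corrected binary expansion from consecutive coordinates of the sample.

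The algorithm $\mathrm{GenTrap}$ is then defined as follows: sample $\bar A \in \Z_q^{\bar m \times n}$ uniformly at random, sample $R \in \Z^{\bar m \times n\log q}$ from a discrete Gaussian of small width, and output
\[
A = \begin{bmatrix} \bar A \\ G - R^{T} \bar A \end{bmatrix}
\]
together with the trapdoor $t_A = R$. By construction $[R^{T} \mid I_{n\log q}]\, A = G$. To show $A$ is negligibly close to uniform on $\Z_q^{m\times n}$: since $\bar m = \Omega(n\log q)$ and $\bar A$ is uniform, the leftover hash lemma (or its Gaussian-smoothing variant) implies that $R^{T} \bar A$ is statistically close to uniform on $\Z_q^{(n\log q)\times n}$, and hence so is the lower block $G - R^{T} \bar A$.

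For inversion, given $A\vec s + \vec e$ with $\vec e = (\vec e_1, \vec e_2)$ split according to the block structure of $A$, compute
\[
[R^{T} \mid I_{n\log q}]\,(A\vec s + \vec e) \;=\; G\vec s + R^{T} \vec e_1 + \vec e_2.
\]
By Gaussian concentration $\nrm{R^{T}} = O(\sqrt{\bar m})$ with overwhelming probability, so the residual error has $L^2$-norm at most $O(\sqrt{n\log q})\,\nrm{\vec e_1} + \nrm{\vec e_2}$. Under the hypothesis $\nrm{\vec e}\leq q/(C_T\sqrt{n\log q})$, calibrating the universal constant $C_T$ keeps this quantity below $q/4$ in $\infty$-norm, so the gadget inverter recovers $\vec s$; subtracting $A\vec s$ from the received sample then recovers $\vec e$.

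The main obstacle is the joint parameter tuning. One must simultaneously ensure (i) $\bar m$ is large enough that the leftover-hash/Gaussian-smoothing argument yields statistical distance $\negl{n}$ from uniform, (ii) the Gaussian width of $R$ is small enough that $R^{T}\vec e_1$ stays inside the gadget inverter's tolerance window, and (iii) the constant $C_T$ is chosen so that the overall failure probability remains negligible even after the tail event on $\nrm{R^{T}}$ is accounted for. Making these three constraints mutually compatible rests on sharp concentration bounds for products of discrete-Gaussian matrices with small-norm error vectors, which is the technical heart of the Micciancio-Peikert analysis.
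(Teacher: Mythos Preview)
The paper does not prove this lemma at all: it is stated in the preliminaries as a quotation of Theorem~5.1 of Micciancio--Peikert and Theorem~3.5 of Mahadev, and is then used as a black box. There is no ``paper's own proof'' to compare against.

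Your sketch is a faithful outline of the Micciancio--Peikert gadget-trapdoor construction and is essentially correct: the block form $A = \begin{bmatrix}\bar A \\ G - R^{T}\bar A\end{bmatrix}$ with $[R^{T}\mid I]A = G$, the leftover-hash argument for near-uniformity of the lower block, and the inversion step via $[R^{T}\mid I](A\vec s + \vec e) = G\vec s + (\text{small error})$ followed by the elementary gadget decoder are exactly the ingredients of the cited theorem. Your closing paragraph honestly flags where the real work sits (simultaneous calibration of $\bar m$, the Gaussian width of $R$, and $C_T$), and that is indeed the content of the MP12 analysis rather than a gap in your plan. In short, you have supplied more than the paper does; for the purposes of this paper the lemma is simply imported, and your proposal would be appropriate only if one wanted to make the manuscript self-contained.
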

GenTrap($\cdot$) can be used to generate a public key (matrix $A$) with a trapdoor $(t_{A})$ for LWE-based FHE schemes.

In the quantum setting, a quantum homomorphic encryption (QHE) is a scheme with syntax similar to the above classical setting, and is a sequence of algorithms (QHE.Keygen, QHE.Enc,QHE.Dec, QHE.Eval). A hybrid framework of QHE with classical key generation is given below.
\begin{itemize}
\item \textbf{QHE.Keygen} The algorithm $(pk,evk,sk)\leftarrow$ HE.Keygen$(1^{\lambda})$ takes as input the security parameter $\lambda$, and outputs a public encryption key $pk$, a public evaluation key $evk$, and a secret key $sk$.
\item \textbf{QHE.Enc}  The algorithm $\bk{c}\leftarrow$ QHE.Enc$_{pk}(\bk{m})$ takes the public key $pk$ and a single-qubit state $\bk{m}$, and outputs a quantum ciphertext $\bk{c}$.
\item \textbf{QHE.Dec}  The algorithm $\bk{m^*}\leftarrow$ QHE.Dec$_{sk}(\bk{c})$ takes the secret key $sk$ and a quantum ciphertext $\bk{c}$, and outputs a single-qubit state $\bk{m^*}$ as the plaintext.
\item \textbf{QHE.Eval}  The algorithm $\bk{c'_1},\ldots,\bk{c'_{l'}}\leftarrow$ QHE.Eval$(evk,C,\bk{c_1},\ldots,\bk{c_l})$ takes the evaluation key $evk$, a classical description of a quantum circuit $C$ with $l$ input qubits and $l'$ output qubits, and a sequence of quantum ciphertexts $\bk{c_1},\ldots,\bk{c_{l}}$. Its output is a sequence of $l'$ quantum ciphertexts $\bk{c'_1},\ldots,\bk{c'_{l'}}$.
\end{itemize}

\begin{defn}(Quantum pure FHE and leveled FHE)\\
Given a scheme $\tn{QHE=(QHE.Key,QHE.Enc,QHE.Eval,QHE.Dec)}$ and the security parameter $\lambda$, with the keys $(pk,evk,sk)=$ \tn{HE.Keygen}$(1^{\lambda})$, the scheme is called quantum fully homomorphic, if for any BQP circuit $C$ and any $l$ single-qubit states $\bk{m_1},...,\bk{m_l}$ where $l$ is the number of input qubits of $C$, the state $C(\bk{m_1},...,\bk{m_l})$ is within negligible trace distance from the state \tn{QHE.Dec}$_{sk}\big{(} \tn{QHE.Eval}(evk,C,\bk{c_1},...,$\\$\bk{c_l}) \big{)}$, where $\bk{c_i}=$ \tn{\tn{QHE.Enc}}$_{pk}(\bk{m_i})$. The scheme is leveled quantum fully homomorphic if it takes $1^L$ as additional input in key generation, and can evaluate all depth-$L$\footnote{The depth of a quantum circuit refers to the number of layers of the circuit, where each layer consists of quantum gates that can be executed in parallel.} quantum circuits.
\end{defn}

The difference between leveled QFHE and pure QFHE is whether there is an a-priori bound on the depth of the evaluated circuit. A circular security assumption can help convert a leveled classical FHE into a pure classical FHE \cite{gentry2009fully1,gentry2009fully}. In the existing QFHE schemes \cite{brakerski2018quantum,mahadev2018classical}, the security assumption is required to make the classical FHE that encrypts Pauli keys a pure FHE. In \cite{brakerski2018quantum,mahadev2018classical}, there is no quantum analogy of `bootstraping' that is able to reduce the noise of a quantum ciphertext.

\subsection{Quantum Capable Classical Homomorphic Encryption}\label{sec2.4}

%\subsubsection{Polynomially bounded function.} A function $n: \N \to \R_+$ is \emph{polynomially bounded} if there exists a polynomial $p$ such that $n(\lambda)\leq p(\lambda)$ for all $\lambda \in \N$.

In \cite{mahadev2018classical}, Mahadev proposed a FHE scheme called quantum capable classical homomorphic encryption (Definition 4.2 in \cite{mahadev2018classical}), which is an LWE-based FHE scheme of GSW-style with a trapdoor to the public key. The scheme consists of two sub-schemes: Mahadev's HE (MHE) and Alternative HE (AltMHE).

\vspace{0.5cm}
%allowing recovery of $(\mu_0, r_0)$ from a ciphertext Enc$(\mu_0; r_0)$ (Enc$(\mu_0; r_0)$ denotes the encryption of a bit $\mu_0$ with randomness variables $r_0$ ).

 % for achieving the encrypted CNOT operation .

%It consists of two parts, the scheme HE and the scheme AltHE.

%Shortly later, Brakerski propose a more efficient classical FHE scheme for realizing encrypted CNOT operation. The scheme goes beyond the definition of QCHE, and thus without using the term of `QCHE'.

%goes beyond the definition of QCHE, and thus without using the term of `QCHE'.

%For convenience, we use QCHE to refer to both two particular classical FHE schemes presented in \cite{brakerski2018quantum} and \cite{mahadev2018classical}.

%Brakerski's scheme with polynomial modulus are more efficient than Mahadev's scheme with exponential modulus.

%In this subsection, we introduce Mahadev's QCHE scheme. It consists of two parts, the scheme HE and the scheme AltHE.

\textbf{Notation 1.}\emph{The parameters in MHE scheme are the following:}
\begin{itemize}

\item[1.]The security parameter: $\lambda$. All other parameters are functions in $\lambda$.
\item[2.]The modulus: $q$, which is a power of $2$. Also, $q$ satisfies item 7 below.
\item[3.]The size parameters: $n$=\emph{$\tn{poly}(\lambda)$}, $m=\Omega(n\log q)$, and $N=(m+1)\log q$. The gadget matrix is \\
$G=\mathbb{I}_{m+1}\bigotimes(1,2,2^2,..,q/2)\in\mathbb{Z}^{(m+1)\times N}$.
\item[4.]$L^{\infty}$-norm of the initial encryption noise: it is bounded by the parameter $\beta_{init}\geq2\sqrt{n}$.

\vspace{0.3cm}
There are two more parameters indicating the evaluation capability of the HE scheme:
\item[5.](Classical capability) the maximal evaluation depth $\eta_{c}$ before bootstrapping. $\eta_{c}=\Theta(\log(\lambda))$ is required to be larger than the depth of the decryption circuit, so that before bootstrapping, the accumulated noise of the ciphertext can be upper bounded by $\beta_{init}(N+1)^{\eta_{c}}$ (cf. Theorem 5.1 in \cite{mahadev2018classical}).
\item[6.](Quantum capability) the so-called CNOT-precision $\eta$ that satisfies $\eta=\Theta(\log(\lambda))$, so that the encrypted CNOT operation of the HE scheme can yield a resulting state within \~{O}($\frac{1}{(N+1)^{\eta}}$) trace distance from the correct one with all but \~{O}($\frac{1}{(N+1)^{\eta}}$) probability.\footnote{More details can be found in Lemma 3.3 of \cite{mahadev2018classical}.}
\item[7.]Let $\beta_f=\beta_{init}(N+1)^{\eta+\eta_{c}}$. It is required that $q>4(m+1)\beta_{f}$, which implies that $q$ is superpolynomial in $\lambda$.
\end{itemize}

%\footnote{This requirement ensures that Mahadev's scheme can evaluate any polynomial-depth quantum circuit to obtain the ideal resulting state up to a trace negligible distance.}

%\footnote{We must to remainder the readers the fact that each quantum CNOT operation has 2 fan-out. This implies $\lambda(\lambda+1)/2$  CNOT gates can compose a $\lambda$-depth circuits, where the degree of errors can grow exponentially as $2^{k}$ [see Figure]. However, it seems too strict to measure a random error using its upper-bound degree, we adopt the author's argument that the approximation error caused by a $poly(\lambda)$ number of $\frac{1}{SP(\lambda)}$-approximation CNOT is negligible.}

Mahadev's (public-key) homomorphic encryption scheme MHE=(MHE.Keygen; \\ MHE.Enc; MHE.Dec; MHE.Convert; MHE.Eval) is a PPT algorithm as follows \cite{mahadev2018classical}:

\begin{itemize}
\setlength{\itemsep}{5pt}
\item \textbf{MHE Scheme (Scheme 5.2 in \cite{mahadev2018classical})}
\item[$\bullet$] MHE.KeyGen: Choose $\e_{sk}\in\{0,1\}^{m}$ uniformly at random. Use \tn{GenTrap}$(1^n,1^m,q)$ in Lemma \ref{4200} to generate a matrix $\*A$ $\in \mathbb{Z}^{m\times n}$, together with the trapdoor $t_{\*A}$. The secret key is $sk=(-\e_{sk},1) \in \mathbb{Z}^{m+1}_q$. The public key $\*A' \in \Z^{(m+1)\times n}_q$ is the matrix composed of $\*A$ (the first $m$ rows) and $\e^{T}_{sk}\*A \bmod q$ (the last row).

\item[$\bullet$] MHE.Enc$_{pk}$($\mu$): To encrypt a bit $\mu\in\{0,1\}$, choose $\*S\in \mathbb{Z}^{n\times N}_q$ uniformly at random and create $E \in \mathbb{Z}^{(m+1)\times N}_q$ by sampling each entry of it from $D_{\mathbb{Z}_q,\beta_{init}}$. Output $\*A'\*S+\*E+\mu \*G\in \mathbb{Z}^{(m+1)\times N}_q$.

\item[$\bullet$] MHE.Eval($C_0,C_1$): To apply the NAND gate, on input $C_0, C_1$, output $G-C_0\cdot G^{-1}(C_1)$.

\item[$\bullet$] MHE.Dec$_{sk}$($C$): Let $c$ be column $N$ of $C\in\Z_{q}^{(m+1)\times N}$, compute $b'=sk^{T}c\in\mathbb{Z}_{q}$. Output 0 if $b'$ is closer to 0 than to $\frac{q}{2} \bmod q$, otherwise output 1.

\item[$\bullet$] MHE.Convert($C$): Extract column $N$ of $C$.
\end{itemize}

\begin{itemize}
\item[]
\textbf{XOR.} The XOR operation on two bits $a,b\in\{0,1\}$ is defined by $a\oplus b:=a+b \mod 2$.

%\textbf{Homomorphic XOR operation.} The \emph{homomorphic XOR}, denoted by $\oplus$, is to evaluate the XOR operation homomorphically on the ciphertexts. Similar terms include \emph{homomorphic multiplication}, \emph{homomorphic addition} and \emph{homomorphic computation}.

\end{itemize}

Although the MHE scheme preserves the ciphertext form during homomorphic evaluation, when evaluating XOR operation, the noise in the output ciphertext is not simply the addition of the two input noise terms.
To overcome this drawback, Mahadev defined an extra operation MHE.Convert, which is capable of converting a ciphertext of the MHE scheme to a ciphertext of the following AltMHE scheme:
\begin{itemize}
\setlength{\itemsep}{7pt}
\item \textbf{AltMHE Scheme (Scheme 5.1 in \cite{mahadev2018classical})}
\item[$\bullet$] AltMHE.KeyGen: This procedure is the same as that of MHE.KeyGen.
\item[$\bullet$] AltMHE.Enc$_{pk}$($\mu$): To encrypt a bit $\mu\in\{0,1\}$, choose $\s\in \mathbb{Z}^{n}_q$ uniformly at random and create $\e\in \mathbb{Z}^{m+1}_q$ by sampling each entry from $D_{\mathbb{Z}_q,\beta_{init}}$. Output $A'\s+\e+(0,...,0,\mu\frac{q}{2})\in \mathbb{Z}^{m+1}_q$.
\item[$\bullet$] AltMHE.Dec$_{sk}$($c$): To decrypt $c$, compute $b'=sk^{T}c\in\mathbb{Z}_{q}$. Output $0$ if $b'$ is closer to $0$ than to $\frac{q}{2}$ mod $q$, otherwise output 1.
\end{itemize}

There are several useful facts:
\begin{itemize}
\setlength{\itemsep}{2pt}
\item[1.] AltMHE scheme has a natural evaluation of the XOR operation: adding two ciphertexts encrypting $\mu_0$ and $\mu_1$ respectively results in a ciphertext encrypting $\mu_0\oplus \mu_1$.
\item[2.] For a ciphertext AltMHE.Enc$(u;(\s,\e))$ with error $\e$ such that $\|\e \|_{2} < \frac{q}{4\sqrt{m+1}}$, the decryption procedure outputs $u\in\{0,1\}$ correctly (since ${sk}^T A' = \*0$).
\item[3.] The trapdoor $t_{A}$ can be used to recover the random vectors $\s,\e$ from ciphertext AltMHE.Enc$(u;(\s,\e))\in\Z^{m+1}_q$, and thus the plaintext $u$ can also be recovered. To see this, note that the first $m$ entries of the ciphertext can be written as $A\s + \e'$, where $\e'\in \Z_q^{m}$. Therefore, the inversion algorithm ``\tn{Invert}'' in Lemma \ref{4200} outputs $\s,\e$ on input $A\s + \e'$ and $t_{A}$, as long as $\| \e' \|_{2} < \frac{q}{C_T\sqrt{n\log q}}$ for the universal constant $C_T$ in Lemma \ref{4200}.
\item[4.] The trapdoor $t_{A}$ allows recovery of the plaintext $u$ from ciphertext MHE.Enc$(u)$ by using MHE.Convert.
\end{itemize}

Recall that a fresh \text{AltMHE} ciphertext is of the form \text{AltMHE.Enc}$(u;r)$ where the plaintext $u \in \{0,1\}$ and random vector $r=(\s,\e)\xleftarrow{} (\mathbb{U}_{\Z^{n}_{q}},D_{\Z^{m+1}_{q},\beta_{init}})$.
By item 5 of Notation 1, throughout the homomorphic computation, any MHE ciphertext is always of the form MHE.Enc$(u;(\s,\e))$, where $||\e||_{\infty}\leq \beta_{init}(N+1)^{\eta_c}$ (cf. Section 5.2.1 of \cite{mahadev2018classical}).
\begin{lem}[Theorem 5.2 in \cite{mahadev2018classical}]\label{41}
With the parameters of Notation 1, throughout the homomorphic computation, any \tn{MHE} ciphertext can be converted to a ciphertext of the form $c'=\emph{AltMHE.Enc}(u';(\s',\e'))$ by using the function $\emph{MHE.Convert}$. Then $||\e'||_{\infty}\leq \beta_{init}(N+1)^{\eta_c}$, $||\e'||_{2}\leq \beta_{init}(N+1)^{\eta_c}\sqrt{m+1}$, and the Hellinger distance between the following two distributions:
\begin{align}
&\{\textnormal{AltMHE.Enc}(\mu;r)|(\mu,r)\xleftarrow{}  (\mathbb{U}_{\Z^{n}_{q}},D_{\Z^{m+1}_{q},\beta_{f}})\}   \tn{    and    }\nonumber\\
& \{\textnormal{AltMHE.Enc}(\mu;r) \oplus c' |(\mu,r) \xleftarrow{} (\mathbb{U}_{\Z^{n}_{q}},D_{\Z^{m+1}_{q},\beta_{f}})\}
\end{align}
is equal to the Hellinger distance between the following two distributions:
\begin{equation}\label{eq:gaussiandist}
\{\e|\e\xleftarrow{}  D_{\Z_q^{m + 1},\beta_{f}}\} \quad  \tn{and} \quad \{\e + \e'|\e \xleftarrow{} D_{\Z_q^{m + 1},\beta_{f}}\},
\end{equation}
which is negligible in $\lambda$.
\end{lem}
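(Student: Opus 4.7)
The plan is to break the claim into three parts: (i) verify that \tn{MHE.Convert} applied to a homomorphically evaluated MHE ciphertext yields an AltMHE-form ciphertext meeting the stated noise bounds; (ii) show that the equality of Hellinger distances reduces to a purely statistical statement about the noise vector; (iii) bound the Hellinger distance between a discrete Gaussian and its shift by $\e'$.

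For (i), I would trace the ciphertext form through the evaluation. A fresh MHE ciphertext has the shape $A'S + E + \mu G$ with $\|E\|_\infty \leq \beta_{init}$, and by item 5 of Notation 1 the depth-$\eta_c$ NAND-evaluation preserves this shape while growing the $L^\infty$-noise by at most a factor $(N+1)^{\eta_c}$. Since column $N$ of $G$ equals $(0,\ldots,0,q/2)^T$, extracting column $N$ (which is what \tn{MHE.Convert} does) produces a vector of the form $A'\s' + \e' + (0,\ldots,0,u'\tfrac{q}{2})$, namely an AltMHE ciphertext. The $\|\e'\|_\infty$-bound is inherited from the evaluated MHE noise, and $\|\e'\|_2 \leq \sqrt{m+1}\,\|\e'\|_\infty$ yields the $L^2$-bound.

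For (ii), I would use the additive homomorphism of AltMHE: adding a fresh ciphertext $\tn{AltMHE.Enc}(\mu;(\s,\e))$ to the fixed $c' = \tn{AltMHE.Enc}(u';(\s',\e'))$ yields $\tn{AltMHE.Enc}(\mu \oplus u';(\s+\s',\e+\e'))$. When $\mu$ is uniform over $\{0,1\}$ and $\s$ is uniform over $\Z_q^n$, the random variables $\mu \oplus u'$ and $\s+\s'$ are again uniform, so these coordinates contribute identically to both distributions in the statement. Consequently the induced distributions on ciphertexts are in bijection, coordinate-by-coordinate, with the distributions obtained from sampling the noise term alone, giving the claimed reduction of Hellinger distance to that between $D_{\Z_q^{m+1},\beta_f}$ and its $\e'$-shift.

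For (iii), the hard part, I would invoke a standard estimate for discrete Gaussians stating that for $B$ larger than the smoothing parameter of $\Z_q^{m+1}$, the Hellinger distance between $D_{\Z_q^{m+1},B}$ and its shift by a vector $\e'$ is of order $\|\e'\|_2/B$ (up to lower-order terms coming from the lattice smoothing). Plugging in $\|\e'\|_2 \leq \beta_{init}(N+1)^{\eta_c}\sqrt{m+1}$ and $B = \beta_f = \beta_{init}(N+1)^{\eta+\eta_c}$ gives a ratio $\sqrt{m+1}/(N+1)^{\eta}$ which, since $\eta = \Theta(\log \lambda)$ and $m,N$ are polynomial in $\lambda$, is negligible in $\lambda$. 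The main obstacle will be cleanly stating and applying the shift-invariance bound for the truncated discrete Gaussian of (\ref{s1.1}); I would either cite an off-the-shelf smoothing-parameter lemma from the LWE literature or directly estimate $\sum_x (\sqrt{D_{\Z_q^{m+1},\beta_f}(x)} - \sqrt{D_{\Z_q^{m+1},\beta_f}(x+\e')})^2$ by expanding the exponent $-\pi\|x\|^2/\beta_f^2$ around $x$ and collecting the $O(\|\e'\|_2^2/\beta_f^2)$ contribution, checking that the truncation to $\{\|x\|_\infty \leq \beta_f\}$ only loses a negligible additive term.
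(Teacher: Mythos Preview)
The paper does not give its own proof of this lemma; it simply cites it as Theorem~5.2 in \cite{mahadev2018classical}. So there is no in-paper argument to compare against, only the original Mahadev proof.

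Your three-part plan is sound and matches how such a statement is typically established. Part~(i) is correct: column~$N$ of $G$ is $(0,\ldots,0,q/2)^T$, so extracting it from $A'S+E+\mu G$ gives an AltMHE ciphertext, and the noise bound is exactly the depth-$\eta_c$ amplification guaranteed by item~5 of Notation~1. Part~(ii) is the right reduction; the one point you should make explicit is that the encryption map $(\mu,\s,\e)\mapsto A'\s+\e+(0,\ldots,0,\mu q/2)$ is injective on the support (this follows from $\|\e\|_\infty\le\beta_f<q/(4(m+1))$ and $A'$ having full column rank with overwhelming probability), since Hellinger distance is preserved only under injective pushforwards. Part~(iii) is the standard step; your plan to expand the exponent and control the truncation is exactly what is done in the literature, and in fact the paper carries out an essentially equivalent pointwise estimate $\big|\rho_1(\vec\omega)/\rho_0(\vec\omega)-1\big|=\tn{negl}(\lambda)$ in the Appendix proof of Lemma~\ref{61} (see (\ref{nw6})--(\ref{7.1})), which you could cite or adapt directly rather than invoking an external smoothing lemma.
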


\begin{lem}\label{61}
Let parameter $\beta_{f}=\beta_{init}(N+1)^{\eta_c+\eta}$ be as in Notation 1, and let $\e'\in\Z^{m+1}_{q}$ satisfy $||\e'||_{\infty} \leq \beta_{init}(N+1)^{\eta_c}$. Let $\rho_{0}$ be the density function of the truncated discrete Gaussian distribution $D_{\mathbb{Z}^{m+1}_q,\beta_{f}}$, and let $\rho_{1}$ be the density function of the shifted distribution $\e'+D_{\mathbb{Z}^{m+1}_q,\beta_{f}}$. Let $\widetilde{D}_{\mathbb{Z}^{m+1}_q}$ be the distribution of the random vector sampled from the distribution $D_{\mathbb{Z}^{m+1}_q,\beta_{f}}$ and the distribution $\e'+D_{\mathbb{Z}^{m+1}_q,\beta_{f}}$ with probability $p$ and $1-p$, respectively. For any $0\leq p\leq1$, any one-qubit state $|c\rangle=c_0|0\rangle+c_1|1\rangle$, any vector $\vec{\omega}\in \Z^{m+1}_q \leftarrow \widetilde{D}_{\mathbb{Z}^{m+1}_q}$, the trace distance between state $|c\rangle$ and the state
\begin{align}\label{nw1}
|c'\rangle=\frac{1}{\sqrt{\rho_{0}(\vec{\omega})|c_0|^2+\rho_{1}(\vec{\omega})|c_1|^2}} (\sqrt{\rho_{0}(\vec{\omega})}c_0|0\rangle+\sqrt{\rho_{1}(\vec{\omega})}c_1|1\rangle)
\end{align}
is $\lambda$-negligible with overwhelming probability.
\end{lem}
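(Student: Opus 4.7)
The strategy is to reduce the trace distance $\||c\rangle-|c'\rangle\|_{tr}$ to a function of the ratio $r(\vec{\omega}):=\sqrt{\rho_1(\vec{\omega})/\rho_0(\vec{\omega})}$ and then to use Lemma~\ref{41} to argue that $r(\vec{\omega})$ lies within negligible distance of $1$ with overwhelming probability under $\widetilde{D}_{\mathbb{Z}^{m+1}_q}$.

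First I would rewrite (\ref{nw1}) in a clean form in $r(\vec{\omega})$: whenever $\rho_0(\vec{\omega})>0$, dividing numerator and denominator of (\ref{nw1}) by $\sqrt{\rho_0(\vec{\omega})}$ expresses $|c'\rangle$ as $(c_0|0\rangle+r(\vec{\omega})c_1|1\rangle)/\sqrt{|c_0|^2+r(\vec{\omega})^2|c_1|^2}$, so that a short algebraic computation gives
\begin{equation*}
1-|\langle c|c'\rangle|^2 \;=\; \frac{|c_0|^2|c_1|^2\,(r(\vec{\omega})-1)^2}{|c_0|^2+r(\vec{\omega})^2|c_1|^2} \;\leq\; (r(\vec{\omega})-1)^2.
\end{equation*}
By (\ref{42}) this yields a trace-distance bound of order $|r(\vec{\omega})-1|$. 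A symmetric rescaling by $\sqrt{\rho_1(\vec{\omega})}$ handles the case $\rho_1(\vec{\omega})>0$ but $\rho_0(\vec{\omega})=0$ analogously, producing a bound in $|r(\vec{\omega})^{-1}-1|$.

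Next, let $H_\lambda$ denote the Hellinger distance between $\rho_0$ and $\rho_1$, which is $\tn{negl}(\lambda)$ by Lemma~\ref{41}. From the elementary identity
\begin{equation*}
2H_\lambda^2 \;=\; \sum_{\vec{\omega}}\bigl(\sqrt{\rho_0(\vec{\omega})}-\sqrt{\rho_1(\vec{\omega})}\bigr)^2 \;\geq\; \sum_{\vec{\omega}:\rho_0(\vec{\omega})>0}\rho_0(\vec{\omega})\bigl(1-r(\vec{\omega})\bigr)^2,
\end{equation*}
Markov's inequality with threshold $H_\lambda^{1/2}$ gives $\Pr_{\vec{\omega}\sim\rho_0}[\,|r(\vec{\omega})-1|>H_\lambda^{1/2}]\leq 2H_\lambda$, and the same identity also implies $\Pr_{\vec{\omega}\sim\rho_0}[\rho_1(\vec{\omega})=0]\leq 2H_\lambda^2$. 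The symmetric argument with the roles of $\rho_0,\rho_1$ swapped produces analogous bounds under $\rho_1$. Since $\widetilde{D}$ is the $p$-mixture of $\rho_0$ and $\rho_1$, a union bound yields that with probability at least $1-O(H_\lambda)=1-\tn{negl}(\lambda)$ over $\vec{\omega}\sim\widetilde{D}_{\mathbb{Z}^{m+1}_q}$, both $\rho_0(\vec{\omega})$ and $\rho_1(\vec{\omega})$ are positive and at least one of $|r(\vec{\omega})-1|,|r(\vec{\omega})^{-1}-1|$ is bounded by $H_\lambda^{1/2}$. Either condition entails $|r(\vec{\omega})-1|=O(H_\lambda^{1/2})=\tn{negl}(\lambda)$, and substituting into the first-step trace-distance bound finishes the proof.

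The main subtlety to handle carefully is the boundary region where exactly one of $\rho_0(\vec{\omega}),\rho_1(\vec{\omega})$ vanishes: formula (\ref{nw1}) still defines $|c'\rangle$ there, but $|c'\rangle$ collapses onto a single basis state and has no a~priori reason to be close to $|c\rangle$. Fortunately the Hellinger-distance bound controls this region's total mass under both $\rho_0$ and $\rho_1$ (via the terms $\Pr_{\rho_0}[\rho_1=0]$ and $\Pr_{\rho_1}[\rho_0=0]$ extracted from the sum), so it contributes only negligibly to $\widetilde{D}$ regardless of the mixing parameter $p$.
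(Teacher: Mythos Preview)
Your proof is correct and takes a genuinely different route from the paper's. The paper works directly with the explicit Gaussian density: for $\vec{\omega}$ outside a boundary set $S$ (the symmetric difference of the two supports), it writes out $\rho_1(\vec{\omega})/\rho_0(\vec{\omega})=e^{-\pi(2\vec{\omega}\cdot\e'+\|\e'\|_2^2)/\beta_f^2}$ and bounds the exponent pointwise using $\|\vec{\omega}\|_2\leq\beta_f\sqrt{m+1}$ together with the parameter relations of Notation~1; the boundary set $S$ is handled separately by a direct lattice-point count showing its mass under either distribution is at most $(m+1)\|\e'\|_\infty/\beta_f$. You instead black-box Lemma~\ref{41} (Hellinger distance between $\rho_0$ and $\rho_1$ is negligible) and extract the pointwise closeness of $r(\vec{\omega})$ to $1$ via Markov's inequality, with the support-mismatch mass falling out of the same Hellinger sum. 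Your argument is shorter and does not depend on the Gaussian form at all---it goes through verbatim for any pair of distributions at negligible Hellinger distance---whereas the paper's computation is self-contained and yields an explicit bound rather than one expressed in terms of $H_\lambda$. Either is fine here; your version is arguably the more natural one given that Lemma~\ref{41} is already available.
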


\emph{(Sketch Proof.)} The main idea is to prove that when $\vec{\omega}$ is sampled from $\widetilde{D}_{\mathbb{Z}^{m+1}_q}$, $\rho_{0}(\vec{\omega})$ and $\rho_{1}(\vec{\omega})$ are with overwhelming probability so close to each other that the ratio $\frac{\rho_{0}(\vec{\omega} )}{\rho_{1}(\vec{\omega})}$ is $\lambda$-negligibly close to 1, so that the normalized form of $(\sqrt{\rho_{0}(\vec{\omega})}c_0|0\rangle+\sqrt{\rho_{1}(\vec{\omega})}c_1|1\rangle)$ is within $\lambda$-negligible trace distance to the state $c_0|0\rangle+c_1|1\rangle$. The detailed proof can be found in Appendix \ref{vcb1c}.

\section{Encrypted Multi-bit Control Technique}\label{secc3}
%\subsubsection{Pauli mask} Let $a,b\in\{0,1\}$, let $\bk{\psi}$ be a single-qubit state, and let $X$ and $Z$ be the Pauli operators defined in (\ref{1.12}). The Pauli operators in state $X^{a}Z^{b}\bk{\psi}$ are called
%\emph{Pauli mask}, since it serves as a mask hiding the state $\bk{\psi}$.
%We begin with some terminology and notations arising in this section.
The main technique in this section, called encrypted conditional rotation (encrypted-CROT), is to use the encrypted $m$-bit angle MHE.Enc$(\alpha)$ to perform $R^{-1}_{\alpha}$, up to a Pauli matrix, where $\alpha=\sum^{m}_{j=1}\alpha_j2^{-j}$, $\alpha_j\in\{0,1\}$.\footnote{The case of using MHE.Enc$(\alpha)$ to to implement $R_{\alpha}$ is similar.} Below, we first explain the basic idea for achieving the encrypted-CROT.

Observe that the classical conditional rotation $R^{-1}_{\alpha}$ is realized by $m$ successive 1-bit controlled rotations: for each $1\leq j\leq m$, the corresponding rotation is $ R^{-1}_{\alpha_j2^{-j}}=R^{-\alpha_j}_{2^{-j}}$, where $\alpha_j\in\{0,1\}$ is the control bit. We first consider the implementation of 1-bit control rotation $R^{-\alpha_{j}}_{w}$ when both $w\in[0,1)$ and the encrypted 1-bit MHE.enc($\alpha_{j}$) are given. By Mahadev's idea for achieving encrypted CNOT operation (also 1-bit controlled operation), we will show in Lemma \ref{11} that given the encrypted 1-bit Enc$(\alpha_{j})$ and a general 1-qubit state $\bk{\psi}$, one can perform
\begin{align}\label{wq11}
\bk{\psi}\rightarrow Z^{d_1}R^{d_2}_{2w}R^{-\alpha_j}_{w}\bk{\psi},
\end{align}
where random parameters $d_1,d_2\in\{0,1\}$ are induced by quantum measurement in the algorithm. Unlike Mahadev's encrypted CNOT operation, when the output state of (\ref{wq11}) is taken as an encryption of $R^{-\alpha_j}_{w}\bk{\psi}$, in addition to Pauli mask $Z^{d_1}$, there is also an undesired rotation $R^{d_2}_{2w}$. Fortunately, when using MHE.Enc$(\alpha)$ to implement $R^{-1}_{\alpha}$, if we use MHE.Enc$(\alpha_m)$, the encryption of the least significant bit of $\alpha$, to perform conditional rotation $R_{-2^{-m}\alpha_m}$ by Lemma \ref{11} (i.e., setting $w=2^{-m}$ in (\ref{wq11})), then the undesired operator is $R^{d_2}_{2^{-(m-1)}}$. Since the conditional rotations that remained to be performed are $R^{-1}_{\alpha_j2^{-j}}$ ($1\leq j\leq m-1$), the undesired operator $R^{-1}_{-d_22^{-(m-1)}}$ can be merged with $R^{-1}_{\alpha_{m-1}2^{-(m-1)}}$ in the waiting list. By iteration, as to be shown in Theorem \ref{12}, we realize the encrypted-CROT as follows:
\begin{align}\label{qz2}
\bk{\psi}\rightarrow   Z^{d}R^{-1}_{\alpha}\bk{\psi},
\end{align}
where $d\in\{0,1\}$ is a random parameter.

Similarly, with MHE.Enc$(\alpha)$ at hand, one can implement another kind of rotation $T_{\alpha}$ as defined in (\ref{45}) as follows:
\begin{equation}\label{z231}
\bk{\psi}\rightarrow Z^{d}X^{d}T^{-1}_{\alpha}|\psi\rangle.
\end{equation}
Combining (\ref{qz2}) and (\ref{z231}) gives a general encrypted conditional unitary operator acting on a single qubit (Theorem \ref{75}). That is, for any unitary $U=U(\alpha,\beta,\gamma)=R_{\alpha}T_{\beta}R_{\gamma}$ (cf. (\ref{45})), where $\alpha,\beta,\gamma\in[0,1)$ are multi-bit binary angles, given the encrypted angles Enc$(\alpha,\beta,\gamma)$ and a general 1-qubit state $\bk{\psi}$, one can efficiently perform:
\begin{equation}
\bk{\psi}\rightarrow Z^{d_1}X^{d_2}U^{-1}\bk{\psi},
\end{equation}
where $d_1,d_2\in\{0,1\}$ are random parameters.\\

The following are formal definitions of some terms to be used in this section:
\begin{enumerate}
\item[]
\textbf{Up to Pauli operator.} We say that a unitary transform $U$ is applied to a $1$-qubit state $|\psi\rangle$ \emph{up to a Pauli operator}, if the following is implemented:
\begin{align}\label{vvv1}
\bk{\psi}\rightarrow VU\bk{\psi}, \quad \text{where}\quad V \in \text{\{Pauli matrices $X,Y,Z$, identity matrix $\I_2$\}}
\end{align}

%\textbf{Negl$(\lambda)$-compute} We say that we \emph{negl$(\lambda)$-compute} a quantum state if we compute a state that is within negl$(\lambda)$-trace distance of that state.

%\textbf{Rotations.} For $\alpha \in [0,1),$ define the rotations of angle $\alpha$:\begin{equation}\label{41234}
%T_{\alpha}=\left[                 %左括号
%  \begin{array}{cc}
%\cos(\pi\alpha) & -\sin(\pi\alpha)\\
%\sin(\pi\alpha) & \cos(\pi\alpha)
%  \end{array}
%\right],   \quad   R_{\alpha}=\left[                 %左括号
%  \begin{array}{cc}   % 该矩阵一共3列，每一列都居中放置
%1& \\
% & e^{2i\pi\alpha}
 % \end{array}
%\right].        %右括号
%\end{equation}

\vspace{0.3cm}

\textbf{Uniform distribution.} $\mathbb{U}_{\Z_{q}}$ denotes the uniform distribution over $\Z_{q}$ for some $q\in\Z$.

\vspace{0.3cm}

\textbf{Bit string.} A \emph{bit string} is a sequence of bits, each taking value 0 or 1.

\vspace{0.3cm}
%\textbf{AltMHE ciphertext.}\label{ce1} Recall the parameters $q,m,n,\beta_f,\beta_{init}$ from Notation 1 in Section \ref{sec2.4}, and let $\mathbb{U}_{Z^{n}_{q}}$ be the uniform distribution over $Z^{n}_{q}$. We use notation $\text{AltMHE.Enc}(u;r)$ to refer to an encryption of $u$ with the \emph{random vector} $r=(\s,\e)$. Remember that for a fresh ciphertext $\text{AltMHE.Enc}(u;r)$, $s\leftarrow \mathbb{U}_{Z^{n}_{q}}$ and $e\leftarrow D_{Z^{m+1}_{q},\beta_{init}}$, and thus $r\leftarrow (\mathbb{U}_{Z^{n}_{q}},\hskip 2pt D_{Z^{m+1}_{q},\beta_{init}})$. We use $\text{AltMHE.Enc}(u)$ to denote the encryption of $u$, without specifying $r$. We often use $\delta$ to denote the density function of the joint distribution $\mathbb{U}_{Z^{n}_{q}}\times D_{Z^{m+1}_{q},\beta_{f}}$.

\textbf{$k$-bit binary fraction.} For $m\in \N$, the \emph{$m$-bit binary fraction} represents a real number in the range $[0,1]$ of the form $x=\sum^{m}_{j=1}2^{-j}x_j$, where $x_j\in\{0,1\}$ for $1\leq j \leq m$. The binary representation of $x\in[-1,1]$ is $x=(-1)^{x_0}\sum^{\infty}_{j=1}2^{-j}x_j$, where $x_0\in\{0,1\}$ is the sign bit, and $x_1$ is the most significant bit. The sign bit of $0$ is $0$, so $(x_0,x_1,x_2,...)=(1,0,0,...)$ will never be used in this representation.

The notation MHE.Enc$(x)$ is used to refer to a bit-wise encryption MHE.Enc$(x_1,$\\$x_2, ...,x_m)$.
%Recall that $e$ is chosen from discrete Gaussian distribution $D_{Z^{m+1}_{q},\beta_{f}}$, and $s$ is chosen from uniform distribution on $Z^{n}_{q}$, denoted by $X_{Z^{n}_{q}}$.

\end{enumerate}

\subsection{Encrypted 1-bit Controlled Rotation}
Given encrypted 1-bit MHE.Enc($\alpha_{j}$), the goal is to implement the controlled rotation $R^{-\alpha_{j}}_{w}$ for fixed angle $w\in[0,1)$ on a general 1-qubit state $\bk{k}=k_0\bk{0}+k_1\bk{1}$. The basic idea follows \cite{mahadev2018classical}. Below, we present it in a brief but not very precise way. First, apply conditional operations with qubit $\bk{k}$ as control to create a superposition of the form:
\begin{align}\label{wq1}
\sum_{ l,u \in \{0,1 \} } e^{-2\mi\pi uw}{k_l}| l\rangle |u\rangle \bk{\text{AltMHE.Enc}(u \oplus l\alpha_j)}.
\end{align}
After measuring the last register of (\ref{wq1}) to obtain an encryption of some $u^{\star}\in\{0,1\}$, the resulting state is
\begin{align}\label{bmz}
\sum_{ l \in \{0,1 \} } e^{-2\mi\pi (u^{\star}\oplus l\alpha_j) w} {k_l}\bk{ l}\bk{ u^{\star}\oplus l\alpha_j } \bk{\text{AltMHE.Enc}(u^{\star})}.
\end{align}
So far, the rotation factor $e^{-2\mi\pi\alpha_jw}$ is introduced to the relative phase for $l=0$, $1$. After using Hadamard transform to eliminate the entanglement between the first two qubits of (\ref{bmz}), the resulting first qubit will be what we need. Details are as follows:

\begin{lem}\label{11}
Let \tn{MHE} and \tn{AltMHE} be Mahadev's scheme, and let $\lambda$ be the security parameter of \tn{MHE}. Suppose $\tn{MHE.Enc}(\zeta)$ is a ciphertext encrypting a 1-bit message $\zeta\in\{0,1\}$. For any angle $w\in[0,1)$, consider the conditional rotation $R^{\zeta}_{-w}$ whose control bit is $\zeta$. With parameters $m,n,q$ defined by Notation 1, there exists a quantum polynomial time algorithm that on input $w$, $\tn{MHE.Enc}(\zeta)$ and a general single-qubit state $|k\rangle$, outputs: (1) an \textnormal{AltMHE} encryption $y=\textnormal{AltMHE.Enc}(u^{\star}_0;r^{\star}_0)$, where $u^{\star}_0\in\{0,1\}$ and $r^{\star}_0 \in \Z^{m+n+1}_q$, (2) a bit string $d\in\{0,1\}^{1+(m+n+1)\log_2 q}$, and (3) a state within $\lambda$-negligible trace distance to
\begin{equation}\label{w12}
Z^{\langle d, (u_0^{\star},r_0^{\star})\bigoplus(u_1^{\star},r_1^{\star})\rangle}R^{u_0^{\star}\zeta}_{2w}R^{-\zeta}_{w}|k\rangle,
\end{equation}
where $(u_1^{\star},r_1^{\star})\in \{0,1\} \times \Z^{m+n+1}_q$ such that
\begin{align}
\emph{AltMHE.Enc}(u_0^{\star};r_0^{\star})=\emph{AltMHE.Enc(}u_1^{\star};r_1^{\star}) \oplus \emph{MHE.Convert(MHE.Enc(}\zeta)).
\end{align}
\end{lem}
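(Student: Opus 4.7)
The plan is to adapt Mahadev's encrypted CNOT construction, replacing the internal XOR/Pauli action with a controlled phase rotation so that the rotation parameter enters the relative phase of two coherent branches. First, I would apply $\text{MHE.Convert}$ to obtain an AltMHE ciphertext $y_\zeta = \text{AltMHE.Enc}(\zeta;\, r_\zeta)$ with $r_\zeta = (\s_\zeta,\e_\zeta)$, whose noise $\e_\zeta$ is bounded as in Lemma~\ref{41}. I then prepare three registers: the input qubit $|k\rangle = k_0|0\rangle + k_1|1\rangle$, a Gaussian-weighted superposition over valid AltMHE randomness $(u,r)\in\{0,1\}\times\Z_q^{m+n+1}$ in an auxiliary register, and a ciphertext register initialized to $|\mathbf{0}\rangle$. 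Using the qubit register $|l\rangle$ as control, the ciphertext register is filled with $\text{AltMHE.Enc}(u\oplus l\zeta;\, r+l\,r_\zeta)$ via the natural additive homomorphic XOR of AltMHE, and a controlled phase keyed on the $u$-subregister multiplies in the factor $e^{-2\pi\mi u w}$, producing the joint superposition of equation~(\ref{bmz})'s precursor.

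Next I would measure the ciphertext register, obtaining $y = \text{AltMHE.Enc}(u_0^\star;\, r_0^\star)$. By linearity of AltMHE.Enc under XOR, the two branches $l=0$ and $l=1$ contributing to the outcome $y$ are pinned to $(u_0^\star, r_0^\star)$ and $(u_1^\star,r_1^\star) = (u_0^\star \oplus \zeta,\, r_0^\star - r_\zeta)$, which yields the identity in the statement relating $y$ to $\text{AltMHE.Enc}(u_1^\star;r_1^\star)\oplus \text{MHE.Convert}(\text{MHE.Enc}(\zeta))$. The post-measurement state on the qubit-and-$(u,r)$ registers is approximately a two-term superposition on these two branches, with amplitudes weighted by truncated Gaussian densities evaluated at points differing by the noise shift $\e_\zeta$ in the $\e$-subregister. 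I would invoke Lemma~\ref{61}, instantiated with $c_0 = k_0$, $c_1 = k_1$, and shift $\e' = \e_\zeta$, to conclude that this state is within $\lambda$-negligible trace distance of the exactly-symmetric superposition $\tfrac{1}{\sqrt{2}}\big(e^{-2\pi\mi u_0^\star w}k_0|0\rangle|u_0^\star,r_0^\star\rangle + e^{-2\pi\mi u_1^\star w}k_1|1\rangle|u_1^\star,r_1^\star\rangle\big)$ with overwhelming probability.

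Finally, I apply Hadamard to every bit of the $(u,r)$-register and measure to obtain $d \in \{0,1\}^{1+(m+n+1)\log_2 q}$. Expanding the Hadamard, the relative phase between the $l=0$ and $l=1$ branches on the qubit factors into $(-1)^{\langle d,(u_0^\star,r_0^\star)\oplus(u_1^\star,r_1^\star)\rangle}$ (the claimed Pauli $Z^{\langle d,\cdot\rangle}$ mask) times the rotation factor $e^{-2\pi\mi w(u_1^\star - u_0^\star)}$. Since $u_1^\star - u_0^\star = (-1)^{u_0^\star}\zeta$ as integers, the rotation factor matches exactly the action of $R_{2w}^{u_0^\star\zeta}R_w^{-\zeta}$ on $|k\rangle$ up to an overall global phase, establishing the target form~(\ref{w12}).

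The main obstacle will be Phase 2: controlling the Gaussian-weight imbalance between the two branches after the ciphertext measurement. This is precisely the scenario for which Lemma~\ref{61} was designed, so the hard analytic work lives there; a triangle inequality then combines the imbalance error with the noise bound from Lemma~\ref{41} and the standard preparation error of Gaussian-amplitude superpositions into a single $\lambda$-negligible trace-distance guarantee. Everything else is routine Hadamard bookkeeping and AltMHE linearity.
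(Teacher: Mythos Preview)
Your proposal is correct and follows essentially the same approach as the paper's proof (Algorithm~1): prepare a Gaussian-weighted superposition over $(u,r)$ with $\text{AltMHE.Enc}(u;r)$ in a separate register, apply the phase $e^{-2\pi\mi wu}$ on the $u$-qubit, conditionally XOR in $\text{MHE.Convert}(\text{MHE.Enc}(\zeta))$ controlled on $|k\rangle$, measure the ciphertext register, invoke Lemma~\ref{61} to equalize the Gaussian weights, and Hadamard-then-measure the $(u,r)$ register. Two cosmetic slips: when you instantiate Lemma~\ref{61} the amplitudes should be $c_0=e^{-2\pi\mi w u_0^\star}k_0$ and $c_1=e^{-2\pi\mi w u_1^\star}k_1$ (the phase is already present), and the $\tfrac{1}{\sqrt{2}}$ in your ``exactly-symmetric superposition'' should not be there since $|k_0|^2+|k_1|^2=1$; neither affects the argument.
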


\begin{remark} \tn{ In the realization of the encrypted controlled rotation $R^{-\zeta}_{w}$,\\ $Z^{d\cdot((u_0^{\star},r_0^{\star})\bigoplus(u_1^{\star},r_1^{\star}))}$ and $R^{u_0^{\star}\zeta}_{2w}$ are both serving to protect the privacy of $\zeta$ in (\ref{w12}), where the former is a Pauli mask, and the latter is to be removed later.}
\end{remark}

\begin{proof} We prove the lemma by providing a BQP algorithm---Algorithm 1 below. Recall in Notation 1 the parameters $q,m,n,\beta_f$. In Algorithm 1, Step 1 requires to create a superposition on discrete Gaussian distribution $D_{Z^{m+1}_{q},\beta_{f}}$, a typical procedure that can be found in Lemma 3.12 of [Reg05], or (70) in \cite{mahadev2018classical}. Then by creating superposition on discrete uniform distribution $\mathbb{U}_{Z_{2}}\times \mathbb{U}_{Z^{n}_{q}}$, adding an extra register $\bk{0}_G$ whose label is $G$, and using AltMHE.Enc in the computational basis, one can efficiently prepare (\ref{32}) in Algorithm 1.

After applying Step 2, by equality $R_{-\omega}\bk{u}=e^{-2\pi\mi\omega u}\bk{u}$ for $u=0,1$, the resulting state is $(\ref{cx12})$. In step 4, after adding an extra qubit initially in state $\bk{k}$ to the leftmost of the qubit system in (\ref{cx12}), then applying conditional homomorphic XOR, the resulting state is (\ref{1}). By Lemma \ref{41}, the following distributions are $\lambda$-negligibly close to each other:
\begin{align}\label{qz1}
&\{\textnormal{AltMHE.Enc}(\mu;r)|(\mu,r)\xleftarrow{}  (\mathbb{U}_{\Z^{n}_{q}},D_{\Z^{m+1}_{q},\beta_{f}}) \} \quad  \text{and}\\
&\{\textnormal{AltMHE.Enc}(\mu;r) \oplus \textnormal{MHE.Convert(MHE.Enc(}\zeta)) |(\mu,r) \xleftarrow{} (\mathbb{U}_{\Z^{n}_{q}},D_{\Z^{m+1}_{q},\beta_{f}})\}.\nonumber
\end{align}

So, in Step 5, after measuring register G of (\ref{1}), the measurement outcome is with overwhelming probability of the form $\text{AltMHE.Enc}(u^{\star}_0;r^{\star}_0 \large{)}$ where $u^{\star}_0\in\{0,1\},r^{\star}_0 \large{} \in \Z^{n}_q\times\Z^{m+1}_{\beta_{f}}$. The resulting state is
\begin{align}\label{zxx2}
\left( e^{-2\pi \mi w u^{\star}_0}\sqrt{\delta(r_{0}^{\star})}k_{0}|0\rangle|u^{\star}_0\rangle_{M}|r_{0}^{\star}\rangle_{M}+e^{-2\pi \mi w u^{\star}_1}\sqrt{\delta(r_{1}^{\star})}k_{1}|1\rangle|u^{\star}_1\rangle_{M}|r_{1}^{\star}\rangle_{M} \right)& \nonumber\\
&\hspace{-2.6cm}\Big{|}\text{AltMHE.Enc}(u^{\star}_0;r^{\star}_0 \big{)} \Big{\rangle_G},
\end{align}
where $\delta$ is the density function of distribution $\mathbb{U}_{Z^{n}_{q}}\times D_{Z^{m+1}_{q},\beta_{f}}$, $u_1^{\star}$ and $r_1^{\star}$ satisfy
\begin{align}\label{zx11}
\text{AltMHE.Enc}(u_0^{\star};r_0^{\star})=\text{AltMHE.Enc(}u_1^{\star};r_1^{\star}) \oplus \text{MHE.Convert(MHE.Enc(}\zeta)).
\end{align}
 Below, we show that ($\ref{zxx2}$) can be written as
\begin{align}\label{cz1}
\sum_{j\in\{0,1\}} e^{-2\pi \mi w u^{\star}_j}k_{j}|j\rangle|u^{\star}_j,r_{j}^{\star}\rangle_M|\text{AltMHE.Enc}(u^{\star}_0;r^{\star}_0 \big{)}\rangle_G.
\end{align}

By item 5 of Notation 1, one can assume MHE.Enc$(\zeta)$=MHE.Enc$(\zeta;(\s',\e'))$ where $\s'\in\Z^{n}_q$, $||\e'||_{\infty} \leq \beta_{init}(N+1)^{\eta_c}$. By (\ref{zx11}), $r_0^{\star}=r_1^{\star}+(\s',\e') \mod q$. Let     $r^{\star}_0=(\s^{\star}_0,\e^{\star}_0)$, and let $\rho_0$ be the density function of $D_{Z^{m+1}_{q},\beta_{f}}$, then
\begin{align}\delta(r^{\star}_0)=\frac{1}{q^{n}}\rho_0(\e^{\star}_0), \quad  \quad \delta(r^{\star}_1)=\delta \Big{(} r^{\star}_0-(\s',\e')\mod q  \Big{)}=\frac{1}{q^{n}}\rho_0(\e^{\star}_0-\e'),\end{align}
where the last equality makes use of $||\e^{\star}_0||_{\infty},$ $||\e'||_{\infty}\leq \beta_f \ll q$. Notice that $\e^{\star}_0$ is obtained by measuring $G$ in (\ref{1}), and can be viewed as being sampled from $D_{Z^{m+1}_{q},\beta_{f}}$ with probability $|k_0|^2$ (when $j=0$ in $G$ of (\ref{1})), and being sampled from $\e'+D_{Z^{m+1}_{q},\beta_{f}}$ with probability $|k_1|^2$ (when $j=1$). Applying Lemma \ref{61} to the following states (by substituting into (\ref{nw1}): $c_0=e^{-2\pi \mi w u^{\star}_0}k_{0}$, $c_1= e^{-2\pi \mi w u^{\star}_1}k_{1}$, $\omega=\e^{\star}_0$, $\rho_1(\e^{\star}_0)=\rho_0(\e^{\star}_0-\e')$), where $\bk{c'}$ is unnormalized,
\begin{align}
\bk{c}=&  e^{-2\pi \mi w u^{\star}_0}k_{0}|0\rangle+e^{-2\pi \mi w u^{\star}_1}k_{1}|1\rangle,\\
\bk{c'}=& e^{-2\pi \mi w u^{\star}_0}\sqrt{\delta(r_{0}^{\star})}k_{0}|0\rangle+e^{-2\pi \mi w u^{\star}_1}\sqrt{\delta(r_{1}^{\star})}
 k_{1}|1\rangle\nonumber\\
 =&\frac{\sqrt{\rho_0(\e^{\star}_0)}}{\sqrt{q^{n}}}e^{-2\pi \mi w u^{\star}_0}k_{0}\bk{0}+\frac{\sqrt{\rho_0(\e^{\star}_0-\e')}}{\sqrt{q^{n}}} e^{-2\pi \mi w u^{\star}_1}k_{1}\bk{1} ,
\end{align}
one gets that normalized $\bk{c'}$ is with overwhelming probability within $\text{negl}(\lambda)$ trace distance to $\bk{c}$. Observe that the first qubit of (\ref{zxx2}) is in state $\bk{c'}$, so the normalized state of (\ref{zxx2}) is with overwhelming probability within $\text{negl}(\lambda)$ trace distance to:
\begin{align}\label{22}
&\left(e^{-2\pi \mi w u^{\star}_0}k_{0}|0\rangle|u^{\star}_0,\ r_{0}^{\star}\rangle_{M}+e^{-2\pi \mi w u^{\star}_1}k_{1}|1\rangle|u^{\star}_1,\ r_{1}^{\star}\rangle_{M}\right)|\text{AltMHE.Enc}(u^{\star}_0;r^{\star}_0 \big{)}\rangle_G \nonumber \\
=&\sum_{j\in\{0,1\}} e^{-2\pi \mi w u^{\star}_j}k_{j}|j\rangle|u^{\star}_j,r_{j}^{\star}\rangle_M|\text{AltMHE.Enc}(u^{\star}_0;r^{\star}_0 \big{)}\rangle_G.
\end{align}
Now, (\ref{22}) can be taken as the result after step 5.

Let $Q=\begin{bmatrix} e^{-2\mi \pi wu^{\star}_0} & \\ & e^{-2\mi \pi wu^{\star}_1}\end{bmatrix}$. After applying Step 6, since for any $x\in Z_2,\ y\in Z^{n+m+1}_q$ and $p=1+(m+n+1)\log q$, performing bitwise Hadamard transform on the $p$-qubit state $\bk{x,y}$ will yield a state $\sum\limits_{ d\in\{0,1\}^{p}}(-1)^{\langle d,(x,y) \rangle}\bk{d}$, the resulting state is
\begin{align}\label{z21}
&(-1)^{\langle d,(u^{\star}_0,r^{\star}_0)\rangle} \big{(}k_0Q\bk{0}\big{)}|d\rangle_M|\tn{AltMHE.Enc}(u^{\star}_0;r^{\star}_0 \big{)}\rangle_G+\nonumber\\
&\hspace{5cm}(-1)^{\langle d,(u^{\star}_1,r^{\star}_1)\rangle}\big{(} k_1Q\bk{1}\big{)}|d\rangle_M|\tn{AltMHE.Enc}(u^{\star}_0;r^{\star}_0 \big{)}\rangle_G\nonumber\\
&=\big{(} Z^{ \langle d,  (u_0^{\star}, r_0^{\star})\bigoplus(u_1^{\star},r_1^{\star}) \rangle } Q |k\rangle  \big{)} |d\rangle_M|\text{AltMHE.Enc}(u^{\star}_0;r^{\star}_0 \big{)}\rangle_G.
\end{align}
By (\ref{zx11}), $u^{\star}_1,u^{\star}_0,\zeta\in\{0,1\}$ satisfy $u^{\star}_1=u^{\star}_0+\zeta \mod 2$. If $u^{\star}_0$, $\zeta$ are both $1$, then $u^{\star}_1=u^{\star}_0+\zeta$, otherwise $u^{\star}_1=u^{\star}_0+\zeta-2$. So, it holds that $u^{\star}_1-u^{\star}_0=\zeta-2\zeta u^{\star}_0$, and
\begin{equation}\label{zcq}
Q=e^{-2\mi \pi wu^{\star}_0}\left[\begin{array}{cc}   % 该矩阵一共3 列，每一列都居中放置
 1& \\
 & e^{-2\mi \pi w (\zeta-2\zeta u^{\star}_0 )}
  \end{array}
\right]=e^{-2\mi \pi wu^{\star}_0}R^{u_0^{\star}\zeta}_{2w}R^{-\zeta}_{w}.
 \end{equation}
By combining (\ref{z21}) and (\ref{zcq}), the resulting state in Step 6 is as claimed in (\ref{412}), whose first qubit is as in (\ref{w12}).

 $\hfill\blacksquare$
\end{proof}
%The encrypted trapdoor information enables us to use $d$, $HE.Enc_{pk}(s)$ and $AltMHE.Enc_{pk}(u^{\star}_0;r^{\star}_0 \large{)}$ to compute $HE.Enc_{pk'}(u^{\star}_0s)$ and the encryption of Pauli key $HE.Enc_{pk'}( d\cdot((u_0^{\star},r_0^{\star})\bigoplus(u_1^{\star},r_1^{\star}))\large{)}$. These undesirable $R^{u_0^{\star}s}_{2w}$ errors can be removed by homomorphically computing. More specifically,

\vspace{0.5cm}
    \begin{breakablealgorithm}\label{20}
        \caption{Double Masked 1-bit Controlled Rotation}\label{1123}
        \begin{algorithmic}[1] %每行显示行号
            \Require An angle $w\in[0,1)$, an encryption of one-bit message $\text{MHE.Enc}(\zeta)$, a single-qubit state $|k\rangle=k_0|0\rangle+k_1\bk{1}$; public parameters $\lambda,q,m,n,\beta_f$ in Notation 1.
            \Ensure A ciphertext $y=\tn{AltMHE.Enc}(u^{\star}_0;r^{\star}_0)$, a bit string $d$, and a state $\bk{\psi}=$\Statex $Z^{d\cdot((u_0^{\star},r_0^{\star})\bigoplus(u_1^{\star},r_1^{\star}))}R^{u_0^{\star}\zeta}_{2w}R^{-\zeta}_{w}|k\rangle$.
            \State Create the following superposition over the distribution $\mathbb{U}_{Z_{2}}\times \mathbb{U}_{Z^{n}_{q}}\times D_{Z^{m+1}_{q},\beta_{f}}$
            \begin{equation}\label{32}
\sum_{u\in\{0,1\}, r \in Z^{m+n+1}_q} \sqrt{\frac{\delta(r)}{2}}|u\rangle|r\rangle|\text{AltMHE.Enc}(u;r)\rangle_{G},
\end{equation} where $G$ is the label of the last register, and $\delta$ is the density function of distribution $\mathbb{U}_{Z^{n}_{q}}\times D_{Z^{m+1}_{q},\beta_{f}}$.
\State Perform phase rotation $R_{-\omega}$ on qubit $\bk{u}$ of (\ref{32}). The result is \begin{equation}\label{cx12}
\sum_{u\in\{0,1\}, r \in Z^{m+n+1}_q} \sqrt{\frac{\delta(r)}{2}}e^{-2\pi\mi w u }|u\rangle|r\rangle|\text{AltMHE.Enc}(u;r)\rangle_{G},
\end{equation}
\State Convert the input ciphertext MHE.Enc$(\zeta)$ into MHE.Convert(MHE.Enc$(\zeta)$).
\State Apply conditional homomorphic XOR to register $G$, with the control condition being that the single-qubit $|k\rangle=\sum_{j\in\{0,1\}}k_j|j\rangle$ is in state $\bk{1}$. The resulting state is \begin{align}\label{1}
 \sum_{j,u\in\{0,1\}, r \in  Z^{m+n+1}_q}& \sqrt{\frac{\delta(r)}{2}} k_j e^{-2\pi \mi w u}|j\rangle|u\rangle|r\rangle\ \nonumber\\
  &\big{|} \text{AltMHE.Enc}(u;r)\oplus j\cdot\text{MHE.Convert(MHE.Enc}(\zeta)) \big{\rangle_{G}}.
\end{align}
\State Measure register $G$. The outcome is with overwhelming probability of the form \Statex $\text{AltMHE.Enc}(u^{\star}_0;r^{\star}_0 \big{)}$, where $u^{\star}_0\in\{0,1\},\ r^{\star}_0 \in \Z^{n}_q\times\Z^{m+1}_{\beta_{f}}$. After the measurement, state (\ref{1}) becomes the following (unnormalized) state:
\begin{align}\label{2}
\left( e^{-2\pi \mi w u^{\star}_0}\sqrt{\delta(r_{0}^{\star})}k_{0}|0\rangle|u^{\star}_0,\ r_{0}^{\star}\rangle_{M}+e^{-2\pi \mi w u^{\star}_1}\sqrt{\delta(r_{1}^{\star})}k_{1}|1\rangle|u^{\star}_1,\ r_{1}^{\star}\rangle_{M} \right)& \nonumber\\
&\hspace{-3cm}|\text{AltMHE.Enc}(u^{\star}_0;r^{\star}_0 \big{)}\rangle_G,
\end{align}
where $M$ is the label of the middle register.

\State Perform qubit-wise Hadamard transform on register $M$ of (\ref{2}), then measure register $M$. Suppose a bit string $d\in\{0,1\}^{1+(m+n+1)\log q}$ is the state of register $M$ after the measurement. The resulting state is
\begin{align}\label{412}
\left( Z^{ \langle d, (u_0^{\star},r_0^{\star})\bigoplus(u_1^{\star},r_1^{\star})\rangle} R^{u_0^{\star}\zeta}_{2w}R^{-\zeta}_{w}  |k\rangle  \right)|d\rangle_M|\text{AltMHE.Enc}(u^{\star}_0;r^{\star}_0 \big{)}\rangle_G.
\end{align}
\State Set $\bk{\psi}$ to be the first qubit of (\ref{412}).
                   \end{algorithmic}
    \end{breakablealgorithm}

\begin{remark} \tn{The main difference of Algorithm 1 from Mahadev's encrypted CNOT operation (cf. Claim 4.3 in \cite{mahadev2018classical}) comes from (\ref{1}), where an entangled state is created in a different way.}
\end{remark}

%In general, the process of Algorithm 1 is similar to that of Mahadev's encrypted CNOT algorithm (cf. Claim 4.3 of \cite{mahadev2018classical}).

%\footnote{In fact, the procedures of Algorithm 1 are similar to Mahadev's encrypted CNOT algorithm. The main difference comes form (\ref{1}), where we create the entangled state in a different way for our purpose. For the readers who want to compare the two in detail, we strongly recommend to read the paper \cite{mahadev2018classical}.}

%Combining with the public encrypted trapdoors of MHE, we can improve the result of the above lemma, as follows:

\begin{prop}\label{345}
Let $pk_1$ and $pk_2$ be two public keys generated by $\tn{MHE.Keygen}$, and let $t_1$ be the trapdoor of public key $pk_1$. Using an encrypted control bit $\henc{pk_1}(\zeta)$ and the encrypted trapdoor $\henc{pk_2}(t_1)$, for any angle $w\in[0,1)$, any single-bit state $\bk{k}$, one can efficiently prepare (1) state of the form $Z^{d_1}R^{d_2}_{2w}R^{-\zeta}_{w}\bk{k}$, where $d_1,d_2\in\{0,1\}$ are random parameters obtained by quantum measurement, and (2) a ciphertext $\henc{pk_2}(d_1,d_2)$.
\end{prop}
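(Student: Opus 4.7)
The plan is to first invoke Algorithm~1 of Lemma~\ref{11} on inputs $w$, $\henc{pk_1}(\zeta)$, and $|k\rangle$. This produces, within negligible trace distance, a single-qubit state of the form
\begin{equation*}
Z^{\langle d,\,(u_0^{\star},r_0^{\star})\oplus(u_1^{\star},r_1^{\star})\rangle}\,R_{2w}^{u_0^{\star}\zeta}\,R_{w}^{-\zeta}|k\rangle,
\end{equation*}
together with classical outputs $d$, $u_0^{\star}$, $r_0^{\star}$, and the AltMHE ciphertext $y=\textnormal{AltMHE.Enc}_{pk_1}(u_0^{\star};r_0^{\star})$; the pair $(u_1^{\star},r_1^{\star})$ is not directly observed but is pinned down by relation~(\ref{zx11}). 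Setting $d_1:=\langle d,\,(u_0^{\star},r_0^{\star})\oplus(u_1^{\star},r_1^{\star})\rangle$ and $d_2:=u_0^{\star}\zeta$ already casts the state in the desired form $Z^{d_1}R_{2w}^{d_2}R_{w}^{-\zeta}|k\rangle$. What remains is to produce $\henc{pk_2}(d_1,d_2)$ using only these classical outputs together with $\henc{pk_1}(\zeta)$ and $\henc{pk_2}(t_1)$.

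For $d_2=u_0^{\star}\zeta$, the approach is to first lift $\henc{pk_1}(\zeta)$ under $pk_2$ by bit-wise re-encryption, obtaining $\henc{pk_2}\bigl(\henc{pk_1}(\zeta)\bigr)$, and then to homomorphically evaluate the MHE decryption circuit under $pk_2$ using $\henc{pk_2}(t_1)$. By the fourth useful fact following Lemma~\ref{41}, the trapdoor $t_1$, combined with MHE.Convert and the Invert algorithm of Lemma~\ref{4200}, is sufficient to recover $\zeta$ from $\henc{pk_1}(\zeta)$; evaluating that circuit homomorphically therefore yields $\henc{pk_2}(\zeta)$, and a homomorphic multiplication by the classical scalar $u_0^{\star}$ gives $\henc{pk_2}(d_2)$.

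For $d_1$, note that by~(\ref{zx11}) the classically computable bit string
\begin{equation*}
y':=y\oplus\textnormal{MHE.Convert}(\henc{pk_1}(\zeta))
\end{equation*}
equals $\textnormal{AltMHE.Enc}_{pk_1}(u_1^{\star};r_1^{\star})$. Bit-wise re-encrypting $y'$ under $pk_2$ and then homomorphically running the Invert algorithm of Lemma~\ref{4200} with $\henc{pk_2}(t_1)$ produces $\henc{pk_2}(u_1^{\star},r_1^{\star})$. Since $d$, $u_0^{\star}$, and $r_0^{\star}$ are in the clear, the $\Z_2$-linear map $(u_1^{\star},r_1^{\star})\mapsto\langle d,\,(u_0^{\star},r_0^{\star})\oplus(u_1^{\star},r_1^{\star})\rangle$ can now be evaluated homomorphically to give $\henc{pk_2}(d_1)$, which together with $\henc{pk_2}(d_2)$ forms the required output.

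The main obstacle is depth control: the MHE decryption circuit (for $\zeta$) and the Invert procedure (for $(u_1^{\star},r_1^{\star})$) must be implemented as sufficiently shallow $\Z_q$-arithmetic circuits so that their homomorphic evaluation under $pk_2$ fits within the classical evaluation budget $\eta_c=\Theta(\log\lambda)$ fixed in Notation~1, preserving correctness of the resulting $pk_2$ ciphertexts without triggering further bootstrapping. This is the same depth regime absorbed by Mahadev's analysis in~\cite{mahadev2018classical}, and no additional blow-up is introduced, since everything on top of Algorithm~1 of Lemma~\ref{11} is light classical post-processing applied inside the homomorphic layer.
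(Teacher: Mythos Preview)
Your overall plan matches the paper's: run Lemma~\ref{11}, then use the encrypted trapdoor $\henc{pk_2}(t_1)$ to homomorphically compute $d_1,d_2$ from the classical outputs. But there is one concrete error. You write that Algorithm~1 produces ``classical outputs $d$, $u_0^{\star}$, $r_0^{\star}$'' and later rely on ``$d$, $u_0^{\star}$, and $r_0^{\star}$ are in the clear.'' They are not. Lemma~\ref{11} outputs only the ciphertext $y=\aenc{pk_1}(u_0^\star;r_0^\star)$ and the bit string $d$; the pair $(u_0^\star,r_0^\star)$ lives inside $y$ and is hidden by the LWE hardness of $pk_1$. (Inside Algorithm~1, register $M$ holding $(u_j^\star,r_j^\star)$ is Hadamard-transformed and measured in Step~6, so that information is destroyed, not revealed.) Consequently your ``homomorphic multiplication by the classical scalar $u_0^\star$'' cannot be carried out, and your linear-map step for $d_1$ likewise assumes plaintext access to $(u_0^\star,r_0^\star)$ that you do not have.

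The fix is exactly what the paper does: re-encrypt $y$ bitwise under $pk_2$, then homomorphically run Invert (Lemma~\ref{4200}) using $\henc{pk_2}(t_1)$ to obtain $\henc{pk_2}(u_0^\star)$ and $\henc{pk_2}(r_0^\star)$. From there, $\henc{pk_2}(d_2)$ comes from a homomorphic product of \emph{two encrypted} bits $\henc{pk_2}(u_0^\star)$ and $\henc{pk_2}(\zeta)$, and $\henc{pk_2}(d_1)$ from homomorphic XOR and inner product using $\henc{pk_2}(u_0^\star,r_0^\star)$, $\henc{pk_2}(u_1^\star,r_1^\star)$ (your route via $y'$ is fine for the latter), and the plaintext $d$. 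With this correction your argument coincides with the paper's.
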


\begin{proof}
By Lemma \ref{11}, it suffices to show how to produce the ciphertext $\henc{pk_2}(d_1,d_2)$ using the encrypted trapdoor $\henc{pk_2}(t_1)$. Recall from Lemma \ref{4200} that trapdoor $t_1$ allows the random vector $r$ and plaintext $u$ to be recovered from a ciphertext $\aenc{pk_1}($\\$u;r)$.
\vspace{0.15cm}

We begin with the output of Lemma \ref{11}. First, encrypt the output $\aenc{pk_1}($\\$u^{\star}_0;r^{\star}_0)$ with the MHE scheme using the public key $pk_2$. This together with the encrypted trapdoor $\henc{pk_2}(t_1)$ gives the encryptions $\henc{pk_2}(u^{\star}_0)$ and $\henc{pk_2}(r^{\star}_0)$. Update $\henc{pk_1}(\zeta)$ to $\henc{pk_2}(\zeta)$ by the encrypted trapdoor (Fact 4. in the end of Section \ref{sec2.4}). By homomorphic multiplication between $\henc{pk_2}(u^{\star}_0)$ and $\henc{pk_2}(\zeta)$, we get $\henc{pk_2}(u^{\star}_0\zeta)=\henc{pk_2}(d_2)$. Similarly, we can obtain the ciphertext $\henc{pk_2}(d_1)$, where $d_1=d\cdot((u_0^{\star},r_0^{\star})\bigoplus\big(u_1^{\star},r_1^{\star})\big)$ with parameters $u^{\star}_1,\ r^{\star}_1,\ d$ described in Lemma \ref{11}.
\end{proof}

%More specifically, by homomorphic evaluations on the encrypted trapdoor $\henc{pk_2}(t_1)$, the ciphertext $\aenc{pk_1}(u^{\star}_0;r^{\star}_0)$ and $\henc{pk_1}(\zeta)$, we can compute $\aenc{pk_2}(u^{\star}_1)$ and $\aenc{pk_2}(r^{\star}_1)$, according to (\ref{zx11}). This together with the measurement result, $d$, of Lemma \ref{11} gives $\henc{pk_2}(d_1)$, where $d_1=d\cdot((u_0^{\star},r_0^{\star})\bigoplus\big(u_1^{\star},r_1^{\star})\big)$.$\hfill\blacksquare$

\subsection{Encrypted Conditional Rotation}
%Let $\alpha$ be an $m$-bit binary angle $\alpha=\sum^{m}_{j=1}2^{-j}\alpha_j$ where $\alpha_j\in\{0,1\}$. We outline the realization of the encrypted-CROT, which is to implement $R_{\alpha}$ when given MHE.Enc$(\alpha)$, instead of $\alpha$. We first use MHE.Enc$(\alpha_m)$, the encryption of the least significant bit of $\alpha$, to perform conditional rotation $R^{-1}_{2^{-m}\alpha_m}$ by Algorithm 1. Although the output state contains undesired error $R^{d}_{2^{-m+1}}$ where $d\in\{0,1\}$, as to be shown in the following theorem, this error can be cancelled via homomorphic evaluation on the output MHE.Enc$(d)$ and the remaining encrypted bits (i.e., by removing the encrypted least significant bit) of MHE.Enc$(\alpha)$. By iteration, we can realize the conditional rotation $R_{\alpha}$ with the angle given in bitwise encrypted form:

\begin{enumerate}
\item[]
\textbf{$x$ mod 1.} For any $x \in \R$, $x\mod1$ refers to a real number $x'$ in range $[0, 1)$ such that $x'=x \mod 1$.
\end{enumerate}

\begin{thm}[Encrypted conditional rotation]\label{12}
Let angle $\alpha\in[0,1)$ be represented in $m$-bit binary form as $\alpha=\displaystyle\sum^{m}_{j=1}2^{-j}\alpha_{j}$ for $\alpha_{j}\in\{0,1\}$. Let $pk_i$ be the public key with trapdoor $t_i$ generated by \textnormal{MHE.Keygen} for $1\leq i\leq m$. Suppose the encrypted trapdoor $\textnormal{MHE.Enc}_{pk_{j+1}}(t_{j})$ is public for $1\leq j\leq m-1$. Given the bitwise encrypted angle $\textnormal{MHE.Enc}_{pk_1}(\alpha)$ and a single-qubit state $|k\rangle$, one can efficiently prepare a ciphertext $\textnormal{MHE.Enc}_{pk_{m}}(d)$, where random parameter $d\in\{0,1\}$, and a state within $\lambda$-negligible trace distance to
\begin{equation}\label{23d}
Z^{d}R^{-1}_{\alpha}|k\rangle.
\end{equation}
\end{thm}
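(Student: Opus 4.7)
The strategy is to process the bits of $\alpha$ from the least significant bit $\alpha_m$ upward, invoking Proposition \ref{345} at each step. Two structural observations make this work cleanly: all diagonal rotations $R_\theta$ commute, so $R^{-1}_\alpha = \prod_{j=1}^{m} R^{-1}_{2^{-j}\alpha_j}$ may be applied in any order; and the ``undesired'' rotation $R^{d_2}_{2w}$ produced by Proposition \ref{345} with $w = 2^{-j}$ has angle $2^{-(j-1)}$, exactly the angle of the next bit still to be processed. Thus each residual can be absorbed into the next iteration instead of accumulating.

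At iteration $j = 1, \ldots, m-1$, maintain an $(m-j+1)$-bit encrypted angle $\henc{pk_j}(\alpha^{(j)}_1, \ldots, \alpha^{(j)}_{m-j+1})$ (with $\alpha^{(1)} = \alpha$) and the current single-qubit state. Invoke Proposition \ref{345} with $w = 2^{-(m-j+1)}$, control $\henc{pk_j}(\alpha^{(j)}_{m-j+1})$, and encrypted trapdoor $\henc{pk_{j+1}}(t_j)$. This applies the desired $R^{-\alpha^{(j)}_{m-j+1}}_{2^{-(m-j+1)}}$ while producing a Pauli mask $Z^{d_1^{(j)}}$, a residual rotation $R^{d_2^{(j)}}_{2^{-(m-j)}}$, and a fresh ciphertext $\henc{pk_{j+1}}(d_1^{(j)}, d_2^{(j)})$. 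Using the encrypted trapdoor $\henc{pk_{j+1}}(t_j)$ (Fact~4 of Section \ref{sec2.4}), re-encrypt the remaining bits $\alpha^{(j)}_1, \ldots, \alpha^{(j)}_{m-j}$ under $pk_{j+1}$ and homomorphically add $d_2^{(j)}$ into the least-significant position via a standard ripple-carry circuit modulo $2^{m-j}$ (the carry out of the top bit is discarded, since $R_1 = \I_2$). Writing the result as $\henc{pk_{j+1}}(\alpha^{(j+1)}_1, \ldots, \alpha^{(j+1)}_{m-j})$, the commutativity identity
\[R^{d_2^{(j)}}_{2^{-(m-j)}} \prod_{i=1}^{m-j} R^{-\alpha^{(j)}_i}_{2^{-i}} \;=\; \prod_{i=1}^{m-j} R^{-\alpha^{(j+1)}_i}_{2^{-i}}\]
shows that the residual has merged cleanly with the rotations that remain to be performed.

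After $m-1$ iterations only one bit $\alpha^{(m)}_1$ remains, and since $R_{1/2} = \diag{1,-1} = Z$, we have $R^{-\alpha^{(m)}_1}_{1/2} = Z^{\alpha^{(m)}_1}$; this completes the rotation part. All Pauli-$Z$ masks generated along the way commute with every $R_\theta$ applied in between, so they collapse into a single $Z^d$ with $d = \alpha^{(m)}_1 \oplus d_1^{(1)} \oplus \cdots \oplus d_1^{(m-1)}$, and $\henc{pk_m}(d)$ is obtainable by a single homomorphic XOR of ciphertexts already living under $pk_m$ after the last re-encryption. I expect two points to require care: (i) bookkeeping the ``waterfall'' cancellation of residual rotations and the ripple-carry bits through the chain $pk_1 \to pk_2 \to \cdots \to pk_m$ while staying inside the MHE noise budget of Notation~1 across $O(m)$ homomorphic additions and keyswitches; and (ii) controlling the cumulative trace-distance error: each call to Proposition \ref{345} contributes $\tn{negl}(\lambda)$, so by the triangle inequality for trace distance and $m = \tn{poly}(\lambda)$, the final state lies within $\lambda$-negligible trace distance of $Z^d R^{-1}_\alpha |k\rangle$, as required.
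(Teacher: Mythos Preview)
Your proposal is correct and follows essentially the same approach as the paper: process the bits of $\alpha$ from the least significant upward via Proposition~\ref{345}, absorb each residual rotation $R^{d_2}_{2w}$ into the next bit's angle by a homomorphic addition, and let the final residual (at angle $1/2$) become part of the Pauli-$Z$ mask together with the accumulated $d_1$ bits. The paper's Algorithm~\ref{311} carries out exactly this iteration with the same bookkeeping (your $\alpha^{(m)}_1$ is the paper's $\tn{LSB}(\alpha^{(m-1)})$, and your XOR of the $d_1^{(j)}$'s is their $d'_{m-1}$), and the error analysis via the triangle inequality matches.
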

\begin{proof}
We first prove the theorem for $m=1$. i.e., $\alpha=\alpha_1/2$. Note $R_{1/2}=Z$, so $Z^{\alpha_1}=R_{\alpha_1/2}$. We rewrite single-qubit state $\bk{k}$ as: $\bk{k}=Z^{\alpha_1}R^{-1}_{\alpha_1/2}\bk{k}$. Since $\henc{pk_{1}}(\alpha_1)$ is given in the input, the theorem automatically holds for $m=1$ by setting $d=\alpha_1$.

We prove the theorem for $m\geq2$ by providing a BQP algorithm in Algorithm \ref{311} below. Notice that $\alpha_m$ is the least significant bit of $\alpha$. In step 1 of Algorithm \ref{311}, by the procedure given in Proposition \ref{345}, on input the encrypted trapdoor $\henc{pk_2}(t_1)$, an encrypted 1-bit $\henc{pk_1}(\alpha_m)$, and a single-bit state $\bk{k}$, one obtains two encrypted single bits $\henc{pk_2}(d_1,b_1)$, where $d_1,b_1\in\{0,1\}$, and a state
\begin{align}\label{3.2622}
\bk{v_1}=Z^{d_1}R^{b_1}_{2^{-m+1}}R^{-\alpha_m}_{2^{-m}}|k\rangle=Z^{d_1}R^{b_1}_{2^{-m+1}}R^{-1}_{\alpha_m2^{-m}}|k\rangle.
\end{align}

%and the encryption \text{HE.Enc}_{pk_2}(w_1)$, $\text{MHE.Enc}_{pk_2}(w_2)$ can also be obtained by using encrypted trapdoor information $\text{MHE.Enc}_{pk_2}(t_1)$.\\

To remove the undesired operator $R^{b_1}_{2^{-m+1}}$ in (\ref{3.2622}), first use encrypted trapdoor $\henc{pk_2}(t_1)$ to the public key $pk_1$ of MHE.Enc$_{pk_1}(\alpha$) to get $m-1$ encrypted bits $\text{MHE.Enc}_{pk_2}(\alpha_j)$ for $1\leq j \leq m-1$, i.e., a bitwise encryption of angle $\sum^{m-1}_{j=1}\alpha_j2^{-j}$ in Step 2. Then, in Step 3, update this encrypted ($m-1$)-bit angle by evaluating a multi-bit addition (modulo 1) on it:
\begin{equation}
\alpha^{(1)}=\sum^{m-1}_{j=1}\alpha_j2^{-j}+ b_12^{-m+1} \mod 1,
\end{equation}
The result, $\henc{pk_2}(\alpha^{(1)})$, is a bitwise encryption of ($m-1$)-bit binary angle $\alpha^{(1)}\in[0,1)$.\\

If $m=2$, now that $\alpha^{(1)}$ only includes 1-bit: $\alpha^{(1)}=\tn{LSB}(\alpha^{(1)})=\alpha_1\oplus b_1$, the state of (\ref{3.2622}) can be written as $Z^{d_1+b_1+\alpha_1}R^{-1}_{\alpha_1/2+\alpha_2/4}\bk{k}$. By
\begin{equation}
\text{MHE.Enc}_{pk_2}\big(d_1\oplus \tn{LSB}(\alpha^{(1)}) \big)=\text{MHE.Enc}_{pk_2}(d_1)\oplus \text{MHE.Enc}_{pk_2}\big(\tn{LSB}(\alpha^{(1)})\big),
\end{equation}
the theorem holds by setting $d=d_1\oplus\tn{LSB}(\alpha^{(1)})$.

If $m\geq3$, the iteration procedure (Steps 4.2-4.4) is similar to Steps 1-3. In Step 4.2, the angle of $R^{b_{l+1}}_{2^{l+1-m}}$ becomes larger and larger with the increase of $l$, eventually reaching $1/2$ for $l=m-2$. At that time, the undesired operator $R^{b_{m-1}}_{2^{-1}}=Z^{b_{m-1}}$ becomes a Pauli mask. (\ref{5134}) in Step 4.2 can be proved by induction on $l$: For $l=1$, after applying controlled rotation $R^{\tn{LSB}(\alpha^{(l)})}_{-2^{1-m}}$ on $\bk{v_1}$, by Algorithm \ref{20} and (\ref{3.26}), the resulting state is
\begin{align}\label{kaz1}
\bk{v_2}=Z^{d_2}R^{b_2}_{2^{2-m}}R^{-1}_{\tn{LSB}(\alpha^{(1)})2^{1-m}}\bk{v_1}=Z^{d_2+d_1}R^{b_2}_{2^{2-m}}R^{-1}_{(\tn{LSB}(\alpha^{(1)})-b_1)2^{1-m}+\tn{LSB}(\alpha)2^{-m}}\bk{k}.
\end{align}
By agreeing that $d_1=d'_1$, $\alpha=\alpha^{(0)}$, (\ref{kaz1}) is just (\ref{5134}). For $l\geq 2$, after applying controlled rotation $R^{\tn{LSB}(\alpha^{(l)})}_{-2^{l-m}}$ to $\bk{v_{l}}$, the resulting state is
\begin{align}\label{bz1}
\bk{v_{l+1}}&=Z^{d_{l+1}}R^{b_{l+1}}_{2^{l+1-m}}R^{\tn{LSB}(\alpha^{(l)})}_{-2^{l-m}} \bk{v_{l}} \nonumber\\
&= Z^{d_{l+1}+d_{l}+d'_{l-1}}R^{b_{l+1}}_{2^{l+1-m}}R^{-1}_{\tiny{ \sum^{m}_{j=m-l}(\tn{LSB}(\alpha^{(m-j)})-b_{m-j})2^{-j}} } |k\rangle.
\end{align}
By (\ref{cx6}), $d_l+d'_{l-1}=d'_l$ for $l \geq 2$. So, (\ref{bz1}) becomes (\ref{5134}).

Below, we show that the state $\bk{v_{m-1}}$ obtained in (\ref{5134}) is just $Z^{d'_{m-1}+\tn{LSB}\big(\alpha^{(m-1)}\big)}R^{-1}_{\alpha} |k\rangle$. By the expressions of $\alpha^{(l)}$ from (\ref{t1}), (\ref{t2}), the following equality holds:
\begin{equation}
\begin {array}{cccccccc}
&0.&\alpha_1&...&  \alpha_{m-2} & \alpha_{m-1}& \alpha_m\\
+&0.&b_{m-1}&...& b_{2} & b_{1}&  \\
\hline
= &0.&\tn{LSB}(\alpha^{(m-1)}) & ... &\tn{LSB}(\alpha^{(2)}) & \tn{LSB}(\alpha^{(1)})& \alpha_m&
\end{array} \mod 1 ,
\end{equation}
namely,
\begin{align}\label{vx1}
\alpha=\sum^{m}_{j=1}\alpha_j 2^{-j}= \sum^{m}_{j=1}(\tn{LSB}(\alpha^{(m-j)})-b_{m-j})2^{-j} \mod  \quad 1,
\end{align}
where $\alpha^{(0)}=\alpha$, $b_0=0$. Therefore, by $d'_{m-1}=d_{m-1} \oplus d'_{m-2}$,

\begin{align}
\begin{array}{lcl}
|v_{m-1}\rangle&=&Z^{d_{m-1}+d'_{m-2}}R^{b_{m-1}}_{1/2}R^{-1}_{\tiny{ \sum^{m}_{j=2}(\tn{LSB}(\alpha^{(m-j)})-b_{m-j})2^{-j}} } |k\rangle\\
&=&Z^{d'_{m-1}}R^{\tn{LSB}\big(\alpha^{(m-1)}\big)}_{1/2}R^{b_{m-1}-\tn{LSB}\big(\alpha^{(m-1)}\big)}_{1/2}R^{-1}_{\tiny{ \sum^{m}_{j=2}(\tn{LSB}(\alpha^{(m-j)})-b_{m-j})2^{-j}} } |k\rangle\\
&=&Z^{d'_{m-1}+\tn{LSB}\big(\alpha^{(m-1)}\big)}R^{-1}_{\tiny{\sum^{m}_{j=1}(\tn{LSB}(\alpha^{(m-j)})-b_{m-j})2^{-j}} } |k\rangle\\
&\overset{(\ref{vx1})}{=}&Z^{d'_{m-1}+\tn{LSB}\big(\alpha^{(m-1)}\big)}R^{-1}_{\alpha}|k\rangle.
\end{array}
\end{align}
The theorem holds by setting $d=d'_{m-1}+\tn{LSB}\big(\alpha^{(m-1)}\big)$ in $(\ref{23d})$.
%Apply Steps 1,2 iteratively for the new encrypted angle $\henc{pk_2}(\alpha^{(1)})$, as shown in Algorithm \ref{311}. The output is exactly what is required in the theorem statement.$\hfill\blacksquare$

 $\hfill\blacksquare$
\end{proof}

\begin{enumerate}  \item[]
\vspace{0.2cm}
 \begin{breakablealgorithm}
  \hspace{-0.7cm}  \caption{Encrypted Conditional Rotation}\label{311}
        \begin{algorithmic}[0] %每行显示行号
  \State \hspace{-1cm} \textbf{Input:}\  Encrypted trapdoors $\henc{pk_{j+1}}(t_j)$ for $1\leq j \leq m-1$, an encrypted $m$-bit angle \\
  \hspace{0.15cm} $\henc{pk_1}(\alpha)$, and a single-bit state $\bk{k}$.
  \State \hspace{-1cm} \textbf{Output:} \  A state $\bk{v_{m-1}}=Z^{d}R^{-1}_{\alpha}|k\rangle$, and an encrypted bit $\text{MHE.Enc}_{pk_{m}}(d)$.
   \State \hspace{-1cm}1:\quad Use Proposition \ref{345} and ciphertext $\henc{pk_1}(\alpha_m)$ to get two encrypted single bits \\$\henc{pk_2}(d_1,b_1)$, where $d_1,b_1\in\{0,1\}$, and a state
\begin{align}\label{3.26}
\bk{v_1}=Z^{d_1}R^{b_1}_{2^{-m+1}}R^{-1}_{\alpha_m2^{-m}}|k\rangle           .
\end{align}

 \State \hspace{-1cm}2:\quad  Use encrypted trapdoor $\henc{pk_2}(t_1)$ to MHE.Enc$_{pk_1}(\alpha$) to get $m-1$ encrypted bits \\$\text{MHE.Enc}_{pk_2}(\alpha_j)$ for $1\leq j \leq m-1$.

  \State \hspace{-1cm}3:\quad Use $\text{MHE.Enc}_{pk_2}(\alpha_j)$ ($1\leq j \leq m-1$) and $\text{MHE.Enc}_{pk_2}(b_1)$ to get an encryption of $(m-1)$-bit angle $\text{MHE.Enc}_{pk_2}(\alpha^{(1)})$, where
  \begin{equation}\label{t1}
  \alpha^{(1)}=\sum^{m-1}_{j=1}\alpha_j2^{-j}+b_12^{-m+1} \mod 1.
  \end{equation}

   \State \hspace{-1cm}4:\quad \textbf{if} $m=2$ \textbf{then}

   \State \hspace{-0.3cm}4.1:\quad Homomorphically evaluate the XOR gate on $\text{MHE.Enc}_{pk_2}(d_1)$ and \\ $\text{MHE.Enc}_{pk_2}(\alpha^{(1)})$ to get
   $$\text{MHE.Enc}_{pk_2}(d)=\text{MHE.Enc}_{pk_2}(d_1\oplus\tn{LSB}(\alpha^{(1)})).$$

   \State  \quad \hspace{-0.6cm}\textbf{else}

   \State \hspace{0.35cm} \textbf{for} $l$ from $1$ to $m-2$ \textbf{do}:
           \State \begin{enumerate}  \item[]
           \hspace{-0.1cm}4.2:\hspace{0.2cm} By Algorithm \ref{20}, use as control bit the encrypted least significant bit of $\text{MHE.Enc}_{pk_{l+1}}(\alpha^{(l)})$ to realize the controlled rotation $R^{\tn{LSB}(\alpha^{(l)})}_{-2^{l-m}}$ on state $\bk{v_{l}}$. The result is two encrypted bits $\henc{pk_{l+2}}(d_{l+1},b_{l+1})$, where $d_{l+1},b_{l+1}\in\{0,1\}$, and a state of the form
           \begin{align}\label{5134}
            \bk{v_{l+1}}=Z^{d_{l+1}+d'_{l}}R^{b_{l+1}}_{2^{l+1-m}}R^{-1}_{\tiny{ \sum^{m}_{j=m-l}(\tn{LSB}(\alpha^{(m-j)})-b_{m-j})2^{-j}} } |k\rangle,
            \end{align} \hspace{0.6cm} where $d'_1=d_1$, $\alpha^{(0)}=\alpha$, $b_0=0$.\end{enumerate}
        \State \hspace{0.5cm}4.3:\hspace{0.2cm} Set \begin{equation}\label{t2}
\alpha^{(l+1)}=\sum^{m-l-1}_{j=1}\alpha^{(l)}_j2^{-j} + b_{l+1}2^{1+l-m} \mod 1.
\end{equation}
\hspace{1.2cm} Homomorphically compute the encryption of ($m-l-1$)-bit angle\\ \hspace{1.2cm} $\text{MHE.Enc}_{pk_{l+2}}(\alpha^{(l+1)})$.

\State \hspace{0.5cm}4.4:\hspace{0.2cm} Set $d'_{l+1}=d_{l+1}\oplus d'_{l}$. Homomorphically compute
\begin{equation}\label{cx6}
\henc{pk_{l+2}}(d'_{l+1})=\henc{pk_{l+2}}(d_{l+1}\oplus d'_{l}).
\end{equation}
 \State \hspace{0.4cm} \textbf{end for}
 \State \hspace{-0.1cm} 4.5:\quad Set $d=d'_{m-1}+\tn{LSB}\big(\alpha^{(m-1)}\big)$. Homomorphically compute $\text{MHE.Enc}_{pk_{m}}(d)$.

    \State \hspace{-0.7cm} \quad \textbf{end if}
                   \end{algorithmic}
    \end{breakablealgorithm}

\end{enumerate}

\begin{remark} \label{cw1}\tn {The encrypted conditional $P$-gate, i.e, $R_{x/4}$ with the control bit $x\in\{0,1\}$ given in encrypted form, can be implemented (up to Pauli error) using Algorithm \ref{311} by setting $\alpha=0/2+x/4$, and because  $Z^{d}R^{-1}_{\frac{x}{4}}=Z^{d}R^{-1}_{\frac{x}{2}}R_{\frac{x}{4}}=Z^{d-x}R^{x}_{\frac{1}{4}}$. It makes the QHE of \cite{broadbent2015quantum} a QFHE scheme.
Specifically, to evaluate non-Clifford gate $T$, one can directly perform $T$ on the ciphertext $X^{a}Z^{b}\bk{\psi}$, and then perform $P^a$ by the encrypted-CROT.
By $TX^{a}Z^{b}=P^{a}X^{a}Z^{b}T$, the above sequence of operations yield
$$X^{a}Z^{b}\bk{\psi}\xlongrightarrow{T} P^{a}X^{a}Z^{b}T\bk{\psi} \xlongrightarrow{\tn{encrypted-CROT}}  Z^{d-a}P^{a}P^{a}X^{a}Z^{b}T\bk{\psi}=Z^{d+b}X^{a}T\bk{\psi},$$
and the encrypted Pauli keys can be updated by homomorphic arithmetics on \\ MHE.Enc$(a,b,d)$. Now a ``Clifford+T''-style QFHE is obtained.}
%As the corollary of Theorem $\ref{12}$, a BQP server is also able to execute the encryption-controlled operation corresponding to this kind of rotation:
\end{remark}

Theorem \ref{12} implies that with Enc($\alpha$) at hand, one can apply $U(-\alpha,0,0)$ to a quantum state up to a Pauli-$Z$ operator. The following corollary shows how to make use of Enc($\alpha$) to implement $U(0,-\alpha,0)$, i.e., $T_{-\alpha}$ as defined in (\ref{45}).

\begin{cor}\label{4} Consider an angle $\alpha\in[0,1)$ represented in $m$-bit binary form as $\alpha=\displaystyle\sum^{m}_{j=1}2^{-j}\alpha_{j}$, where $\alpha_{j}\in\{0,1\}$. Let $pk_i$ be the public key with trapdoor $t_i$ generated by \textnormal{MHE.Keygen} for $1\leq i\leq m$. Suppose the encrypted trapdoor $\emph{MHE.Enc}_{pk_{j+1}}(t_{j})$ is public for $1\leq j\leq m-1$. Given the bitwise encrypted angle $\emph{MHE.Enc}_{pk_1}(\alpha)$ and a general single-qubit state $|k\rangle$, one can efficiently prepare (within $\lambda$-negligible trace distance) the following state:
\begin{equation}\label{231}
Z^{d}X^{d}T^{-1}_{\alpha}|k\rangle,
\end{equation}
as well as a ciphertext $\textnormal{MHE.Enc}_{pk_{m}}(d)$, where random parameter $d\in\{0,1\}$ depends on quantum measurement.
\end{cor}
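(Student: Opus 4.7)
The plan is to reduce the claim to Theorem~\ref{12} by conjugating $T_\alpha$ into $R_\alpha$ with a fixed Clifford gate. Set $U := PH$; a short computation gives
\begin{equation}
U Z U^\dagger \;=\; P(HZH)P^\dagger \;=\; PXP^\dagger \;=\; Y,
\end{equation}
and, writing $R_\alpha = e^{\mi\pi\alpha} e^{-\mi\pi\alpha Z}$ and $T_\alpha = e^{-\mi\pi\alpha Y}$, one immediately obtains the axis-conjugation identity $U R^{-1}_\alpha U^\dagger = e^{-\mi\pi\alpha} T^{-1}_\alpha$, i.e.\ $T^{-1}_\alpha$ up to a global phase. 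This identity is the pivot of the whole argument and is completely independent of any encrypted data.

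Concretely, first I would apply the local Clifford $U^\dagger = HP^\dagger$ to the input qubit to form $\bk{k'} := U^\dagger \bk{k}$; this step needs nothing encrypted. Next I would invoke Theorem~\ref{12} on $\bk{k'}$ with the given data $\tn{MHE.Enc}_{pk_1}(\alpha)$ and the chain of encrypted trapdoors $\tn{MHE.Enc}_{pk_{j+1}}(t_j)$; it outputs a state within $\lambda$-negligible trace distance of $Z^d R^{-1}_\alpha \bk{k'}$ together with the ciphertext $\tn{MHE.Enc}_{pk_m}(d)$ for a random bit $d\in\{0,1\}$. I would then apply $U = PH$ to the quantum register. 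Since unitaries preserve trace distance, the result is within $\lambda$-negligible trace distance of
\begin{equation}
U Z^d R^{-1}_\alpha U^\dagger \bk{k} \;=\; (UZU^\dagger)^d \, (U R^{-1}_\alpha U^\dagger) \, \bk{k} \;=\; e^{-\mi\pi\alpha}\, Y^d\, T^{-1}_\alpha \bk{k}.
\end{equation}

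To convert the $Y^d$ mask into the advertised $Z^d X^d$ form, note that $Y = \mi XZ$ and $ZX = -XZ$, so $Y = -\mi ZX$ and hence $Y^d = (-\mi)^d Z^d X^d$ for $d\in\{0,1\}$. The output state thus equals $(-\mi)^d e^{-\mi\pi\alpha}\, Z^d X^d T^{-1}_\alpha \bk{k}$, which agrees with $Z^d X^d T^{-1}_\alpha \bk{k}$ up to a global phase and is therefore physically indistinguishable from it (cf.\ Section~\ref{sec2.3}). The required ciphertext $\tn{MHE.Enc}_{pk_m}(d)$ is precisely the one returned by Theorem~\ref{12}, so no further bookkeeping is needed. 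I do not expect any substantive obstacle: the only items to verify are the Clifford identity $UZU^\dagger = Y$ and the phase arithmetic between $Y^d$ and $Z^d X^d$, both of which are routine, and Theorem~\ref{12} is used as a black box.
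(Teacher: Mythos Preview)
Your proposal is correct and is essentially the same argument as the paper's: the paper conjugates $R_\alpha$ by the fixed unitary $S=\tfrac{1}{\sqrt{2}}\begin{bmatrix}1&1\\ \mi&-\mi\end{bmatrix}$, applies Theorem~\ref{12} in the middle, and then pushes the $Z^d$ mask through $S$ via $SZ^d=(-\mi)^dZ^dX^dS$. Since $PH=\tfrac{1}{\sqrt{2}}\begin{bmatrix}1&1\\ \mi&-\mi\end{bmatrix}=S$, your conjugating unitary $U=PH$ is literally the paper's $S$, and your derivation via $UZU^\dagger=Y$ and $Y^d=(-\mi)^dZ^dX^d$ is just a slightly different bookkeeping of the same phase identity.
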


\begin{proof}
Let \begin{equation}
S=\frac{1}{\sqrt{2}}\left[               %左括号
  \begin{array}{cc}
1 & 1\\
\mi & -\mi
  \end{array}
\right].
\end{equation}
Then for any $\alpha\in[0,1)$,
\begin{equation}\label{xc1}
T_{\alpha}=e^{-\mi\pi\alpha}SR_{\alpha}S^{-1}.
\end{equation}
To prepare $T^{-1}_{\alpha}\bk{k}$ up to Pauli operator, first act $S^{-1}$ on $|k\rangle$. Then by Theorem $\ref{12}$, use MHE.Enc$_{pk_1}(\alpha)$ to prepare
\begin{equation}\label{3}
Z^{d}R^{-1}_{\alpha}S^{-1}|k\rangle.
\end{equation}
Finally, act $S$ on (\ref{3}) to get (\ref{231}) (after ignoring a global phase factor), because
\begin{equation}
SZ^{d}R^{-1}_{\alpha}S^{-1}\bk{k}=(-\mi)^{d}e^{-\mi\pi\alpha}Z^{d}X^{d}T^{-1}_{\alpha}\bk{k},
\end{equation}
where the equation is by combining (\ref{xc1}) and the fact that for any $d\in\{0,1\}$,
\begin{equation}
SZ^{d}=(-\mi)^{d}Z^{d}X^{d}S.
\end{equation}
$\hfill\blacksquare$
\end{proof}

\subsection{Encrypted Conditional Unitary Operator on Single Qubit}

The following is the main result of this paper:

\begin{thm}\label{75}
Let $m$-bit binary fractions $\alpha,\beta,\gamma\in[0,1)$ be the Euler angles of a $2\times 2$ unitary $U$, that is, $U=R_{\alpha}T_{\beta}R_{\gamma}$. Let $pk_i$ be the public key with trapdoor $t_i$ generated by \textnormal{MHE.Keygen} for $1\leq i\leq 3m$. Suppose the encrypted trapdoor $\emph{MHE.Enc}_{pk_{j+1}}(t_{j})$ is public for $1\leq j\leq 3m-1$. Given the ciphertexts \emph{MHE.Enc}$_{pk_1}(\alpha,\beta,\gamma)$ and a general one-qubit state $|k\rangle$, one can efficiently prepare ciphertexts $\emph{MHE.Enc}_{pk_{3m}}(d_1,d_2)$, where random parameters $d_1,d_2\in\{0,1\}$, and a state within $\lambda$-negligible trace distance to
\begin{equation}
 Z^{d_1}X^{d_2}U^{-1}|k\rangle.
\end{equation}
\end{thm}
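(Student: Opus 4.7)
The plan is to apply the three factors of $U^{-1} = R_\gamma^{-1} T_\beta^{-1} R_\alpha^{-1}$ to $\bk{k}$ in sequence, starting from the innermost factor, by invoking Theorem~\ref{12} for $R_\alpha^{-1}$, Corollary~\ref{4} for $T_\beta^{-1}$, and Theorem~\ref{12} again for $R_\gamma^{-1}$. Each invocation consumes $m$ of the public keys and leaves a fresh $Z$ or $ZX$ Pauli mask on the left; the key trick is to push each mask behind the next factor by first sign-flipping the upcoming angle under homomorphic arithmetic on the encrypted bits already in hand.

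Concretely, first invoke Theorem~\ref{12} with $\tn{MHE.Enc}_{pk_1}(\alpha)$ and keys $pk_1,\ldots,pk_m$ to obtain, within $\lambda$-negligible trace distance, the state $Z^{e_1}R_\alpha^{-1}\bk{k}$ together with $\tn{MHE.Enc}_{pk_m}(e_1)$. Promoting $\tn{MHE.Enc}_{pk_1}(\beta)$ to key level $pk_m$ via the trapdoor chain, I would next homomorphically form the bitwise encryption of $\beta' \in [0,1)$ equal to $(-1)^{e_1}\beta \bmod 1$ (select between $\beta$ and its bitwise complement plus $2^{-m}$ according to $e_1$), then invoke Corollary~\ref{4} with $\tn{MHE.Enc}(\beta')$ and keys $pk_{m+1},\ldots,pk_{2m}$. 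The output is close to $Z^{e_2}X^{e_2}T_{\beta'}^{-1}Z^{e_1}R_\alpha^{-1}\bk{k}$; the identity $ZT_\beta Z = T_{-\beta}$ gives $T_{\beta'}^{-1}Z^{e_1} = Z^{e_1}T_\beta^{-1}$ up to a global phase, and $XZ = -ZX$ collapses the mask to $Z^{e_1\oplus e_2}X^{e_2}$, so the state becomes $Z^{e_1\oplus e_2}X^{e_2}T_\beta^{-1}R_\alpha^{-1}\bk{k}$. An analogous third round---homomorphically computing $\gamma' = (-1)^{e_2}\gamma \bmod 1$ and invoking Theorem~\ref{12} with keys $pk_{2m+1},\ldots,pk_{3m}$---prepends $Z^{e_3}R_{\gamma'}^{-1}$. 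Because $R_{\gamma'}^{-1}$ is diagonal it commutes with $Z^{e_1\oplus e_2}$, and the direct calculation $XR_\gamma^{-1}X = e^{-2\mi\pi\gamma}R_\gamma$ yields $R_{\gamma'}^{-1}X^{e_2} = X^{e_2}R_\gamma^{-1}$ up to a global phase. Collecting the Paulis produces the target state $Z^{d_1}X^{d_2}U^{-1}\bk{k}$ with $d_1 = e_1\oplus e_2\oplus e_3$ and $d_2 = e_2$, while the ciphertext $\tn{MHE.Enc}_{pk_{3m}}(d_1,d_2)$ is assembled by XORing the three encrypted bits after promoting them all to $pk_{3m}$ along the trapdoor chain.

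The main obstacle is the bookkeeping of Pauli-to-angle conversions: each flip $\beta \mapsto (-1)^{e_1}\beta$ and $\gamma \mapsto (-1)^{e_2}\gamma$ must be realised as a shallow homomorphic circuit on the bitwise encryption (a controlled bitwise complement plus a single-bit increment suffices), and all ciphertexts entering a given subroutine must live under the same public key, which is handled uniformly by the encrypted trapdoor chain assumed in the hypothesis. Error control is then immediate: each of the three sub-invocations contributes $\lambda$-negligible trace distance by Theorem~\ref{12} and Corollary~\ref{4}, and three negligible errors compose to a negligible one by the triangle inequality.
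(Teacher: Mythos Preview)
Your proposal is correct and follows essentially the same approach as the paper: apply Theorem~\ref{12} for $R_\alpha^{-1}$, then Corollary~\ref{4} for $T_{(-1)^{e_1}\beta}^{-1}$, then Theorem~\ref{12} for $R_{(-1)^{e_2}\gamma}^{-1}$, with the sign-flips computed homomorphically exactly as you describe so that each new Pauli mask can be commuted past the remaining factors. Your final assignment $d_1=e_1\oplus e_2\oplus e_3$, $d_2=e_2$ agrees with the paper's computation (the paper's labeling of $d_1,d_2$ at the very end is swapped, but the content is identical).
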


\begin{proof}
We prove the theorem by providing a BQP algorithm in Algorithm \ref{411} below. By Theorem 3.3, in step 1 of Algorithm \ref{411}, by performing an encrypted conditional phase rotation $R^{-1}_{\alpha}$ on state $|k\rangle$, one obtains an encrypted bit $\textnormal{MHE.Enc}_{pk_m}(w_1)$, where $w_1\in\{0,1\}$, and a state
\begin{align}\label{zz99}
Z^{w_1}R^{-1}_{\alpha}\bk{k}=Z^{w_1}R^{-1}_{\alpha}(R_{\alpha}T_{\beta}R_{\gamma})(R_{\alpha}T_{\beta}R_{\gamma})^{-1}\bk{k}&=Z^{w_1}T_{\beta}R_{\gamma}U^{-1}|k\rangle\nonumber\\
&= T_{(-1)^{w_1}\beta}Z^{w_1}R_{\gamma}U^{-1}|k\rangle,
\end{align}
where the last equality comes from $T_{\beta}Z=ZT_{-\beta}$.

In step 2, to prepare the ciphertext MHE.Enc$_{pk_m}((-1)^{w_1}\beta \mod 1)$, the algorithm first homomorphic evaluates the bitwise XOR of MHE.Enc$_{pk_m}(\sum^{m}_{j=1}2^{-j}w_1)$ and MHE.Enc$_{pk_m}(\beta)$, then homomorphically adds MHE.Enc$_{pk_m}$\\$(w_12^{-m})$ to the result.

In step 3, by applying encrypted controlled rotation $T^{-1}_{(-1)^{w_1}\beta \mod 1}$ to the state (\ref{zz99}), and using the relations, up to a global phase factor, $ZX=XZ$ and $T_{-\beta \mod 1}=T_{-\beta}$ for any $\beta\in[0,1)$, one gets that
\begin{align}\label{zz6}
Z^{w_2}X^{w_2}T^{-1}_{(-1)^{w_1}\beta}T_{(-1)^{w_1}\beta}Z^{w_1}R_{\gamma}U^{-1}|k\rangle=X^{w_2}Z^{w_1+w_2}R_{\gamma}U^{-1}|k\rangle.
\end{align}

In step 4, since for any $\gamma\in[0,1)$, it holds that $R_{\gamma}X=e^{2\pi\mi\gamma}XR_{-\gamma}$, $R_{-\gamma \mod 1}=R_{-\gamma}$ and $R_\gamma Z=Z R_\gamma$, the result of performing encrypted phase rotation $R^{-1}_{(-1)^{w_2}\gamma}$ on (\ref{zz6}) is:
\begin{align}
Z^{w_3}R^{-1}_{(-1)^{w_2}\gamma}X^{w_2}Z^{w_1+w_2}R_{\gamma}U^{-1}|k\rangle=X^{w_2}Z^{w_1+w_2+w_3}U^{-1}|k\rangle.
\end{align}\\
The ciphertext MHE.Enc$_{pk_{3m}}(w_2)$ can be produced by using MHE.Enc$_{pk_{2m}}(w_2)$ and encrypted trapdoors \\  MHE.Enc$_{pk_{j+1}}(t_j)$ ($2m \leq j \leq 3m-1$). The ciphertext MHE.Enc$_{pk_{3m}}(w_1\oplus w_2\oplus w_3)$ is obtained by applying homomorphic XOR operators on MHE.Enc$_{pk_{3m}}(w_1,w_2,w_3)$. The theorem holds by setting $d_1=w_2$ and $d_2=w_1\oplus w_2\oplus w_3$. $\hfill\blacksquare$
\end{proof}

\vspace{0.5cm}
    \begin{breakablealgorithm}
        \caption{Encrypted Conditional Unitary Operator on Single Qubit}\label{411}
        \begin{algorithmic}[1] %每行显示行号
            \Require Encrypted trapdoors $\henc{pk_{j+1}}(t_j)$ for $1\leq j \leq 3m-1$, encrypted $m$-bit Euler angles
            \Statex $\henc{pk_1}(\alpha,\beta,\gamma)$, and a single-qubit state $\bk{k}$.
            \Ensure Two encrypted bits $\tn{MHE.Enc}_{pk_{3m}}(d_1,d_2)$, and a state $Z^{d_1}X^{d_2}U(\alpha,\beta,\gamma)^{-1}|k\rangle$.
            \State Perform the MHE.Enc$_{pk_1}(\alpha)$-controlled phase rotation $R^{-1}_{\alpha}$ on state $|k\rangle$. The result is a state
\begin{align}\label{99}
T_{(-1)^{w_1}\beta}Z^{w_1}R_{\gamma}U^{-1}|k\rangle,\end{align}
together with the encrypted Pauli-key $\textnormal{MHE.Enc}_{pk_m}(w_1)$, where $w_1\in\{0,1\}$.
           \State Use MHE.Enc$_{pk_m}(w_1, \beta)$ to get ciphertext MHE.Enc$_{pk_m}((-1)^{w_1}\beta \mod 1)$ by homomorphic computation.
           \State Apply Corollary \ref{4} to (\ref{99}) with the encrypted angle MHE.Enc$_{pk_m}((-1)^{w_1}\beta \mod 1)$. The output is a state
\begin{align}\label{6}
X^{w_2}Z^{w_1+w_2}R_{\gamma}U^{-1}|k\rangle,
\end{align}
together with an encryption MHE.Enc$_{pk_{2m}}(w_2)$, where $w_2\in\{0,1\}$.
\State  Apply the MHE.Enc$_{pk_{2m}}((-1)^{w_2}\gamma \mod 1)$-controlled encrypted phase rotation \Statex $R^{-1}_{(-1)^{w_2}\gamma \mod 1}$ on state (\ref{6}). The result is an encrypted bit MHE.Enc$_{pk_{3m}}(w_3)$, and a state
\begin{align}
X^{w_2}Z^{w_1+w_2+w_3}U^{-1}|k\rangle.
\end{align}
\State Set $d_1=w_2$, $d_2=w_1+w_2+w_3$. Homomorphically compute MHE.Enc$_{pk_{3m}}(w_1\oplus w_2\oplus w_3)$.
                   \end{algorithmic}
    \end{breakablealgorithm}

\section{The Components of Our QFHE Scheme}\label{secc4}
\subsection{Quaternion one-time pad Encryption (QOTP)}\label{sec3.1}
\begin{enumerate}
\item[]
\setlength{\itemsep}{7pt}

\vspace{0.3cm}
\textbf{$k$-bit representation of unitary operator.} Given a unitary $U_{\vec{t}}$ where $\vec{t}\in \mS^3$, let $\vec{t'}\in\R^4$, whose elements in binary form are the sign bit and the $k$ most significant bits in the binary representation of the corresponding elements of $\vec{t}$. We call $U_{\vec{t'}}$ \emph{the $k$-bit finite precision representation of unitary} $U_{\vec{t}}$. Note that $U_{\vt'}$ is only a linear operator, not a unitary one.

\vspace{0.3cm}
\textbf{Unitary approximation of $k$-bit precision linear operator.} Given a linear operator $U_{\vec{t}}$, where each element of $\vec{t}\in \R^4$ is a $k$-bit binary fraction, the \emph{unitary approximation of} $U_{\vt}$ is $U_{\vt'}$, where $\vt'\in\mS^3$ is defined by (\ref{q1}), (\ref{q2}) in Lemma \ref{2.2}, such that $||U_{\vec{t}}-U_{\vec{t}'}||_2\leq\sqrt{3 \big{|} \nrm{\vt}-1 \big{|}}$ when $\big{|}\nrm{\vec{t}}-1\big{|}$ is small.
\vspace{0.3cm}

\end{enumerate}

%\footnote{Starting from a $4$-dimensional vector $\vec{t'}=\vec{0}$, assign values $t'_i=t_i$ for $i$ from $1$ to $4$, one by one, as much as possible until $\sum^{4}_{i=1}|t'_i|^2=1$, cf. (\ref{q1}). If $\nrm{\vt}\leq1$, modify $t'_4$ to meet $\nrm{\vec{t'}}=1$, cf. (\ref{q2}).}

We use the following quaternion one-time pad method to encrypt a single qubit, and encrypt a multi-qubit state qubitwise.

\begin{itemize}
\setlength{\itemsep}{5pt}
\item \textbf{Quaternion one-time pad encryption of a single qubit message}
\item[$\bullet$] QOTP.Keygen($k$). Sample three classical $k$-bit binary fractions $(h_1,h_2,h_3)$ uniformly at random, where $h_i \in[0,1)$ and $\sum^{3}_{i=1}h_i^2 \leq1$. Compute a $k$-bit binary fraction approximate to $\sqrt{1-\sum^{3}_{i=1} h^2_i}$, and denote it by $h_4$. Output $(t_1,t_2,t_3,t_4)$, which is a random permutation of $(h_1,h_2,h_3,h_4)$ followed by multiplying each element with 1 or $-1$ of equal probability. Notice that $\sum^{4}_{i=1}t^2_i\neq1$ in general.
\item[$\bullet$] QOTP.Enc($(t_1,t_2,t_3,t_4)$, $\bk{\phi}$). Apply the unitary approximation of linear operator $U_{(t_1,t_2,t_3,t_4)}$ to single-qubit state $\bk{\phi}$ and output
the resulting state $|\hat{\phi}\rangle$.
\item[$\bullet$] QOTP.Dec($(t_1,t_2,t_3,t_4)$, $|\hat{\phi}\rangle$). When $\sum^{4}_{i=1}|t_i|^2=1$, apply the inverse, $U_{(t_1,-t_2,-t_3,-t_4)}$, of $U$ to $|\hat{\phi}\rangle$. If $\sum^{4}_{i=1}|t_i|^2 \neq 1$, apply the unitary approximation of $U_{(t_1,-t_2,-t_3,-t_4)}$ to $|\hat{\phi}\rangle$.
\end{itemize}

%\footnote{In short, QOTP.Keygen is to sample four $k$-bit binary numbers at random, such that the sum of squares of them is within $2/2^k$ to 1. }

%The information-theoretic security of our encryption scheme is guaranteed by
%\begin{lem}\emph{(Information-theoretic security)}
%Let $\vec{t}=(t_i)$ be a $4$-dimensional vector whose each element is represented by $k$-bits as $t_i=\sum^{k}_{j=1}2^{-j}t_{ij}$, where $1\leq i\leq4$ and $t_{ij}\in\{ 0,1\}$. Let $U_{\vec{t}}$ be defined as in Definition %\ref{2.1}. For any single-qubit system with a density matrix $\rho$,
%\begin{equation}
%\frac{1}{2^{4k}}\sum_{t\in \{0,1\}^{4k}} U_{\vec{t}}\rho U^{-1}_{\vec{t}}=\frac{\I_2}{2}
%\end{equation}
%\end{lem}

The motivation for introducing QOTP is that our evaluation algorithms require the use of multi-bit pad keys, while the previous Pauli-OTP encryption only provides pad keys of 1-bit. We extend Pauli-OTP encryption to a version of multi-bit pad keys. Noting that $U_{(1,0,0,0)}=Z^0 X^0$, $U_{(0,1,0,0)}= X$, $U_{(0,0,1,0)}= Z$, $U_{(0,0,0,1)}= XZ$, Pauli-OTP encryption is an extreme case of QOTP encryption with a bit size $k=1$. Our QFHE also allows to use Pauli-OTP for encryption first, and then expand the key length during evaluations.

The following lemma guarantees the information-theoretic security of the QOTP encryption scheme.
\begin{lem}\label{2.5}\emph{(Information-theoretic security)}
Let $M$ be the set of all possible output vectors of the QOTP.Keygen, and let the probability of outputting vector $\vec{t}\in\R^4$ be $p(\vec{t})$, where the elements of $\vec{t}$ are $k$-bit binary fraction. For any single-qubit system with density matrix $\rho$,
\begin{equation}\label{4.1}
\sum_{\vec{t} \in M} p(\vec{t}) U_{\vec{t}}\rho U^{-1}_{\vec{t}} = \frac{\I_2}{2}.
\end{equation}

\end{lem}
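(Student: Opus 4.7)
The plan is to decompose the randomness of \tn{QOTP.Keygen} to match its three stages---the nonnegative values $(h_1,h_2,h_3,h_4)$, the random permutation $\pi\in S_4$, and the independent sign choices $\vec s\in\{\pm 1\}^4$---and to condition on $(h_1,\ldots,h_4)$, so that it suffices to verify the identity after averaging over $\pi$ and $\vec s$ alone. Writing $c=h_1^2+\cdots+h_4^2=\sum_i t_i^2$ for every resulting $\vt$, I would first verify by direct quaternion multiplication via (\ref{21}) with $\vec k=(t_1,-t_2,-t_3,-t_4)$ that $U_{\vt}\,U_{(t_1,-t_2,-t_3,-t_4)}=c\,\I_2$, so that $U_{\vt}^{-1}=\tfrac{1}{c}(t_1\I_2-t_2\sigma_1-t_3\sigma_2-t_4\sigma_3)$; note that the construction of $h_4$ forces $c>0$, since $h_4=0$ can only occur when $\sum_{i\le 3}h_i^2$ is close to $1$, in which case some $h_i$ is nonzero.

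The first step is the sign twirl. Writing $U_{\vec s\odot\vt}=\sum_{j=1}^{4}s_j t_j A_j$ with $A_1=\I_2$, $A_2=\sigma_1$, $A_3=\sigma_2$, $A_4=\sigma_3$, and correspondingly $U_{\vec s\odot\vt}^{-1}=\tfrac{1}{c}\sum_k\epsilon_k s_k t_k A_k$ with $\epsilon=(1,-1,-1,-1)$, the product $U_{\vec s\odot\vt}\,\rho\,U_{\vec s\odot\vt}^{-1}$ is a double sum indexed by $(j,k)$. Using $\tfrac{1}{16}\sum_{\vec s\in\{\pm 1\}^4}s_js_k=\delta_{jk}$ to kill all cross terms, and then $\sigma_j=\mi P_j$ (so that $-\sigma_j\rho\sigma_j=P_j\rho P_j$ for $P_j\in\{X,Z,Y\}$) to rewrite the surviving diagonal terms in the standard Pauli basis, I obtain
\[
\frac{1}{16}\sum_{\vec s}U_{\vec s\odot\vt}\,\rho\,U_{\vec s\odot\vt}^{-1}=\frac{1}{c}\bigl(t_1^2\rho+t_2^2\,X\rho X+t_3^2\,Z\rho Z+t_4^2\,Y\rho Y\bigr).
\]

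The second step is the permutation twirl. In the above expression the coefficient in the $i$-th Pauli slot is $h_{\pi^{-1}(i)}^2$; since each $h_j^2$ lands in each slot with probability $1/4$ under a uniform $\pi\in S_4$, the $\pi$-average replaces every coefficient by $\tfrac{1}{4}\sum_j h_j^2=c/4$. The conditional average therefore collapses to
\[
\frac{c/4}{c}\bigl(\rho+X\rho X+Y\rho Y+Z\rho Z\bigr)=\frac{\I_2}{2}
\]
by the standard single-qubit Pauli twirl. Because this equality holds for every fixed $(h_1,\ldots,h_4)$, the unconditional expectation is also $\I_2/2$, establishing (\ref{4.1}).

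The only nontrivial bookkeeping I anticipate is handling the nonunitarity of $U_{\vt}$: one must use $U_{\vt}^{-1}$ rather than $U_{\vt}^{\dagger}$ consistently (the two differ by the factor $c$ when $\|\vt\|\ne 1$), and track the signs $\epsilon_k$ carefully so that the sign-averaged expression emerges in the standard Pauli basis rather than the quaternionic $\sigma_j$-basis. Once the sign twirl diagonalizes the channel and the permutation twirl symmetrizes the weights, the remaining calculation is just the ordinary depolarizing identity, and no further property of the distribution of $(h_1,h_2,h_3)$ is required.
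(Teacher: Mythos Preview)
Your argument is correct. Both your proof and the paper's ultimately rest on the sign/permutation symmetry of $p(\vt)$ together with the standard single-qubit Pauli twirl $\tfrac14\sum_{P\in\{\I,X,Y,Z\}}P\rho P=\tfrac{\I_2}{2}$, but the organization is genuinely different. You condition on $(h_1,\ldots,h_4)$ and compute the conditional expectation in two explicit stages: the sign twirl kills the $A_j\rho A_k$ cross terms and yields a Pauli-diagonal channel with weights $t_i^2/c$, and the permutation twirl then equalizes those weights to $1/4$. The paper instead never computes the intermediate channel; it observes directly that left-multiplication by $\sigma_1^a\sigma_2^b$ sends $U_{\vt}$ to $U_{g(\vt)}$ for a signed permutation $g$ that preserves $p$, so the full average $\Phi(\rho)=\sum_{\vt}p(\vt)U_{\vt}\rho U_{\vt}^{-1}$ satisfies $X^aZ^b\,\Phi(\rho)\,Z^{-b}X^{-a}=\Phi(\rho)$ for all $a,b$, whence $\Phi(\rho)=\tfrac12\tn{tr}(\Phi(\rho))\I_2=\tfrac{\I_2}{2}$. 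Your route is more explicit and makes clear exactly how each piece of the randomness is used (and in particular makes visible the role of the factor $1/c$ when $\|\vt\|_2\neq 1$); the paper's route is shorter and uses only the weaker fact that $p$ is invariant under signed permutations, not the full product structure of the randomness.
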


%\footnote{To see the correctness of (\ref{332}), notice that if $g^{*}(\vt_1)=\vt_2$, where $\vt_1,\vt_2 \in M$, $g^{*}\in S_4$, supposing $\vt_1$ is generated by applying permutation to some $\vec{h}\in M$ in QOTP.Keygen, then it holds that $\tn{card} (\{g|g(\vec{h})=\vec{t_1},g\in S_4\})=\tn{card}( \{g|g(\vec{h})=\vec{t_2},g\in S_4\} )$, resulting in p$(\vt_1)$=p$(\vt_2)$.}

\begin{proof}
Let $S_4$ be the 4-th order symmetric group. From the symmetry in generating $\vec{t}=(t_1,t_2,t_3,t_4)$, we get that the probability function $p$ satisfies:
\begin{align}
&p(\vec{t})=p(\vec{t'}), \quad \forall\ \vt,\ \vt'\in M\ s.t. \ (|t_1|,|t_2|,|t_3|,|t_4|)=(|t'_1|,|t'_2|,|t'_3|,|t'_4|),\label{331} \\
&p(\vec{t})=p(g(\vec{t}) ), \quad \forall\ g \in S_4,\ \vec{t}\in M \label{332}.
\end{align}
It is not difficult to verify that for any matrix $A=\begin{bmatrix*}[r]a_{11} &\ a_{12} \\ a_{21} &\ a_{22} \end{bmatrix*}\in\C^{2\times2}$,
\begin{align}\label{313}
\frac{1}{4}\sum_{ a,b\in\{0,1\}}X^{a}Z^{b}AZ^{-b}X^{-a}=\frac{1}{2}\sum_{ a\in\{0,1\}}X^{a}\begin{bmatrix*}[r]a_{11} & \\ &\ a_{22} \end{bmatrix*} X^{-a}=\frac{tr(A)}{2}\I_2.
\end{align}
where $X=\mi\sigma_1$, $Z=\mi\sigma_2$ are Pauli matrices.
By (\ref{331}), (\ref{332}) and (\ref{341}), for any $a,b\in\{0,1\}$, any matrix $A\in\C^{2\times2}$,
\begin{align}\label{323}
\sum_{\vt\in M}p(\vt) X^{a}Z^{b} U_{\vt}AU^{-1}_{\vt} Z^{-b}X^{-a}&=\sum_{\vt\in M}p(\vt) \sigma_1^{a}\sigma_2^{b} U_{\vt}A (\sigma_1^{a}\sigma_2^{b} U_{\vt})^{-1} \\
&= \sum_{\vt\in M}p(\vt) U_{g_{\sigma^a_1,\sigma^b_2}(\vt)}AU^{-1}_{g_{\sigma^a_1,\sigma^b_2}(\vt)}\\
&= \sum_{\vt\in M}p(\vt) U_{\vt}AU^{-1}_{\vt}, \label{zxx1}
\end{align}
where $g_{\sigma^a_1,\sigma^b_2}(\vec{t})=(t'_1,t'_2,t'_3,t'_4)$ such that:
\begin{align}
\sigma^a_1\sigma^b_2(t_1+t_2\sigma_1+t_3\sigma_2+t_4\sigma_3)=t'_1+t'_2\sigma_1+t'_3\sigma_2+t'_4\sigma_3.
\end{align}
Combining (\ref{313}) and (\ref{zxx1}) gives
\begin{align}
\sum_{t\in M} p(\vt) U_{\vec{t}}\rho U^{-1}_{\vec{t}}&=\frac{1}{4}\sum_{\substack{ \vec{t} \in M\\ a,b\in\{0,1\}}} p(\vt) X^{a}Z^{b}U_{\vec{t}} \rho U^{-1}_{\vec{t}}Z^{-b}X^{-a}\nonumber\\
&=\frac{1}{4}\sum_{ a,b\in\{0,1\}}X^{a}Z^{b}\left(\sum_{\vec{t} \in M}p(\vt)U_{\vec{t}} \rho U^{-1}_{\vec{t}}\right)Z^{-b}X^{-a}\nonumber\\
&=tr(\displaystyle\sum_{\vec{t} \in M}p(\vt) U_{\vec{t}} \rho U^{-1}_{\vec{t}}) \frac{\I_2}{2}=\frac{\I_2}{2},
\end{align}
where the last equality follows from
\begin{align}
tr( U_{\vt} \rho U^{-1}_{\vec{t}})=tr(\rho)=1, \qquad  \qquad  \forall \vt \in M.
\end{align} $\hfill\blacksquare$
\end{proof}

A common confusion is why the length of pad key is irrelevant to the security. Particularly, it seems impossible to securely encrypt 1-qubit whose amplitudes store a continuous number of information by just a finite-length pad key (even 1-bit pad key of Pauli-OTP). This intuition ignores an important quantum nature: extracting information from the amplitude of a 1-qubit is hard. Specifically, each 1-qubit collapses after a single measurement with an outcome $\bk{0}$ or $\bk{1}$. To extra the information in amplitudes, numerous copies of the same 1-qubit are required to be measured.

The concrete meaning behind (\ref{4.1}) is that the encrypted numerous copies of arbitrary 1-qubit (with each copy encrypted by independently random pads) form the same system, which can not be distinguished by measurements on any basis. So, in the current quantum setting, QOTP is as secure as Pauli-OTP, regardless of how long the pad key is.

From the pure mathematical model lens, QOTP has a potential advantage that it can provide security against super adversaries who can directly read the amplitude of qubits with a single measurement \cite{walborn2006experimental}, whereas Pauli-OTP must fail to be secure against such an adversary.

The following lemma guarantees that the decryption of a ciphertext by QOTP.Dec is correct up to negl$(k)$ $L^2$-distance, where $k$ is the number of bits for representation.

\begin{lem}\label{2.22}\rm{(Correctness)}
Given a unitary operator $U_{\vec{t}}$ where $\vt\in \R^4$ and $||\vec{t}||_2=1$, let $U_{\vec{t'}}$ be the $k$-bit finite precision representation of $U_{\vec{t}}$, and let $U_{\vec{t''}}$ be the unitary approximation of linear operator $U_{\vec{t'}}$. Then
$$ \nrm{  U_{\vec{t''}}- U_{\vec{t}}  } \leq \nrm{  U_{\vec{t'}}- U_{\vec{t}}}\leq{\frac{4}{\sqrt{2^{k}}}}.$$
\end{lem}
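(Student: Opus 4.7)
The approach is to first convert operator norms into Euclidean distances on $\R^4$. I would begin by recording the identity $\nrm{U_{\vec{a}} - U_{\vec{b}}} = \|\vec{a}-\vec{b}\|_2$ for every $\vec{a},\vec{b}\in\R^4$. This follows exactly from the last displayed computation inside the proof of Lemma~\ref{2.2}: expanding $U_{\vec{a}} - U_{\vec{b}} = \sum_{i=0}^{3}(a_i-b_i)\sigma_i$ with $\sigma_0:=\I_2$, the identities $\sigma_j^2 = -\I_2$ and $\sigma_i\sigma_j = -\sigma_j\sigma_i$ for $i\neq j$ collapse $(U_{\vec{a}} - U_{\vec{b}})^\dagger(U_{\vec{a}} - U_{\vec{b}})$ to $\|\vec{a}-\vec{b}\|_2^2\,\I_2$. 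With this in hand the lemma becomes a purely vectorial statement in $\R^4$.

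For the rightmost inequality, I would bound $\|\vt-\vt'\|_2$ directly from the definition of $\vt'$ as the sign-bit-plus-top-$k$-bits truncation of $\vt$: each $|t_i - t'_i|\le 2^{-k}$, so $\|\vt-\vt'\|_2 \le \sqrt{4\cdot 2^{-2k}} = 2^{1-k}$. Since $2^{1-k}\le 4/\sqrt{2^k}$ for every $k\ge 1$, the quaternion isometry above immediately gives $\nrm{U_{\vt'}-U_\vt}\le 4/\sqrt{2^k}$.

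For the leftmost inequality, I would first control the normalization defect of $\vt'$ by the reverse triangle inequality: since $\|\vt\|_2 = 1$, $\big|\|\vt'\|_2 - 1\big| = \big|\|\vt'\|_2 - \|\vt\|_2\big| \le \|\vt-\vt'\|_2 \le 2^{1-k}$. Applying Lemma~\ref{2.2} to $\vt'$ then supplies $\|\vt''-\vt'\|_2 \le \sqrt{3\cdot 2^{1-k}} = \sqrt{6/2^k}$. The main obstacle I anticipate is upgrading these two separate estimates into the literal stated inequality $\nrm{U_{\vt''}-U_\vt}\le\nrm{U_{\vt'}-U_\vt}$: a bare triangle inequality only delivers $\nrm{U_{\vt''}-U_\vt}\le \sqrt{6/2^k}+2^{1-k}$, which stays below $4/\sqrt{2^k}$ for $k\ge 1$ but is not manifestly dominated by $\|\vt'-\vt\|_2$. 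To close this last gap I would observe that $\|\vt'\|_2\le 1 = \|\vt\|_2$ forces Case~2 of Lemma~\ref{2.2}, so $\vt''$ agrees with $\vt'$ on the first three coordinates and differs only in the fourth; the residual claim then reduces to the scalar inequality $(t_4 - t''_4)^2 \le (t_4 - t'_4)^2$, which I would attack using the normalization identities $t_4^2 = 1-\sum_{i=1}^{3}t_i^2$ and $(t''_4)^2 = 1-\sum_{i=1}^{3}(t'_i)^2$ together with the truncation bound $|t'_i|\le|t_i|$ and the matching sign convention of $t'_4$ and $t_4$.
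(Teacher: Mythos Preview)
Your approach via the quaternion isometry $\nrm{U_{\vec a}-U_{\vec b}}=\|\vec a-\vec b\|_2$, the truncation bound $\|\vt-\vt'\|_2\le 2\cdot 2^{-k}$, and the invocation of Lemma~\ref{2.2} is exactly what the paper does. The paper bounds $\big|\|\vt'\|_2-1\big|\le\|\vt-\vt'\|_2\le 2\cdot 2^{-k}$, applies Lemma~\ref{2.2} with $m=2\cdot 2^{-k}$ to obtain $\|\vt'-\vt''\|_2\le\sqrt{6/2^k}$, and finishes with the triangle inequality
\[
\nrm{U_{\vt}-U_{\vt''}}\le \nrm{U_{\vt'}-U_{\vt}}+\nrm{U_{\vt''}-U_{\vt'}}\le \frac{2}{2^{k}}+\frac{\sqrt{6}}{\sqrt{2^{k}}}\le \frac{4}{\sqrt{2^{k}}}.
\]
So on the substantive bound your proposal and the paper coincide.

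Observe, however, that the paper's proof does \emph{not} establish the literal first inequality $\nrm{U_{\vt''}-U_{\vt}}\le\nrm{U_{\vt'}-U_{\vt}}$; it only shows both quantities are $\le 4/\sqrt{2^{k}}$. Your ``main obstacle'' is genuine and in fact unclosable: that first inequality is false as written. In your Case-2 reduction the scalar claim $(t_4-t''_4)^2\le(t_4-t'_4)^2$ fails whenever $t_4$ is already a $k$-bit fraction (so $t'_4=t_4$ and the right side vanishes) while some $|t'_i|<|t_i|$ for $i\le 3$ forces $|t''_4|=\sqrt{1-\sum_{i\le 3}t'^2_i}>|t_4|$. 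Concretely, take $k=1$ and $\vt=(1/\sqrt2,\,1/\sqrt2,\,0,\,0)$: then $\vt'=(\tfrac12,\tfrac12,0,0)$ and Case~2 gives $\vt''=(\tfrac12,\tfrac12,0,1/\sqrt2)$, so $\|\vt'-\vt\|_2=1-1/\sqrt2\approx 0.29$ while $\|\vt''-\vt\|_2=\sqrt{(1-1/\sqrt2)^2+1/2}\approx 0.77$. The displayed chain in the statement should be read as two separate upper bounds by $4/\sqrt{2^{k}}$; that is all the paper actually proves and all that is used later, so you can stop at the triangle-inequality step and drop the attempt to order $\nrm{U_{\vt''}-U_{\vt}}$ against $\nrm{U_{\vt'}-U_{\vt}}$.
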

\begin{proof}
Let $t_i$ be the $i$-th coordinate of vector $\vec{t}$ for $1\leq i\leq4$. Now that $\vt'$ is the $k$-bit approximation of $\vt$,
\begin{align}
\Big|\ \nrm{\vec{t'}}-\nrm{\vt}\Big|\leq\nrm{\vt'-\vt}\leq \sqrt{\frac{4}{2^{2k}}}=\frac{2}{2^{k}}.
\end{align}
By Lemma \ref{2.2} where $m=\frac{2}{2^{k}}\geq \Big|\nrm{\vt'}-1\Big|$, we have $\nrm{\vt'-\vt''}\leq\frac{\sqrt{6}}{\sqrt{2^{k}}}$, and
\begin{align}
 \nrm{  U_{\vec{t}}- U_{\vec{t''}} } &\leq  \nrm{  U_{\vec{t'}}- U_{\vec{t}} }+ \nrm{  U_{\vec{t''}}- U_{\vec{t'}} } \leq \frac{2}{2^{k}}+\frac{\sqrt{6}}{\sqrt{2^{k}}} \leq \frac{4}{\sqrt{2^{k}}}.
\end{align} $\hfill\blacksquare$
\end{proof}

\subsection{Homomorphic Evaluation of Single-qubit Gates}\label{sec4.2}
Single qubit gates and the CNOT gate are a set of universal quantum gates. We show below how the server evaluates a single-qubit quantum gate homomorphically.

In our QFHE scheme, the server receives a ciphertext that is composed of a quantum message encrypted by QOTP, together with the (classical) QOTP key (\textbf{called the gate key}) encrypted by MHE.
Let the encrypted gate key held by the server be Enc$(\vec{t})$, where $\vec{t}=(t_1,t_2,t_3,t_4)$ is a vector whose elements are $k$-bit binary fractions.

To evaluate a unitary gate whose $k$-bit precision representation is $U_{\vec{k}}$, the server needs to use Enc$(\vec{k})$ and Enc$(\vec{t})$ to compute a new ciphertext Enc$(\vec{t'})$ that satisfies $U_{\vec{t'}}=U_{\vec{t}}U^{-1}_{\vec{k}}$, where $\vec{t}'$ is a 4-dimensional vector whose elements are $k$-bit binary fractions. This can be done by homomorphic computation, according to (\ref{2023}) and (\ref{21}). The ciphertext Enc$(\vec{t'})$ serves as the new encrypted gate key for the next round of evaluation.

%\remark{From (\ref{21}), we can see that the degree of representation-error increases linearly in the number of evaluated gates. Thus, we can correctly decrypt ciphertexts after evaluating polynomial single-bit gates.}

\subsection{Homomorphic Evaluation of the CNOT Gate}\label{zx1}

%\subsubsection{Terminology}
%We use the term `\textbf{Pauli-encrypted state}' to refer to the quantum state encrypted using Pauli one-time pad, and call the secret pad \textbf{Pauli-key}. We use the term `\textbf{QOTP-encrypted state}' to refer to the quantum state encrypted using QOTP, and call the secret encryption key \textbf{gate key}.
\begin{itemize}
\item[]
\textbf{CNOT$_{1,2}$ operation.} For a two-qubit state $\bk{\psi}$, the notation $U\otimes V \bk{\psi}$ refers to performing $U$ on the first qubit of $\bk{\psi}$, and performing $V$ on the second qubit. The notation $\rm{CNOT}_{1,2}$ denotes a CNOT operation with the first qubit as the control and the second qubit as the target.
\vspace{0.3cm}
\end{itemize}

To evaluate the CNOT gate, we first change a QOTP-encrypted state into a Pauli-encrypted state, and output the encryptions of Pauli-keys. Then, we evaluate the CNOT gate on the Pauli-encrypted state by the following relation:
\begin{equation}\label{13}
\tn{CNOT}_{1,2} (X^{a_1}Z^{b_1}\otimes X^{a_2}Z^{b_2})|\psi\rangle= (X^{a_1}Z^{b_1+b_2}\otimes X^{a_2+a_1}Z^{b_2})\tn{CNOT}_{1,2}|\psi\rangle,
\end{equation}
where $|\psi\rangle$ is a two-qubit state, and $a_i, b_i\in\{0,1\}$ are the Pauli keys of the $i$-th qubit for $i=1,2$.

%(see Appendix C in \cite{broadbent2015quantum})
Now, we show how to transform a QOTP-encrypted state into its Pauli-encrypted version. First, with the encrypted gate key MHE.Enc$(\vec{t})$ at hand, one can homomorphically compute Euler angles for unitary operator $U_{\vec{t}}$ according to relation (\ref{w2}). The detailed procedure involves several lemmas, all of which are moved to the Appendix \ref{7.3}. Then, by the encrypted conditional rotation technique, one can transform a QOTP-encrypted state into its Pauli-encrypted version, by the following proposition:
\begin{prop}\label{65}
Let $\vt\in\mS^3$. Given a 1-qubit state $U_{\vec{t}}\bk{\psi}$ and an encrypted gate key MHE.Enc$(\vec{t}')$, where $\vec{t}'\in\R^4$ such that $||\vt'-\vt||_{\infty}=\tn{negl}($k$)$, one can prepare a state within $k$-negligible trace distance to Pauli-encrypted state $Z^{d_1}X^{d_2}\bk{\psi}$, together with the encrypted Pauli keys MHE.Enc$(d_1,d_2)$, where random parameters $d_1,d_2\in \{0,1 \}$.
\end{prop}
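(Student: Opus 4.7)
The plan is to reduce Proposition~\ref{65} to the encrypted conditional unitary operator of Theorem~\ref{75} by first converting the encrypted quaternion gate key into an encrypted Euler-angle representation, and then applying Theorem~\ref{75} to the ciphertext state $U_{\vec{t}}\bk{\psi}$. The approximation error that remains comes only from the gap between $\vec{t}$ and $\vec{t}'$, which is negligible by hypothesis.

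More concretely, first I would homomorphically compute, from $\tn{MHE.Enc}(\vec{t}')$, a ciphertext $\tn{MHE.Enc}(\alpha,\beta,\gamma)$ of $k$-bit Euler angles satisfying $U(\alpha,\beta,\gamma) = U_{\vec{t}'}$ up to a global phase. The conversion formulas are precisely those in (\ref{w2}), so the task reduces to homomorphic evaluation of a fixed arithmetic circuit (involving squares, sums, square roots, inversions, and an arctan-type extraction) applied to the encrypted components of $\vec{t}'$. Since MHE is a classical FHE, this is a bounded-depth classical computation; I would defer the precise construction to the appendix as the paper signals and just note that it yields $\tn{MHE.Enc}(\alpha,\beta,\gamma)$ with $\alpha,\beta,\gamma$ in $k$-bit binary form (together with a suitable convention for the corner case $t_1^2+t_3^2 = 0$, which can be resolved by a conditional branch in the circuit).

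Second, I would invoke Theorem~\ref{75} with the encrypted Euler angles $\tn{MHE.Enc}(\alpha,\beta,\gamma)$ and the single-qubit input state $U_{\vec{t}}\bk{\psi}$. The theorem outputs random bits $d_1,d_2 \in \{0,1\}$, their ciphertexts $\tn{MHE.Enc}(d_1,d_2)$, and a state within $\lambda$-negligible trace distance to
\begin{equation}
Z^{d_1}X^{d_2}\,U(\alpha,\beta,\gamma)^{-1}\,U_{\vec{t}}\bk{\psi} \;=\; Z^{d_1}X^{d_2}\,U_{\vec{t}'}^{-1}\,U_{\vec{t}}\bk{\psi}.
\end{equation}
It remains to argue that $U_{\vec{t}'}^{-1}U_{\vec{t}}\bk{\psi}$ is negligibly close to $\bk{\psi}$. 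Because $\|\vec{t}-\vec{t}'\|_{\infty} = \tn{negl}(k)$ implies $\|\vec{t}-\vec{t}'\|_2 = \tn{negl}(k)$, Lemma~\ref{2.2} and Corollary~\ref{c2.3} give $\|U_{\vec{t}}-U_{\vec{t}'}\|_2 = \tn{negl}(k)$, hence $\|U_{\vec{t}'}^{-1}U_{\vec{t}} - \I_2\|_2 = \tn{negl}(k)$ since $U_{\vec{t}'}$ is unitary. Combining this operator-norm bound with the triangle inequality and the standard bound relating $L^2$-distance to trace distance for pure states (cf.~(\ref{42})) yields a $k$-negligible trace-distance bound between the produced state and $Z^{d_1}X^{d_2}\bk{\psi}$.

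The main obstacle I anticipate is the first step: producing $\tn{MHE.Enc}(\alpha,\beta,\gamma)$ from $\tn{MHE.Enc}(\vec{t}')$ while controlling both the bit-precision blow-up and the decryption depth of MHE. The Euler-angle formulas (\ref{w2}) involve a square root and a complex division, so one must implement them by a fixed-depth approximation circuit and verify that the resulting $k$-bit $(\alpha,\beta,\gamma)$ satisfy $\|U(\alpha,\beta,\gamma)-U_{\vec{t}'}\|_2 = \tn{negl}(k)$; this absorbs into the same negligible error budget and does not affect the overall asymptotic statement. Once this homomorphic conversion is in hand, the rest of the proof is a direct application of Theorem~\ref{75} together with the continuity estimate from Lemma~\ref{2.2}.
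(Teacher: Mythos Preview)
Your proposal is correct and follows essentially the same route as the paper: homomorphically convert the encrypted quaternion key to encrypted Euler angles via the formulas in~(\ref{w2}) (the paper packages this step as Lemma~\ref{6zq} in the appendix), invoke Theorem~\ref{75} on the ciphertext $U_{\vec{t}}\bk{\psi}$, and then bound the residual error using the operator-norm closeness of $U(\alpha,\beta,\gamma)$ to $U_{\vec{t}}$ together with~(\ref{42}). The only cosmetic slip is writing $U(\alpha,\beta,\gamma)^{-1}U_{\vec{t}}\bk{\psi}=U_{\vec{t}'}^{-1}U_{\vec{t}}\bk{\psi}$ as an equality and calling $U_{\vec{t}'}$ unitary: since $\vec{t}'\notin\mS^3$ in general and the Euler angles are $k$-bit truncated, this holds only up to a global phase and a $\tn{negl}(k)$ error, exactly as your final paragraph acknowledges and as the paper handles via the phase-invariance of trace distance.
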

\begin{proof}
Let $\alpha,\beta,\gamma\in[0,1)$ be defined as in (\ref{w2}), such that $U(\alpha,\beta,\gamma)\xlongequal{\tn{i.g.p.f}}U_{\vt}$. Now that $||\vt'-\vt||_{\infty}=$negl($k$). By Lemma \ref{6zq} in the Appendix, from MHE.Enc$(\vec{t}')$ one can produce a ciphertext MHE.Enc\\$(\alpha',\beta',\gamma')$, such that
\begin{equation}\label{bv1}
\nrm{U(\alpha',\beta',\gamma')-e^{\mi \delta}U_{\vec{t}}} =\tn{negl}(k),  \quad \tn{where $e^{\mi \delta}$ is a global phase factor.}
\end{equation}
Theorem \ref{75} allows to use the encrypted angles MHE.Enc$(\alpha',\beta',\gamma')$ to perform
\begin{equation}\label{zxc1}
U_{\vec{t}}\bk{\psi} \rightarrow Z^{d_1}X^{d_2}U(\alpha',\beta',\gamma')^{-1}U_{\vec{t}}\bk{\psi},
\end{equation}
and meanwhile get two encrypted bits MHE.Enc$(d_1,d_2)$. Below, we prove that\\ $U(\alpha',\beta',\gamma')^{-1}U_{\vec{t}}\bk{\psi}$ is within $k$-negligible trace distance from $\bk{\psi}$. First,
\begin{align}\label{zcq1}
||U(\alpha',\beta',\gamma')^{-1}U_{\vec{t}}\bk{\psi}-e^{-\mi \delta} \bk{\psi}||_{H}&= \frac{\sqrt{2}}{2}\nrm{ e^{\mi \delta}U(\alpha',\beta',\gamma')^{-1}U_{\vec{t}}\bk{\psi}-\bk{\psi}} \nonumber\\
& \leq \frac {\sqrt{2}}{2}\nrm{ e^{\mi \delta}U_{\vec{t}}-U(\alpha',\beta',\gamma')} =\tn{negl}(k).
\end{align}
By (\ref{s3}), the trace distance of two states is invariant under a global phase scaling to one of the states, so
\begin{align}
||U(\alpha',\beta',\gamma')^{-1}U_{\vec{t}}\bk{\psi}-\bk{\psi}||_{tr}=||U(\alpha',\beta',\gamma')^{-1}U_{\vec{t}}\bk{\psi}-e^{-\mi \delta}\bk{\psi}||_{tr}=\tn{negl}(k).
\end{align}
 $\hfill\blacksquare$
\end{proof}

%one can produce the ciphertext MHE.Enc$(\alpha,\beta,\gamma)$ such that $\nrm{U(\alpha,\beta,\gamma)-U_{\vec{t'}}}= \tn{negl}(k)$ with O$(k)$ calls of homomorphic multiplications and additions on encryption Enc$(\vec{t'})$. Then, The encrypted control technique (Theorem \ref{75}) allows to use the encrypted angles MHE.Enc$(\alpha,\beta,\gamma)$ and a QOTP-encrypted state $U_{\vec{t}}\bk{\psi}$ ($\bk{\psi}$ is some 1-qubit plaintext) to get a state within $k$-negligible trace distance from some Pauli-encrypted state $X^{d_1}Z^{d_2}\bk{\psi}$, where $d_1,d_2\in \{0,1 \}$, and the encrypted Pauli keys MHE.Enc$(d_1,d_2)$.

%\footnote{For a bit-by-bit encryption, the evaluation of function `max' and `if' are accessible.}.

%The proof is similar to using Chebyshev polynomial to approximate the convolution of truncated ·arccos’ and some smooth cut-off function.
%See lemma shows there exist a poly($\lambda$)-degree circuits to approximate the `arccos' to precision $\textnormal{negl}(\lambda)$.

\begin{remark}\label{cw2}\tn{ The reason why we do not directly adopt the Euler representation of SU$(2)$ at the beginning, but rather use the quaternion representation, is that the latter provides an arithmetic circuit implementation of much smaller depth for the product in SU(2), as shown in (\ref{21}). This change of representation is not necessary in the real representation of quantum circuits and states (cf. \cite{aharonov2003simple}, Lemma 4.6 of \cite{kitaev1997quantum}), where all the involved 1-qubit quantum gates are in SO(2), and in that case, the rotation representation: $\begin{bmatrix}\cos2\pi\alpha & -\sin2\pi\alpha\\ \sin2\pi\alpha & \cos2\pi\alpha \end{bmatrix}$ where $\alpha\in[0,1)$, already provides a low-depth circuit implementation for the product in SO$(2)$.}
\end{remark}

\section{Our Quantum FHE Scheme}\label{secc5}

The design of our QFHE scheme follows the following guideline/idea:
\begin{itemize}
\setlength{\parsep}{6pt}

\item[1.] The client uses the QOTP scheme to encrypt a quantum state (the message), and then encrypts the gate keys with MHE scheme.
\vspace{0.5ex}
\item[2.] The client sends both the encrypted quantum state and the gate keys to the server, and also sends the server the following tools for homomorphic evaluation: the public keys, encrypted secret keys and encrypted trapdoors.
\vspace{0.5ex}
\item[3.] To evaluate a single-qubit gate, the server only needs to update the encrypted gate keys.
\vspace{0.5ex}
\item[4.] To evaluate a CNOT gate on an encrypted two-qubit state:
\vspace{0.5ex}

(4.1) The server first computes the encryptions of the Euler angles of the $2\times 2$ unitary gates that are used to encrypt the two qubits, by homomorphic evaluations on the gate keys.

\vspace{0.3ex}

(4.2) Then, the server applies the encrypted conditional rotations to obtain a Pauli-encrypted state, as well as the encrypted Pauli keys.
\vspace{0.3ex}

(4.3) The server evaluates the CNOT gate on the Pauli-encrypted state, and updates the encrypted Pauli keys according to (\ref{13}).
\vspace{0.3ex}

(4.4) By Lemma \ref{pro1} below, the resulting state in Pauli-encrypted form is (up to a global factor) in natural QOTP-encrypted form. It can be directly used in the next round of evaluation.
\vspace{0.5ex}

\item[5.] During decryption, the client first decrypts the classical ciphertext of the gate keys, then uses the gate keys to decrypt the quantum ciphertext received from the server.
\end{itemize}
\begin{lem}\label{pro1}
For any $x_1,x_2\in\{0,1\}$, any 1-qubit state $|\psi\rangle$, the Pauli-encrypted state $Z^{x_1}X^{x_2}|\psi\rangle$ can be written (up to a global factor) in QOTP-encrypted form as follows:
\begin{align}\label{213}
U_{((1-x_1)(1-x_2),x_2(1-x_1),x_1(1-x_2),-x_1x_2)}|\psi\rangle.
\end{align}
\end{lem}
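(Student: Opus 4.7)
The plan is to prove this by direct case analysis, since the statement only concerns the four possible values of $(x_1,x_2)\in\{0,1\}^2$. For each choice, I would compute the 4-vector appearing in the index on the right-hand side of (\ref{213}), substitute it into the quaternion expression $U_{(t_1,t_2,t_3,t_4)}=t_1\I_2+t_2\sigma_1+t_3\sigma_2+t_4\sigma_3$ from Definition \ref{341}, and compare with $Z^{x_1}X^{x_2}$, tracking only scalar phase factors. Recall that $\sigma_1=\mi X$, $\sigma_2=\mi Z$, $\sigma_3=\mi Y$, so each basis quaternion equals a Pauli matrix up to the global factor $\mi$.

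Concretely, for $(x_1,x_2)=(0,0)$ the index is $(1,0,0,0)$, giving $U_{\vt}=\I_2=Z^0X^0$; for $(0,1)$ the index is $(0,1,0,0)$, so $U_{\vt}=\sigma_1=\mi X$, matching $Z^0X^1$ up to the phase $\mi$; for $(1,0)$ the index is $(0,0,1,0)$, so $U_{\vt}=\sigma_2=\mi Z$, matching $Z^1X^0$ up to the phase $\mi$.

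The only case requiring a moment's attention is $(x_1,x_2)=(1,1)$, where the index is $(0,0,0,-1)$ and hence $U_{\vt}=-\sigma_3=-\mi Y$. Here I would compare with $ZX$ by using $\sigma_1\sigma_2=\sigma_3$, i.e.\ $(\mi X)(\mi Z)=\mi Y$, so $XZ=-\mi Y$, and then the Pauli anticommutation $ZX=-XZ$ yields $ZX=\mi Y=-(-\mi Y)$. Thus $U_{\vt}=-ZX$, which equals $Z^1X^1$ up to the global phase $-1$. Since global phase factors are irrelevant (they cancel in $\ketbra{\psi}{\psi}$, cf.\ the discussion at the start of Section \ref{sec2.3}), this establishes (\ref{213}) in all four cases and completes the proof. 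There is no real obstacle here beyond the sign bookkeeping in the last case, which is why the formula carries the single minus sign in the fourth coordinate.
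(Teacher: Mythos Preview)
Your case-by-case verification is correct, including the sign bookkeeping in the $(1,1)$ case. The paper, however, does not split into cases: it observes that $(\mi Z)^{x_1}=U_{(1-x_1,0,x_1,0)}$ and $(\mi X)^{x_2}=U_{(1-x_2,x_2,0,0)}$ hold as polynomial identities in $x_1,x_2\in\{0,1\}$, writes
\[
Z^{x_1}X^{x_2}=(-\mi)^{x_1+x_2}\big((1-x_1)\I_2+x_1\sigma_2\big)\big((1-x_2)\I_2+x_2\sigma_1\big),
\]
and then expands the product using the quaternion relations (in particular $\sigma_2\sigma_1=-\sigma_3$) to read off the four coordinates directly. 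The advantage of the paper's route is that it yields the global phase uniformly as $(-\mi)^{x_1+x_2}$ and, more to the point of the QFHE scheme, it exhibits each coordinate of the index as an explicit degree-$2$ polynomial in $x_1,x_2$, which is exactly what is needed to compute the new encrypted gate key homomorphically from $\tn{MHE.Enc}(x_1,x_2)$ in step (4.4) of the evaluation. Your enumeration establishes the lemma just as well, but it leaves the reader to reassemble those polynomial expressions from the four separate outcomes.
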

\begin{proof}
Note that $(\mi Z)^{x_1}=U_{(1-x_1,0,x_1,0)}$ and $(\mi X)^{x_2}=U_{(1-x_2,x_2,0,0)}$. Then by (\ref{341}),
\begin{align}
Z^{x_1}X^{x_2}&=(-\mi)^{x_1+x_2}\Big((1-x_1)\I_2+x_1\sigma_2\Big)\Big((1-x_2)\I_2+x_2\sigma_1\Big)\\
&=(-\mi)^{x_1+x_2}U_{((1-x_1)(1-x_2),x_2(1-x_1),x_1(1-x_2),-x_1x_2)}
\end{align}
\end{proof}

\begin{itemize}
\item[] \textbf{Parameters to be used in the scheme:}
\begin{enumerate}
\item Assume the quantum circuit to be evaluated can be divided into $L$ levels, such that each level consists of serval single-qubit gates, followed by a layer of non-intersecting CNOT gates.
\item Let $L_c=L_m+L_s$, where $L_m=$ maximum depth of the quantum circuit composed of all the single-qubit gates in a level, $L_s=$ depth of the classical circuits on the encrypted gate key for homomorphically evaluating a CNOT gate. The MHE scheme is assumed to have the capability of evaluating any $L_c$-depth circuit.
\item Let $k$ be the number of bits used to represent the gate key, i.e, the parameter of QOTP.Keygen($\cdot$) in Section\ref{sec3.1}. A typical choice is $k=\log^2\lambda$, where $\lambda$ is the security parameter.
\end{enumerate}
\end{itemize}

\begin{itemize}
\setlength{\itemsep}{7pt}
\item \rm{\textbf{Our new QHE scheme}}

\item[$\bullet$] \textbf{QHE.KeyGen}($1^\lambda$, $1^L$, $1^k$):
\begin{enumerate}
\item For $1\leq i \leq 3kL+1$, let $(pk_i, sk_i, t_{sk_i}, evk_{sk_i})=\tn{MHE.Keygen}(1^\lambda,1^{L_c})$, where $t_{sk_i}$ is the trapdoor required for randomness recovery from the ciphertext.

\item The public key is $pk_1$, and the secret key is $sk_{3kL+1}$. The other public information includes $evk_{sk_i}$ for $1 \leq i \leq 3kL+1$, and $(pk_{i+1}$, MHE.Enc$_{pk_{i+1}}(sk_i)$, MHE.Enc$_{pk_{i+1}}(t_{sk_i}))$ for $1 \leq i \leq 3kL$.
\end{enumerate}

\item[$\bullet$]\textbf{QHE.Enc}$_{pk_1}(|\psi\rangle)$: Use QOTP to encrypt each qubit of $\bk{\psi}$; for any single-qubit state $\bk{v}$, its ciphertext consists of $U_{\vec{t}}|v\rangle$ and $\textnormal{MHE.Enc}_{pk_1}(\vec{t})$, where the 4$k$-bit gate key $\vec{t}=(t^{(1)}_h,t^{(2)}_h,t^{(3)}_h,t^{(4)}_h)\in\{0,1\}^{4k}$, $h=1,...,k$.\footnote{The gate keys for different qubits of $\bk{\psi}$ are generated independently.}

\item[$\bullet$] \textbf{QHE.Eval}:
 \begin{enumerate}
\item To evaluate a single-qubit unitary $U_{\vec{k}}$ on an encrypted qubit $U_{\vec{t}}|\psi_1\rangle$, only needs to update the encrypted gate key from $\textnormal{MHE.Enc}_{pk_j}(\vec{t})$ to $\textnormal{MHE.Enc}_{pk_j}(\vec{t'})$ where $\vt'=\vt \vec{k}^{-1}$ according to Section \ref{sec4.2}, for some $1\leq j \leq 3kL+1$.
\item To evaluate the CNOT gate on two encrypted qubits $U_{\vec{t_1}}\otimes U_{\vec{t_2}}|\psi_2\rangle$ with encrypted gate key $\textnormal{MHE.Enc}_{pk_j}(\vec{t_1},\vec{t_2})$:
 \begin{enumerate}
\item Compute the Euler angles of unitary operators $U_{\vec{t_1}},U_{\vec{t_2}}$ homomorphically, with the angles represented in $k$-bit binary form to approximate the unitary operators to precision negl$(k)$ (not necessarily $\frac{1}{2^k}$). Denote the encrypted Euler angle 3-tuples of $\vt_1$, $\vt_2$ in binary form by MHE.Enc$_{pk_j}(\vec{\alpha_1},\vec{\alpha_2})$, where $\vec{\alpha_1},\vec{\alpha_2}\in\{0,1\}^{3k}$.
\item Use the encrypted angles $\textnormal{MHE.Enc}_{pk_j}(\vec{\alpha_1},\vec{\alpha_2})$ to apply the corresponding encrypted conditional rotation to the input quantum ciphertext $U_{\vec{t_1}}\otimes U_{\vec{t_2}}|\psi_2\rangle$. The result is a Pauli-encrypted state. The MHE encryption of the Pauli key can also be obtained.
\item Evaluate the CNOT gate on the Pauli-encrypted state according to (\ref{13}). The resulting state is in QOTP-encrypted form, whose encrypted gate key can be computed by homomorphic evaluation according to (\ref{213}).
\end{enumerate}
\end{enumerate}
\item[$\bullet$]\textbf{QHE.Dec}$_{sk_{3kL+1}}(U_{\vec{t}}|\psi\rangle, \tn{MHE.Enc}_{pk_{3kL+1}}(\vec{t}))$: Decrypt the classical ciphertext\\ $\textnormal{MHE.Enc}_{pk_{3kL+1}}(\vec{t})$ to obtain the gate key $\vec{t}$, then apply $U^{-1}_{\vec{t}}=U_{(t_1,-t_2,-t_3,-t_4)}$ to the quantum ciphertext $U_{\vec{t}}|\psi\rangle$ to obtain the plaintext state.
\end{itemize}
\textbf{Leveld FHE property of our QHE scheme.}

\vspace{0.3cm}

We show that any choice of parameter $k$ that satisfies $\frac{1}{2^k}=\tn{negl}(\lambda)$ is sufficient to make the new QHE scheme leveled fully homomorphic.

\begin{thm}
The new \tn{QHE} scheme is a quantum leveled fully homomorphic encryption scheme, if parameter $k=O(\tn{poly}(\lambda))$ satisfies $\frac{1}{2^{k}}$=\tn{negl}$(\lambda)$.
\end{thm}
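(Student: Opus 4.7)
The plan is to verify the three defining properties of a leveled QFHE in sequence: (i) compactness of decryption, (ii) correctness up to negligible trace distance, and (iii) feasibility of evaluating any depth-$L$ quantum circuit given the parameter choices. Compactness is immediate from the description of QHE.Dec: it classically decrypts the MHE ciphertext of the gate key $\vec{t}$ and then applies the fixed-size unitary $U_{(t_1,-t_2,-t_3,-t_4)}$, a circuit whose size depends only on $k=O(\tn{poly}(\lambda))$ and not on the evaluated quantum circuit. So the proof should open with this observation and then reduce the theorem to a correctness bound.

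For correctness, I would do a per-gate analysis, then combine by triangle inequality. Evaluating a single-qubit gate only performs a classical homomorphic update $\vt'=\vt\vec{k}^{-1}$ on the encrypted gate key via the quaternion product formulas (\ref{2023}), (\ref{21}), and introduces no quantum error on the qubit itself; the only slack is the $k$-bit truncation, which by Lemma \ref{2.22} incurs $L^2$-error $O(1/\sqrt{2^k})=\tn{negl}(\lambda)$ per gate. Evaluating a CNOT has three error sources to track: (a) the homomorphic computation of the Euler angles $(\alpha,\beta,\gamma)$ from $\vec{t}$, which by Lemma~\ref{6zq} produces angles $(\alpha',\beta',\gamma')$ such that $\|U(\alpha',\beta',\gamma')-e^{\mi\delta}U_{\vt}\|=\tn{negl}(k)$; (b) the encrypted-CROT of Theorem \ref{75}, which produces a Pauli-encrypted state within $\lambda$-negligible trace distance of the exact one, as quantified by Proposition \ref{65}; and (c) the subsequent Clifford conjugation (\ref{13}) and the re-identification of $Z^{x_1}X^{x_2}$ with a QOTP pad via Lemma \ref{pro1}, both of which are exact. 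Summing these via the triangle inequality for trace distance gives a per-CNOT error of $\tn{negl}(\lambda)$.

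Next I would aggregate the per-gate bounds. Since the evaluated circuit has $L$ levels each containing at most $\tn{poly}(\lambda)$ single-qubit and CNOT gates, the total accumulated trace distance is bounded by (number of gates)$\times\tn{negl}(\lambda)=\tn{poly}(\lambda)\cdot\tn{negl}(\lambda)=\tn{negl}(\lambda)$, using that the composition of unitaries is nonexpansive in trace distance and that local errors on disjoint qubits add. A parallel check is required for the classical side: every update to the encrypted gate key must remain a valid MHE ciphertext. By the choice $L_c=L_m+L_s$ and the fact that MHE.Keygen takes $1^{L_c}$ as input, the MHE scheme correctly evaluates every classical subcircuit involved (angle computation, Pauli-key update, quaternion multiplication, and the bootstrapping-style trapdoor conversions used inside Algorithm \ref{311}); the encrypted trapdoor chain $(pk_{i+1},\textnormal{MHE.Enc}_{pk_{i+1}}(sk_i,t_{sk_i}))$ supplies the fresh public keys consumed by successive encrypted-CROT invocations, of which there are at most $3kL+1=\tn{poly}(\lambda)$.

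Finally, the decryption step contributes one last $O(1/\sqrt{2^k})=\tn{negl}(\lambda)$ error from applying the unitary approximation of $U^{-1}_{\vt}$ (Lemma \ref{2.22}), and the MHE decryption of $\vec{t}$ is correct with overwhelming probability by the choice of modulus in Notation 1. Combining all sources, the final output is within negligible trace distance from $C(\bk{m_1},\ldots,\bk{m_l})$, which is the QFHE correctness requirement. The hypothesis $1/2^{k}=\tn{negl}(\lambda)$ is used in exactly the places where Lemma \ref{2.22}, Lemma \ref{6zq}, and Proposition \ref{65} need the truncation precision to be $\lambda$-negligible; the hypothesis $k=O(\tn{poly}(\lambda))$ ensures that all gate-key ciphertexts and angle computations remain of polynomial size. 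The main obstacle I anticipate is not any single step but the bookkeeping of which public key indexes each MHE ciphertext during a CNOT evaluation, because the encrypted-CROT of Theorem \ref{75} consumes a fresh trapdoor chain of length $3m$ per CNOT; I would handle this by verifying once that the total chain length $3kL+1$ declared in QHE.KeyGen matches the worst-case consumption, after which the error and security bounds follow routinely.
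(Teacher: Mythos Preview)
Your proposal is correct and follows essentially the same approach as the paper: the paper's proof also decomposes correctness into the single-qubit case (invoking Proposition~\ref{2376}, which packages exactly the $k$-bit truncation plus triangle-inequality accumulation you sketch via Lemma~\ref{2.22}) and the CNOT case (invoking Lemma~\ref{6zq} for the Euler-angle computation, Theorem~\ref{75}/Proposition~\ref{65} for the encrypted-CROT conversion, and noting that (\ref{13}) is exact), then sums the negligible per-gate errors over $\tn{poly}(\lambda)$ gates. The only cosmetic difference is that the paper does not spell out compactness or the public-key chain bookkeeping inside this proof, treating those as already handled by the scheme description and the choice of $3kL+1$ keys.
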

\begin{proof}

In the encryption step, we encrypt each qubit by using QOTP, with the $4k$-bit gate key encrypted by MHE in poly$(\lambda)$ time.

%(note that $k$ is at most poly$(\lambda)$ in a PPT algorithm),

Due to the $k$-bit finite representation of quantum gates, each evaluation of a single-qubit gate introduces a \emph{quantum error}, which is measured by the trace distance between the decrypted ciphertext and the correct plaintext. The following Proposition \ref{2376} guarantees that after evaluating poly$(\lambda)$ number of single-qubit gates, the quantum error is still $\lambda$-negligible.

%the resulting encrypted state can be decrypted correctly within a negl$(\lambda)$ trace distance.

To evaluate a CNOT gate, we first compute the encrypted approximate Euler angles for the $2\times 2$ unitary operator represented by the gate keys. By Lemma \ref{6zq}, this can be done in time \rm{poly}$(k)$ to get negl$(k)$-approximation. By encrypted conditional rotation (see Theorem \ref{75}), we can transform the quantum ciphertexts into Pauli-encrypted form, and then evaluate the CNOT gate on the Pauli-encrypted states.

Although the quantum error increases with the use of encrypted conditional rotations, the evaluation of the CNOT gate on Pauli-encrypted states (see (\ref{13})) is so simple that it does not cause any increase in the quantum error. Each encrypted conditional rotation requires O$(k)$ uses of Algorithm \ref{1123}, and the output state of Algorithm \ref{1123} is correct within negl$(\lambda)$ trace distance by Lemma \ref{11}.
So the quantum error for evaluating a CNOT gate is $\lambda$-negligible. After evaluating poly$(\lambda)$ number of CNOT gates, the quantum error is still negl$(\lambda)$. $\hfill\blacksquare$

\end{proof}
%O$(3k)$ times that of performing one encrypted CNOT operation of \cite{mahadev2018classical}, an operation similar to algorithm \ref{1123}. Recalling the parameters $\eta,N$ from Definition \ref{421}, then the error for evaluating a CNOT gate can be bounded by $O(\frac{3k}{(N+1)^{\eta}})=\text{negl}({\lambda})$. It follows that

\begin{lem}[Precision in $L^{2}$-norm]\label{ts1}
For any $2\times2$ unitary operator $U_{\vt}$, let $U_{\vt'}$ be the $k$-bit precision quaternion representation of $U_{\vt}$, then $||U_{\vt}-U_{\vt'}||_{2}\leq \frac{1}{2^{k-1.5}}$.
\end{lem}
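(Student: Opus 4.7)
The plan is to reduce the matrix-norm bound to a vector-norm bound via the algebraic identity already established inside the proof of Lemma \ref{2.2}, and then simply invoke the coordinatewise precision of the $k$-bit truncation. Concretely, write $\vt=(t_1,t_2,t_3,t_4)\in\mS^3$ and $\vt'=(t'_1,t'_2,t'_3,t'_4)$ for the $k$-bit precision representation. By the definition in Section \ref{sec3.1}, $t'_i$ keeps the sign bit and the $k$ most significant bits of $t_i\in[-1,1]$, so $|t_i-t'_i|\le 2^{-k}$ for each $i\in\{1,2,3,4\}$, which immediately yields $\nrm{\vt-\vt'}\le\sqrt{4\cdot 2^{-2k}}=2^{1-k}$.

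Next I would use the quaternion decomposition (\ref{341}): setting $\sigma_0=\I_2$, one has $U_{\vt}-U_{\vt'}=\sum_{i=0}^{3}(t_i-t'_i)\sigma_i$. Because $\{\I_2,\sigma_1,\sigma_2,\sigma_3\}$ satisfy $\sigma_i^\dagger\sigma_j+\sigma_j^\dagger\sigma_i=2\delta_{ij}\I_2$ (this is exactly what was checked in the proof of Lemma \ref{2.2}), we get $(U_{\vt}-U_{\vt'})^\dagger(U_{\vt}-U_{\vt'})=\nrm{\vt-\vt'}^{2}\,\I_2$, hence the spectral norm equals the Euclidean norm of the coordinate difference:
\begin{equation*}
\nrm{U_{\vt}-U_{\vt'}}=\nrm{\vt-\vt'}.
\end{equation*}

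Combining the two bounds, $\nrm{U_{\vt}-U_{\vt'}}\le 2^{1-k}=\tfrac{1}{2^{k-1}}\le\tfrac{1}{2^{k-1.5}}$, which proves the claim with a little slack. There is no real obstacle here: the only step that requires any thought is recognising that the argument from Lemma \ref{2.2} shows that the map $\vt\mapsto U_{\vt}$ is an isometry from $(\R^4,\|\cdot\|_2)$ into $(\C^{2\times 2},\|\cdot\|_2)$, so the coordinatewise $k$-bit accuracy transfers losslessly to the matrix norm. The bound $2^{-(k-1.5)}$ (rather than the sharper $2^{-(k-1)}$) is stated merely for later convenience and is obtained with room to spare.
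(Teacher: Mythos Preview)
Your proof is correct and in fact yields the sharper bound $\nrm{U_{\vt}-U_{\vt'}}\le 2^{-(k-1)}$. The paper, however, argues differently: it reads off from the explicit matrix form (\ref{c2}) that each entry of $U_{\vt}-U_{\vt'}$ has modulus at most $\sqrt{2}/2^{k}$, then invokes the crude $2\times 2$ inequality $\|A\|_2\le 2\|A\|_\infty$ to land exactly on $2^{-(k-1.5)}$. Your route via the quaternion isometry $\nrm{U_{\vt}-U_{\vt'}}=\nrm{\vt-\vt'}$ (already established inside the proof of Lemma~\ref{2.2}) is cleaner and loses nothing in the passage from coordinates to spectral norm, which is why you arrive with a factor $\sqrt{2}$ to spare; the paper's entrywise argument is more pedestrian but avoids re-citing the quaternion algebra.
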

\begin{proof}
By (\ref{c2}), from $||\vt-\vt'||_{\infty}\leq\frac{1}{2^k}$, one gets $||U_{\vt}-U_{\vt'}||_{\infty}\leq \big{|} \|\vt-\vt'\|_{\infty}+\mi \|\vt-\vt'\|_{\infty} \big{|} \leq\frac{\sqrt{2}}{2^k}$,
so $||U_{\vt}-U_{\vt'}||_{2}\leq 2||U_{\vt}-U_{\vt'}||_{\infty}\leq\frac{1}{2^{k-1.5}}$.
\end{proof}

\begin{lem}\label{234}
Let linear operators $U'_1,U'_2,...U'_m$ be the $k$-bit finite precision quaternion representations of $2\times 2$ unitary operators $U_1,U_2,..,U_m$ respectively, so that $||U_i-U'_i||_2 \leq \frac{1}{2^{k-1.5}}$ for $1\leq i\leq m$. Given a multi-qubit system $|\psi\rangle$, let $V_j$ ($V'_j$) be the multi-qubit gate on $|\psi\rangle$ that describes the acting of $U_j$ ($U'_j$) on some (the same) 1-qubit of $\bk{\psi}$ for $1\leq j \leq m$. If $m$ is a polynomial function in $\lambda$, and $k$ is a function in $\lambda$ such that $\frac{m}{2^{k}}=\textnormal{negl}(\lambda)$, then
\begin{align}\label{a12}
&\| V_m...V_1|\psi\rangle - V'_m...V'_1|\psi\rangle \|_{H}= \textnormal{negl}(\lambda),
\end{align}
\end{lem}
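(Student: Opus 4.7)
The plan is to bound $\|V_m\cdots V_1|\psi\rangle - V'_m\cdots V'_1|\psi\rangle\|_2$ via a standard hybrid/telescoping argument in spectral norm, and then convert the result to $H$-distance using the definition in (\ref{cc1}).

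First I would observe that since $V_j$ acts as $U_j$ on one qubit and identity on the rest, the tensor-product structure gives $\|V_j - V'_j\|_2 = \|U_j - U'_j\|_2 \leq 2^{-(k-1.5)}$. Because each $V_j$ is unitary, $\|V_j\|_2 = 1$, and by the triangle inequality $\|V'_j\|_2 \leq 1 + 2^{-(k-1.5)}$. Next I would apply the telescoping identity
\begin{equation}
V_m\cdots V_1 - V'_m\cdots V'_1 = \sum_{j=1}^{m} V_m\cdots V_{j+1}\,(V_j - V'_j)\,V'_{j-1}\cdots V'_1,
\end{equation}
which is submultiplicative in spectral norm, yielding
\begin{equation}
\|V_m\cdots V_1 - V'_m\cdots V'_1\|_2 \leq \sum_{j=1}^{m} \|V_j-V'_j\|_2 \prod_{i<j}\|V'_i\|_2 \leq m\cdot\frac{1}{2^{k-1.5}}\left(1+\frac{1}{2^{k-1.5}}\right)^{m-1}.
\end{equation}
Under the hypothesis $m/2^k = \textnormal{negl}(\lambda)$, the factor $(1+2^{-(k-1.5)})^{m-1} \leq \exp(m\cdot 2^{-(k-1.5)}) = 1+o(1)$, so the whole expression is $\textnormal{negl}(\lambda)$.

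Applying this operator-norm bound to the unit vector $|\psi\rangle$ gives $\|V_m\cdots V_1|\psi\rangle - V'_m\cdots V'_1|\psi\rangle\|_2 = \textnormal{negl}(\lambda)$. Finally, from (\ref{cc1}) the $H$-distance equals $\frac{1}{\sqrt{2}}$ times the $L^2$-norm of the difference, so (\ref{a12}) follows at once.

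The only real subtlety is controlling the compounding of the approximation error across the $m$ non-unitary factors $V'_j$; this is precisely the role of the hypothesis $m/2^k = \textnormal{negl}(\lambda)$, which keeps the product $\prod_{i<j}\|V'_i\|_2$ bounded. Everything else is routine triangle inequality and the tensor-product identity $\|I \otimes (U_j-U'_j) \otimes I\|_2 = \|U_j - U'_j\|_2$.
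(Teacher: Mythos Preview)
Your proposal is correct and follows essentially the same hybrid/telescoping approach as the paper. The paper phrases it as a recursive bound on $a_j=\|P_j|\psi\rangle-P'_j|\psi\rangle\|_H$ (with $P_j=V_j\cdots V_1$) rather than your closed-form telescoping sum at the operator level, but the control of the $(1+2^{-(k-O(1))})^{m}$ factor via $m/2^k=\textnormal{negl}(\lambda)$ and the final conversion to $H$-distance are identical.
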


\begin{proof}
Observe that the matrix form of $V_j-V'_j$ is the tensor product of $2\times 2$ matrix $U_j-U'_j$ with some identity matrix. Thus, for any state $|\psi\rangle$, any $1\leq j\leq m$,
%Observe that the unitary matrix forms of $V_j$ and $V'_j$ are 2-sparse, that is, each of their columns and rows has at most 2 non-zero elements. Thus, for any state $|\psi\rangle$, any $1\leq j\leq m$,
\begin{align}\label{71}
\|V_j|\psi\rangle-V'_j|\psi\rangle\|_H= \frac{\sqrt{2}}{2}\|V_j|\psi\rangle-V'_j|\psi\rangle \|_2 \leq \frac{\sqrt{2}}{2}||V_j-V'_j||_2 = \frac{\sqrt{2}}{2} ||U_j-U'_j||_{2}\leq\frac{1}{2^{k-1}}.
\end{align}

Set $P_j=V_j....V_1$ for $ j=1,...,m$, and set $P'_j=V'_j...V'_1$. Since unitary operators $P_j$ are $L^2$-norm preserving, it holds that, for $j=2,...,m-1$,
\begin{align}
||P_j|\psi\rangle-P'_j|\psi\rangle||_H &\leq ||V_jP_{j-1}|\psi\rangle-V'_jP_{j-1}|\psi\rangle ||_H+||V'_jP_{j-1}|\psi\rangle-V'_jP'_{j-1}|\psi\rangle ||_H\\
&\leq \frac{1}{2^{k-1}}+  ||V'_j||_2  ||P_{j-1}|\psi\rangle-P'_{j-1}|\psi\rangle ||_H, \label{zz1}
\end{align}
where the last equality is by (\ref{71}) and the distance relation: $||\cdot||_{H}=\frac{\sqrt{2}}{2}||\cdot||_2$. Set $a_j=||P_j|\psi\rangle-P'_j|\psi\rangle||_H$, and set $M=\displaystyle\max_{1\leq j\leq m}||U_j-U'_j||_{2}$, then by (\ref{71}), $a_1\leq \frac{1}{2^{k-1}}$, $M\leq\frac{1}{2^{k-1.5}}\leq\frac{1}{2^{k-2}}$, and
\begin{align}\label{xx1}
||V'_j||_2=||U'_j||_2\leq||U_j||_2+||U_j-U'_j||_2 \leq 1+M \leq 1+\frac{1}{2^{k-2}},  \quad  \forall 1\leq j\leq m.
\end{align}

Combining (\ref{xx1}) and (\ref{zz1}) gives the following recursive relation on $a_j$:
\begin{align}
a_j\leq (1+\frac{1}{2^{k-2}})a_{j-1}+\frac{1}{2^{k-1}},  \quad  \forall \ 2\leq j\leq m,
\end{align}
so
\begin{align}\label{31}
a_m+\frac{1}{2}\leq (1+\frac{1}{2^{k-2}})(a_{m-1}+\frac{1}{2})\leq... &\leq(1+\frac{1}{2^{k-2}})^{m-1}(a_{1}+\frac{1}{2}) \nonumber\\
&\leq (\frac{1}{2^{k-1}}+\frac{1}{2})(1+\frac{1}{2^{k-2}})^{m-1}.
\end{align}

If $\frac{m}{2^k}=\tn{negl}(\lambda)$, then the $H$-distance between $P_m|\psi\rangle$ and $P'_m|\psi\rangle$ can be bounded by
\begin{align}
\hspace{-0.3cm}||P_m|\psi\rangle-&P'_m|\psi\rangle ||_{H}=a_m\leq (\frac{1}{2^{k-1}}+\frac{1}{2})(1+\frac{1}{2^{k-2}})^{2^{k-2}\frac{m-1}{2^{k-2}}}-\frac{1}{2}  \quad \quad  \quad     (k\rightarrow \infty)\\
&\leq (\frac{1}{2^{k-1}}+\frac{1}{2}) 3^{\frac{m-1}{2^{k-2}}}-\frac{1}{2}=(\frac{1}{2^{k-1}}+\frac{1}{2}) \big{(}1+\tn{ln3}\frac{m-1}{2^{k-2}}+o(\frac{m-1}{2^{k-2}}) \big{)}-\frac{1}{2} =\tn{negl}(\lambda),\label{cx51}
\end{align}
where the last inequality is by $(1+\frac{1}{n})^{n}\xrightarrow{n\rightarrow\infty} e < 3$ and the Taylor expansion: $3^{x}=e^{x\tn{ln}3}=1+x\tn{ln}3+o(x)$.$\hfill\blacksquare$
\end{proof}

\begin{prop}\label{2376}
Let $m(\lambda)$ be a polynomial function in $\lambda$, and let $k(\lambda)$ be a function in $\lambda$ such that $\frac{m(\lambda)}{2^{k(\lambda)}}=\textnormal{negl}(\lambda)$. Then, the new \tn{QHE} scheme with precision parameter $k=k(\lambda)$ allows to evaluate $m=m(\lambda)$ number of single-qubit gates while guaranteeing the correctness of the decryption of the evaluation result to be within negl$(\lambda)$ trace distance.
\end{prop}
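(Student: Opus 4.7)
The plan is to reduce the correctness claim to Lemma~\ref{234} together with the bound (\ref{42}) that converts $H$-distance into trace distance. The key observation is that while QHE.Eval processes the $m$ single-qubit gates, the physical quantum ciphertext $U_{\vec{t}_0}|\psi\rangle$ is never touched (Section~\ref{sec4.2}); only the classical gate key is updated, by replacing $\vec{t}_j$ with the $k$-bit rounded quaternion representation corresponding to $U_{\vec{t}_j}U_{\vec{k}_j}^{-1}$, where $\vec{k}_j$ is the $k$-bit description of the $j$-th requested gate. The client's decryption therefore outputs $U_{\vec{t}_m}^{-1}U_{\vec{t}_0}|\psi\rangle$, and I want to compare this state with the ideal output $V_m\cdots V_1|\psi\rangle$, where $V_j$ denotes the true gate the client wishes to evaluate at step $j$.

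Next, for each $j$ I would identify an effective operator $V_j'$ such that $U_{\vec{t}_m}^{-1}U_{\vec{t}_0}=V_m'\cdots V_1'$ and such that $\|V_j-V_j'\|_2=O(1/2^k)$. Two perturbations enter $V_j'$: the $k$-bit approximation $U_{\vec{k}_j}$ of $V_j$, bounded by $1/2^{k-1.5}$ in $L^2$ norm by Lemma~\ref{ts1}, and the $k$-bit rounding of the quaternion product used to form $\vec{t}_{j+1}$, which contributes another $O(1/2^k)$ by the same lemma. The triangle inequality then preserves the bound $\|V_j-V_j'\|_2=O(1/2^{k})$ on the effective per-step deviation.

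Plugging these $V_j'$ into Lemma~\ref{234}, the hypothesis $m(\lambda)/2^{k(\lambda)}=\textnormal{negl}(\lambda)$ is exactly what the lemma requires, and it yields $\|V_m\cdots V_1|\psi\rangle-V_m'\cdots V_1'|\psi\rangle\|_H=\textnormal{negl}(\lambda)$. Converting via (\ref{42}) then gives the claimed $\textnormal{negl}(\lambda)$ trace distance between the decrypted output $U_{\vec{t}_m}^{-1}U_{\vec{t}_0}|\psi\rangle$ and the ideal output $V_m\cdots V_1|\psi\rangle$, completing the correctness argument.

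The main obstacle will be the bookkeeping in the second step: Lemma~\ref{234} is stated for a single layer of $k$-bit gate approximations of fixed target unitaries, whereas the pad update here introduces an additional $k$-bit rounding after each quaternion multiplication that is not addressed by the lemma. I expect to absorb this extra rounding perturbation into $V_j'$ and verify that each resulting effective gate still lies within the $L^2$-precision required by Lemma~\ref{234}, so that the lemma applies verbatim to the composed factors $V_m'\cdots V_1'$.
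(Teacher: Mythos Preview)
Your reduction to Lemma~\ref{234} and (\ref{42}) is the right skeleton, and you have correctly isolated the per-step $k$-bit rounding of the quaternion product as the extra bookkeeping that Lemma~\ref{234} does not literally cover. The paper's proof uses the same two ingredients but organizes the argument differently, and there is one point where your write-up diverges from the scheme.

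The paper does not telescope through the key updates the way you do. Instead it introduces a \emph{third} layer of operators: for each step $j$ it takes $V_j$ (the true gate), $V'_j$ (its $k$-bit quaternion approximation), and $V''_j$ (the unitary approximation of $V'_j$ from Lemma~\ref{2.2}), sets $P=\prod V_j$, $P'=\prod V'_j$, $P''=\prod V''_j$, and proves $\|P''\ket\psi-P\ket\psi\|_{tr}=\textnormal{negl}(\lambda)$. To bound $\|P'-P''\|_2$ it first groups all gates acting on the same qubit into a single factor $\widetilde{V}_j$, invokes Lemma~\ref{234} once per group to get $\|\widetilde{V}'_j-\widetilde{V}_j\|_2=\textnormal{negl}(\lambda)$ and hence $\|\widetilde{V}'_j-\widetilde{V}''_j\|_2=\textnormal{negl}(\lambda)$, and then reruns the induction in (\ref{31}) across the groups. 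Your approach avoids this grouping entirely by working with the actual composite $U_{\vec t_m}^{-1}U_{\vec t_0}$, which is cleaner; what you give up is that the paper's $V''_j$ layer is precisely where the unitary-approximation step of QOTP.Enc/QOTP.Dec is accounted for.

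That is the concrete gap in your plan: the client does not apply $U_{\vec t_m}^{-1}$ and $U_{\vec t_0}$, because neither $\vec t_0$ nor $\vec t_m$ is a unit vector in general. QOTP.Enc applies the unitary approximation of $U_{\vec t_0}$ and QOTP.Dec applies the unitary approximation of $U_{(t_1,-t_2,-t_3,-t_4)}$ (see Section~\ref{sec3.1}). So the decrypted state is $\widehat{U_{\vec t_m}^{-1}}\,\widehat{U_{\vec t_0}}\,\ket\psi$, not $U_{\vec t_m}^{-1}U_{\vec t_0}\ket\psi$. In your framework this is easy to repair: once your telescoping and the Lemma~\ref{234}-style induction give $\bigl|\,\|\vec t_m\|_2-1\bigr|=\textnormal{negl}(\lambda)$, Lemma~\ref{2.2} (or Lemma~\ref{2.22}) bounds each of these two extra replacements by $O(2^{-k/2})$ in operator norm, which is again negligible. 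But you should state this explicitly, since it is exactly the content the paper packages into its $V''_j$/grouping argument.
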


%and the unitary approximation of $V'_j$ as defined in Seciton \ref{sec3.1}, denoted by $V''_j$ for $1\leq j\leq m$, satisfy
%\begin{equation}
%\| V_m...V_1|\psi\rangle - V''_m...V''_1|\psi\rangle \|_{tr}\leq \textnormal{negl}(\lambda),
%\end{equation}

\begin{proof}
In the notations in the proof of Lemma \ref{234}, let $V_j$ be the unitary operator for $1 \leq j\leq m$ that realizes the $2\times 2$ unitary $U_j$ on some multi-qubit system $\bk{\psi}$, let $V'_j$ be the linear operator that realizes $U'_j$, the $k$-bit finite precision quaternion representations of $U_j$ (see Lemma \ref{234}) for $1\leq j\leq m$, and let $V''_j$ be the unitary approximation (see Section \ref{sec3.1}) of $U'_j$ for $1\leq j\leq m$. Define $P=V_m....V_1$, $P'=V'_m...V'_1$ and $P''=V''_m...V''_1$. We need to prove
\begin{align}
||P''|\psi\rangle-P|\psi\rangle||_{tr}\leq  \textnormal{negl}(\lambda),
\end{align}
for a general multi-qubit state $\bk{\psi}$.

Firstly, group together those $V_j$ that act on the same qubit for $1\leq j \leq m$. Assume that the result consists of $m_1$ ($m_1\leq m$) unitary operators: $ \widetilde{V}_j$, $1\leq j\leq m_1$, each of which acts on a different qubit. Since linear operators that act on different qubits are commutative, we can rewrite $P=\prod^{{m_1}}_{j=1}\widetilde{V}_j$. By grouping the operators acting on the same qubit, we can define $\tV'_j$ and $\tV''_j$ similarly, such that $P'=\prod^{m_1}_{j=1}\tV'_j$ and $P''=\prod^{{m_1}}_{j=1}\tV''_j$.

Now that each $\widetilde{V_j}$ is composed of serval unitary operators, with $\widetilde{V'_j}$ being the approximation of $V_j$. By (\ref{a12}), it holds that $\nrm{\tV'_j-\tV_j} = \negl{\lambda}$, and then
\begin{align}\label{zq1}
\big{|} \nrm{\tV'_j}-1\big{|} =\big{|}  \nrm{\tV'_j}-\nrm{\tV_j}\big{|} \leq \nrm{\tV'_j-\tV_j} = \negl{\lambda},   \qquad \forall \  1\leq j\leq m.
\end{align}
Similarly, $\nrm{\tV'_j-\tV''_j}= \negl{\lambda}$ for $1\leq j\leq m_1$. By making induction similar to (\ref{31}), we can deduce that $\nrm{P'-P''}=\negl{\lambda}$ (notice that $m_1\leq m$). It follows from (\ref{a12}) that $\nrm{P-P'}=\nrm{V_m....V_1- V'_m....V'_1}=\negl{\lambda}$. Therefore,
\begin{align}
||P''-P||_2\leq ||P''-P'||_2+||P'-P||_2 =   \text{negl}(\lambda).
\end{align}
According to (\ref{42}), for any quantum state $|\psi\rangle$, $||P''|\psi\rangle-P|\psi\rangle||_{tr} = \textnormal{negl}(\lambda).$ $\hfill\blacksquare$
\end{proof}

%By Theorem 5.2 (Lemma \ref{41}) in \cite{mahadev2018classical}), the HE allows to perform any $\rm{poly}(\lambda)$ number of encrypted CNOT operation to obtain a ciphertext that can be decrypted correctly within a negl$(\lambda)$ trace distance. It follows that

%For a $\lambda$-qubit state $|\psi\rangle$, applying QOTP$(k)$ scheme to obtain the encryption $U_{\vec{t}}|\psi\rangle$, and the gate key $\vec{t}=(t_1,t_2,t_3,t_4)$ is encrypted in the form $\textnormal{HE.Enc}_{sk_3L}(\vec{t})$.

The security of our scheme is by combining the security of QOTP (see Lemma \ref{2.5}) and the security of Mahadev's HE scheme (see Theorem 6.1 of \cite{mahadev2018classical}).

\vspace{0.3cm}

\hspace {-0.7cm} \textbf{Efficiency Comparison.}

\vspace{0.2cm}

We make an efficiency comparison between our QFHE scheme and the QFHE scheme of \cite{mahadev2018classical} for the task of evaluating the quantum circuits. Overall, for evaluating a general circuit composed of $p$ percentage of CNOTs and (1-$p$) percentage of 1-qubits gates within the precision negl$(\lambda)$, the quantum complexity advantage of our scheme over the scheme of \cite{mahadev2018classical} is
\begin{align}
O(\frac{  (1-p)\lambda^2}{ p \lambda  })=O(\lambda),
\end{align}
when constant $p$ is away from both one and zero. The detailed analysis is as follows.

First, let the quantum complexity of encrypted-CNOT operation of \cite{mahadev2018classical} be $T_Q$, which is roughly equal\footnote{This can be seen by comparing the Algorithm \ref{20} and Mahadev's encrypted CNOT operation, cf. Claim 4.3 of \cite{mahadev2018classical}.} to that of Algorithm \ref{20}, so that it is the basis for comparison.

In the scheme of \cite{mahadev2018classical}, to evaluate a 1-qubit gate and obtain a state within O$(\frac{1}{2^\lambda})$ trace distance from the correct result, by SK algorithm, the number of Hadamard/Toffoli gates required to be evaluated is O($\lambda^2$). Since evaluating a Toffoli gate requires a constant number of encrypted-CNOT operations \cite{mahadev2018classical}, in the worst case, the number of required Toffoli gates is O$(\lambda^{2})$, and the quantum complexity of evaluating 1-qubit gate within precision O$(\frac{1}{2^\lambda})$ (in trace distance) by the scheme of \cite{mahadev2018classical} is O$(\lambda^{2})T_Q$; the quantum complexity of evaluating a CNOT gate is O$(1)$.

In our scheme, to evaluate 1-qubit gate within the precision negl$(\lambda)$, we set the number of bits used to present the gate key to be $k=\lambda$, and then the complexity is totally classical and is O$(\lambda) T_C$, where $T_C$ is the complexity required for homomorphic evaluations on each bit; evaluating a CNOT gate requires O($\lambda$) uses of Algorithm \ref{20}, so the quantum complexity is O($\lambda$)$T_Q$.

For circuits composed of $p$ percentage of CNOTs and (1-$p$) percentage of 1-qubits, the quantum complexity of the QFHE scheme of \cite{mahadev2018classical} is
\begin{align}\label{cl1}
O(\frac{(1-p) \lambda^2 T_Q + p }{  p \lambda T_Q})=O(\lambda)
\end{align}
times of that by our QFHE scheme, when constant $p\neq0, 1$.

Also, if $T_Q$, $T_C$ and $p$ $(\neq 0, 1)$ are considered as constants, then our scheme has the overall complexity advantage
\begin{align}\label{cl1}
O(\frac{(1-p) \lambda^2 T_Q + p }{(1-p)\lambda T_C+ p \lambda T_Q})=O(\lambda),
\end{align}

According to the above arguments, in the worst case where there are overwhelmingly many CNOTs and negligible 1-qubits gates, our method is less efficient than previous QFHE schemes such as \cite{mahadev2018classical}. In the general case, however, our scheme is polynomially better asymptotically.

For some typical quantum circuits, like quantum Fourier transform (QFT) (cf. Figure 1 and Figure 2), the numbers of CNOTs and 1-qubits
are roughly equal, and thus the percentage $p=1/2$. This is the case of a tie, with no bias towards any one, showing that our scheme has advantage over the previous schemes in general.

There are two worthwhile points about the above comparison:

(1) Compared to the QFHE scheme of \cite{mahadev2018classical} combined with the specific SK algorithm of approximation parameter $c=2$, the advantage of our scheme is $O(\lambda)$, significant. Moreover, the lowest bound for approximation parameter is $c = 1$ (see (23) in [DN05]). So, our method reaches the best complexity that can be achieved by \cite{mahadev2018classical} together with any approximation algorithm. To our best knowledge, no method in the literature has ever achieved this complexity.

(2) For particular quantum circuits (such as approximate-QFT \cite{nam2020approximate}), there may exist some direct ``Clifford+non-Clifford'' implementation that makes the evaluation by QFHE scheme of \cite{mahadev2018classical} more efficient. However, for general quantum circuits (usually designed by 1-qubit/CNOT gates), finding their efficient ``Clifford+non-Clifford'' implementation is as (if not more) hard as redesigning the algorithm. So, using previous QFHE schemes to evaluate a circuit is generally done in two steps: first decompose each 1-qubit gate into Clifford gates and T gates, then evaluate them one by one.

\begin{figure}[H]
\includegraphics[width=\textwidth]{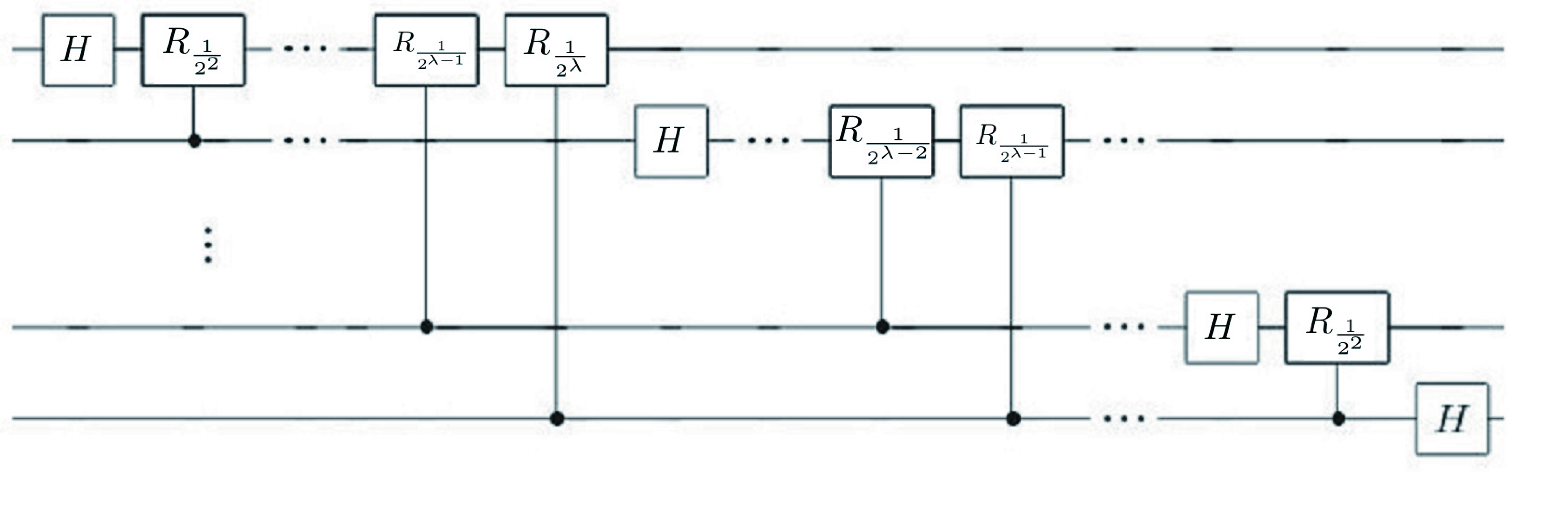}
\caption{Efficient quantum circuit for the quantum Fourier transform on $\lambda$ qubit system, where each line represents a 1-qubit, $H$ denotes the Hadamard gate, and the rotation $R_{\alpha}=\protect\begin{bmatrix}1 &  \protect\\  &  e^{2\mi\pi\alpha} \protect\end{bmatrix}$ for $\alpha\in[0,1)$.} \label{fig1}
\end{figure}

\begin{figure}[H]
\includegraphics[width=\textwidth]{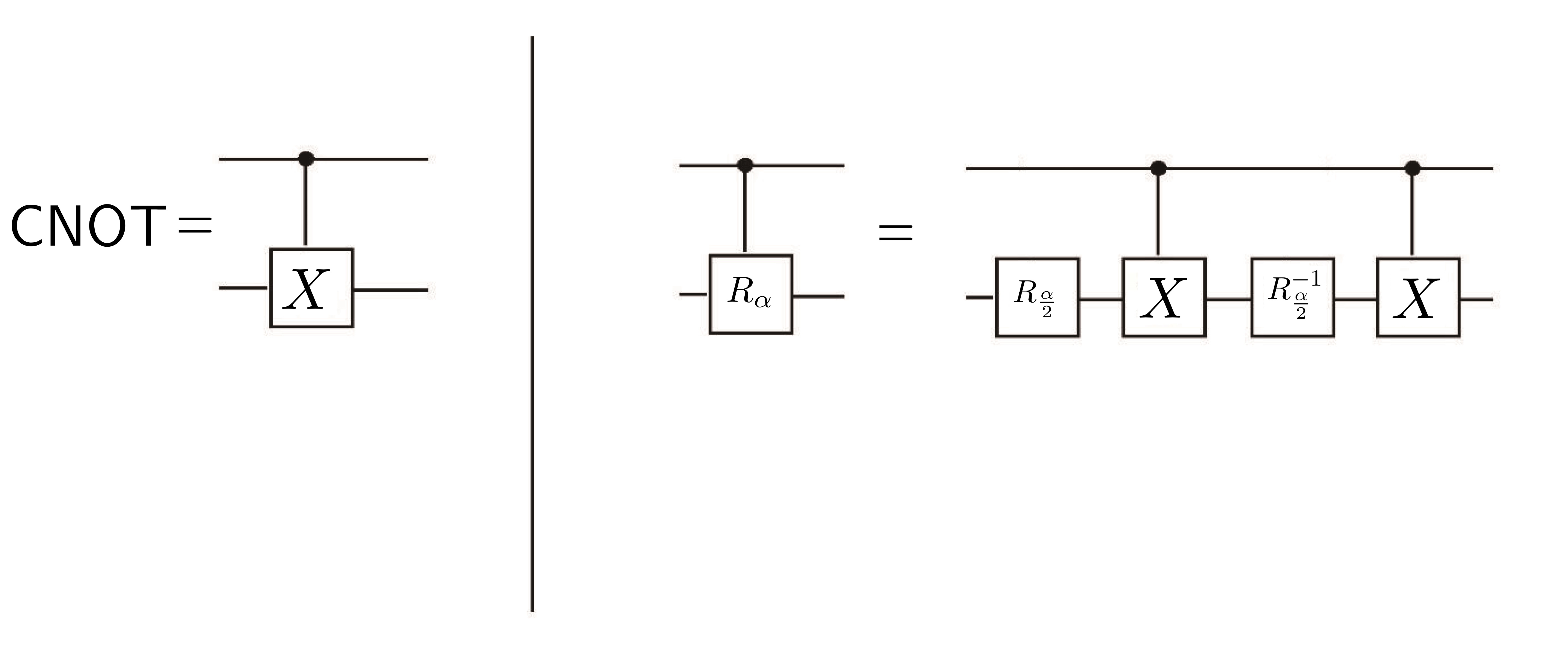}
\caption{(Left) the CNOT gate, where $X$ is the Pauli-$X$ matrix. (Right) The circuit of using 1-qubit gates and the CNOT gate to implement the conditional rotation $\bar{R}_{\alpha}$ for $\alpha\in[0,1)$.}\label{fig2}
\end{figure}

There are also some follows-up of QFHE in \cite{mahadev2018classical}. The rate-1 QFHE in \cite{chardouvelis2021rate} mainly focuses on reducing communication complexity. It seems to have no obvious advantage over the QFHE of \cite{mahadev2018classical} in the efficiency of evaluating quantum circuits. The QFHE scheme in \cite{brakerski2018quantum} achieves encrypted-CNOT by a different approach from \cite{mahadev2018classical}. Due to the lack of a basis for comparison, it is hard to make a fair comparison between our QFHE scheme and Brakerski's QFHE.

It is interesting to investigate the generalization of the encrypted-CNOT in \cite{brakerski2018quantum}, as what we have done for \cite{mahadev2018classical}. This will lead to a more efficient QFHE, because the encrypted-CNOT in \cite{brakerski2018quantum} is more efficient, due to the underlying classical FHE with just polynomial modulus.

\section{Acknowledgments}
The authors would like to thank anonymous referees for their insightful comments and helpful suggestions.

 \bibliographystyle{quantum}
 \bibliography{mybibliography}

\section{Appendix}
\subsection{A proof of Lemma \ref{61}.}\label{vcb1c}

\begin{proof}
 The main idea is to prove that when $\vec{\omega}$ is sampled from $\widetilde{D}_{\mathbb{Z}^{m+1}_q}$, $\rho_{0}(\vec{\omega})$ and $\rho_{1}(\vec{\omega})$ are with overwhelming probability so close to each other that the ratio $\frac{\rho_{0}(\vec{\omega} )}{\rho_{1}(\vec{\omega})}$ is $\lambda$-negligibly close to 1, so that the normalized form of $(\sqrt{\rho_{0}(\vec{\omega})}c_0|0\rangle+\sqrt{\rho_{1}(\vec{\omega})}c_1|1\rangle)$ is within $\lambda$-negligible trace distance to the state $c_0|0\rangle+c_1|1\rangle$. By (\ref{s1.1}), the truncated Gaussian distribution $D_{\mathbb{Z}^{m+1}_{q},\beta_{f}}$ has density function
\begin{equation}\label{nw4}
\rho_{0}(\vec{x})=\frac{e^{-\pi \frac{||\vec{x}||_2^2}{\beta_{f}^2} }}   { \displaystyle\sum_{\vec{x}\in \mathbb{Z}^{m+1}_{q},||\vec{x}||_{\infty} \leq \beta_{f}} e^{-\pi\frac{||\vec{x}||_2^2}{\beta_{f}^2}} }.
\end{equation}
For short, denote distribution $D_{\mathbb{Z}^{m+1}_q,\beta_{f}}$ by $D_0$, denote $\e'+D_{\mathbb{Z}^{m+1}_q,\beta_{f}}$ by $D_1$, and denote $\widetilde{D}_{\mathbb{Z}^{m+1}_q}$ by $\widetilde{D}$.
Let $\rho$ be the density of the distribution $\widetilde{D}$, then $\rho(\x)=p\rho_0(\x)+(1-p)\rho_1(\x),\ \forall \x\in\Z_{q}^{m+1}$.  Obviously, support sp$(D_0)=\{ \x | ||\x ||_{\infty}\leq \beta_{f}, \x\in \Z^{m+1}_{q} \}$, and $\frac{||\e'||_\infty}{\beta_{f}}\leq\frac{1}{(N+1)^{\eta}}= O(\frac{1}{\lambda^{ \Theta (\log \lambda)}})$ is $\lambda$-negligible.
Now, consider the set
$$S=\tn{sp}(D_0)\setminus\tn{sp}(D_1)+\tn{sp}(D_1)\setminus\tn{sp}(D_0).$$
If the vector $\vec{\omega}$ in (\ref{nw1}) is sampled from $S$, then $\bk{c'}$ is equal to a computation basis ($\bk{0}$ or $\bk{1}$) that could be very far away from $|c\rangle$. Fortunately, this happens with negligible probability, as proved below. To show for any $0\leq p\leq 1$,
\begin{align}\label{nn1}
\sum_{\vec{x}\in{S}} \rho (\vec{x}) =   p \sum_{\vec{x} \in \tn{sp}(D_0)\setminus\tn{sp}(D_1)} \rho_0 (\x) + (1-p)\sum_{\vec{x} \in \tn{sp}(D_1)\setminus\tn{sp}(D_0)} \rho_1 (\x) =\tn{negl}(\lambda)
\end{align}
we first prove $\sum\limits_{\vec{x} \in \tn{sp}(D_0)\setminus\tn{sp}(D_1)} \rho_0 (\x)=\tn{negl}(\lambda)$. Notice that the set $K=\{\x \big{|} ||\x||_{\infty}\leq \beta_f- ||\e'||_{\infty}, \x\in \Z^{m+1}_{q} \}\subseteq \tn{sp}(D_0)\cap \tn{sp}(D_1)$, and so
\begin{align}\label{nn2}
\sum_{\vec{x}\in \tn{sp}(D_0)\setminus\tn{sp}(D_1)}  \rho_0 (\vec{x})= \sum_{\vec{x}\in \tn{sp}(D_0)} \rho_0 (\vec{x})- \sum_{ \vec{x}\in \tn{sp}(D_0)\cap \tn{sp}(D_1) } \rho_0 (\vec{x})\leq 1- \sum_{x\in K} \rho_0 (\vec{x}).
\end{align}

By the shape of $\rho_0$, for any $\x\in K$, $\vec{y}\in \tn{sp}(D_0)\setminus K$, it holds that $\rho_{0}(\x)>\rho_{0}(\vec{y})$. Therefore,
\begin{align}\label{nn3}
 \sum_{x\in K} \rho_0 (\vec{x})> \frac{|K|}{|\tn{sp}(D_0)|} \left(\sum\limits_{x\in K} \rho_0 (\vec{x})+\sum\limits_{x\in \tn{sp}(D_0)\setminus K} \rho_0 (\vec{x}) \right)&=\left(\frac{\beta_f-||\e'||_{\infty}}{\beta_f}\right)^{m+1}\nonumber\\
 &\geq1-\frac{(m+1)||\e'||_{\infty}}{\beta_f},
\end{align}
where the last inequality is by $(1-a)^{b}\geq1-a b$ for any $b\geq 1$, $0\leq a\leq 1$. Combining (\ref{nn2}), (\ref{nn3}), $m$=poly($\lambda$) and $\frac{||\e'||_\infty}{\beta_{f}}= \tn{negl}(\lambda)$ gives
\begin{align}\label{yn1}
\sum\limits_{\vec{x}\in \tn{sp}(D_0)\setminus\tn{sp}(D_1)}  \rho_0 (\vec{x})\leq \frac{(m+1)||\e'||_{\infty}}{\beta_f} =\tn{negl}(\lambda).
\end{align}

Similarly,
\begin{align}\label{yn2}
\sum\limits_{\vec{x}\in \tn{sp}(D_1)\setminus\tn{sp}(D_0)}  \rho_1 (\vec{x})= \tn{negl}(\lambda).
\end{align}
and then (\ref{nn1}) follows. Now, define the ball $G:=\{ \vec{x}|\ ||\vec{x}||_2 \leq \beta_{f}\sqrt{m+1} \}\supseteq \tn{sp}(D_0)$. Now that $\rho(\vec{x})$ is supported on $\tn{sp}(D_0)\bigcup S \subseteq G\bigcup S$, by (\ref{nn1}),
$$1=\sum\limits_{\vec{x}\in G\bigcup S}\rho(\vec{x})= \sum\limits_{\vec{x}\in G\setminus S}\rho(\vec{x})+\textnormal{negl}(\lambda).$$
To complete the proof, it suffices to show that $\| |c'\rangle-|c\rangle \|_{H}= \textnormal{negl}(\lambda)$ for any $\vec{\omega}\in G\setminus S$. Indeed, for any $\vec{\omega}\in G\setminus S$,
%Now, we consider the $x$ sampling from the set $T:=Z^m_q \setminus S$, that is, $\rho_{0}(x) \neq0$ and $\rho_{1}(x) \neq 0$.
%We concern $x$ sampled from out of $S$.
 %\quad \lambda\mapsto\infty

\begin{equation}\label{nw6}
 \frac{\rho_{1}(\vec{\omega})}{\rho_{0}(\vec{\omega})} =\frac{e^{-\pi \frac{||\vec{\omega}+\e'||_2^2}{\beta_{f}^2}}}{e^{-\pi \frac{||\vec{\omega}||_2^2}{\beta_{f}^2} }}= e^{-\pi \frac{ 2\vec{\omega}\cdot\e' +||\e'||_2^2}{\beta_{f}^2} }.
\end{equation}
By $||\e'||_{2}\leq\beta_{init}(N+1)^{\eta_{c}}\sqrt{m+1}$, $||\vec{\omega}||_2\leq\beta_f\sqrt{m+1}$, and $\beta_f=\beta_{init}(N+1)^{\eta_{c}+\eta}$, it holds that
\begin{equation}\label{nw7}
\left| \frac{\vec{\omega}\cdot\e'}{\beta_{f}^2}\right|\leq \frac{ ||\vec{\omega}||_2||\e'||_2}{\beta_{f}^2} \leq \frac{m+1}{(N+1)^{\eta}}=\tn{negl}(\lambda), \qquad \frac{||\e'||_2^2}{\beta_{f}^2}\leq \frac{m+1}{(N+1)^{2\eta}} =\tn{negl}(\lambda).
\end{equation}
Combining (\ref{nw6}), (\ref{nw7}) and the Taylor expansion: $e^{x}=1+x+o(x)$ gives
\begin{equation}\label{7.1}
 \left| \frac{\rho_{1}(\vec{\omega})}{\rho_{0}(\vec{\omega})}-1 \right|=\left|e^{-\pi \frac{ 2\vec{\omega}\cdot\e' +||\e'||_2^2}{\beta_{f}^2} }-1\right|= \left| -\pi \frac{ 2\vec{\omega}\cdot\e' +||\e'||_2^2}{\beta_{f}^2} +o(-\pi \frac{ 2\vec{\omega}\cdot\e' +||\e'||_2^2}{\beta_{f}^2} ) \right|=\tn{negl}(\lambda).
\end{equation}

Similarly,
\begin{equation}\label{7.2}
\ \left| \frac{\rho_{0}(\vec{\omega})}{\rho_{1}(\vec{\omega})}-1 \right| = \textnormal{negl}(\lambda).
\end{equation}

Let $\bk{c}=c_0\bk{0}+c_1\bk{1}$ and $\bk{c'}=c'_0\bk{0}+c'_1\bk{1}$, and let $\Delta=\frac{\rho_1(\vec{\omega})}{\rho_0(\vec{\omega})}$. By (\ref{7.1}), $|\Delta-1|$=negl($\lambda$), and thus
\begin{equation}
|c'_0-c_0|=|\frac{c_0 \sqrt{\rho_{0}(\vec{\omega})}}{\sqrt{\rho_{0}(\vec{\omega})|c_0|^2+\rho_{1}(\vec{\omega})|c_1|^2}}-c_0|=|c_0||\frac{1}{\sqrt{|c_0|^2+\Delta |c_1|^2}}-\frac{1}{\sqrt{|c_0|^2+|c_1|^2}}|= \textnormal{negl}(\lambda).
\end{equation}
Similarly, by (\ref{7.2}), we have $|c'_1-c_1|=$ negl$(\lambda)$. Then, $\| |c\rangle-|c'\rangle \|_{H}= \sqrt{ \frac{1}{2} \sum_{j=0,1} |c_j-c'_j|^2}$ $=\tn{negl}(\lambda)$. By (\ref{42}), $\| |c\rangle- |c'\rangle \|_{tr}=$ negl$(\lambda)$. $\hfill\blacksquare$
\end{proof}

\subsection{Homomorphic approximate computation of Euler angles}\label{7.3}

Using the encrypted gate key Enc$(\vt)$ where $\vt=(t_1,t_2,t_3,t_4)\in\R^4$, one can directly compute the encrypted Euler angles Enc$(\alpha,\beta,\gamma)$ such that $U(\alpha,\beta,\gamma)\xlongequal{\tn{i.g.p.f}}U_{\vt}$, according to the relations (\ref{w2}). However, the this idea is simply too naive. There are two points ignored: 1. the gate key MHE.Enc($\vt'$) at hand is only an encrypted approximation to MHE.Enc($\vt$); 2. efficient implementation of homomorphic evaluation of (\ref{w2}). Below we make more careful consideration of the details.

\begin{itemize}
\item[]
\textbf{Open disc in the complex plane.} $D(z_0,r)=\{ z|\ |z-z_0| < r, z\in\C \}$.
\vspace{0.3cm}

\textbf{Modulo $[\ ]_r$.} For $\alpha\in\R$, $r\in \Z^{+}$, $[\alpha]_r$ is the real number in range [-$\frac{r}{2}$, $\frac{r}{2}$) such that $[\alpha]_r=\alpha \mod r$.
\end{itemize}

In \cite{cheon2019numerical}, Cheon et al. showed how to homomorphically evaluate square root and division efficiently\footnote{Lemma 1 and Lemma 2 in \cite{cheon2019numerical} show that for any $x\in(0,2)$ (or $x\in[0,1]$), $d$ iterations are sufficient for homomorphically computing $1/x$ (or $\sqrt{x}$) to precision $\frac{1}{x}(1-x)^{2^{d+1}}$ (or $\sqrt{x}(1-\frac{x}{4})^{2^{d+1}}$), which guarantees the exponential convergence rate in the number of iteration $d$. By using bootstrapping to refresh the noise in ciphertext after each iteration, the accumulated noise can be bounded, rather than increasing double exponentially in $d$.}. In the following, we investigate how to homomorphically evaluate the logarithmic function ln$(z)$ for $z\in\C$, which is defined to be ln$(z)=\tn{ln}|z|+\mi\tn{Arg}(z)$ where the principal value of the argument $-\pi<$ Arg$(z)\leq\pi$. For any $k\in\N$, one can approximate ln$(z)$ for $z=e^{\mi\theta}$ where $\theta\in[0,\frac{\pi}{2}]$ by a degree-$k$ polynomial to precision negl($k$):

\begin{lem}\label{c1}
For any $k\in\N$, there is a degree-$k$ complex polynomial $P(z)$ that can approximate function \tn{ln}$(z)$ to precision \tn{negl}$(k)$ for any $z\in D(e^{\mi\frac{\pi}{4}},0.9)\supseteq \{e^{\mi \theta}| \theta\in [0,\frac{\pi}{2}]\}$, and satisfies $|P(z_1)-P(z_2)|\leq 10|z_1-z_2|$ for any $z_1,\ z_2 \in D(e^{\mi\frac{\pi}{4}},0.9)$.
\end{lem}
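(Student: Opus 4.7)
The plan is to take $P$ to be the degree-$k$ Taylor polynomial of the principal branch of $\ln(z)$ centered at $z_0 = e^{\mi \pi/4}$. Writing $\ln(z) = \ln(z_0) + \ln\bigl(1 + (z-z_0)/z_0\bigr)$ and expanding via $\ln(1+w) = \sum_{n\geq 1}(-1)^{n+1} w^n/n$, I would set
\begin{equation*}
P(z) = \ln(z_0) + \sum_{n=1}^{k} \frac{(-1)^{n+1}}{n}\left(\frac{z-z_0}{z_0}\right)^n.
\end{equation*}
The first sanity check is that $D(z_0,0.9)$ lies inside the domain of analyticity of the principal logarithm: the nearest point of the non-positive real axis to $z_0$ has distance $|z_0|=1 > 0.9$, so $\ln$ is holomorphic on the closed disc and the Taylor series converges uniformly there.

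For the approximation error I would exploit the crucial fact $|z_0|=1$, which gives $w:=(z-z_0)/z_0$ with $|w| = |z-z_0|\leq 0.9$ everywhere in the disc. The truncation tail is then dominated by a geometric series:
\begin{equation*}
|\ln(z) - P(z)| = \left| \sum_{n=k+1}^{\infty} \frac{(-1)^{n+1}}{n}\, w^n \right| \;\leq\; \sum_{n=k+1}^{\infty} \frac{0.9^n}{n} \;\leq\; \frac{10 \cdot 0.9^{k+1}}{k+1},
\end{equation*}
which is $\textnormal{negl}(k)$ because of the exponential decay factor $0.9^{k+1}$.

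For the Lipschitz bound I would differentiate termwise to get $P'(z) = z_0^{-1}\sum_{n=1}^{k} (-1)^{n+1} w^{n-1}$. Since $|z_0|=1$ and $|w|\leq 0.9$, the supremum on the disc satisfies $|P'(z)| \leq \sum_{n=0}^{\infty} 0.9^n = 10$. The disc $D(z_0,0.9)$ is convex, so for any $z_1,z_2$ in it the segment from $z_2$ to $z_1$ lies entirely in the disc, and integrating along it gives
\begin{equation*}
|P(z_1) - P(z_2)| = \left| \int_{z_2}^{z_1} P'(z)\, dz \right| \;\leq\; 10\,|z_1 - z_2|.
\end{equation*}

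I do not expect a substantive obstacle here; the only delicate matching of constants is that the Lipschitz bound $10$ falls out exactly because $\sum_{n\geq 0} 0.9^n = 1/0.1 = 10$ combined with $|z_0|=1$, and that the radius $0.9$ keeps the disc strictly inside the unit disc centered at $z_0$, which is the largest disc of analyticity around $z_0$ for the principal branch.
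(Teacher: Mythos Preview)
Your proposal is correct and constructs the same polynomial as the paper, the degree-$k$ Taylor expansion of the principal logarithm at $z_0=e^{\mi\pi/4}$. The technical estimates differ slightly but harmlessly: the paper bounds the remainder via the Cauchy integral over $\partial D(z_0,0.95)$, obtaining decay $76\,(18/19)^k$, whereas you bound the explicit tail of the $\ln(1+w)$ series directly and get the sharper $10\cdot 0.9^{k+1}/(k+1)$; for the Lipschitz constant the paper factors $(z_1-a)^n-(z_2-a)^n$ termwise and sums $\sum_{n\geq 1}0.9^{n-1}<10$, while you bound $|P'(z)|\leq \sum_{n\geq 0}0.9^n=10$ and integrate along the segment using convexity of the disc. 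Both routes are elementary; yours is arguably cleaner and avoids the auxiliary contour.
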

\begin{proof}
By $|e^{\mi\frac{\pi}{4}}-1|\leq0.8$, it is easy to verify that $\{e^{\mi \theta}| \theta\in [0,\frac{\pi}{2}]\}\subseteq D(e^{\mi\frac{\pi}{4}},0.9)=\{ z||z-e^{\mi\frac{\pi}{4}}| < 0.9, z\in\C \}$. Note that $0\notin D(e^{\mi\frac{\pi}{4}},0.95)$. The function ln$(z)$ is a univalent analytic function on the disc $D(e^{\mi\frac{\pi}{4}},0.95)$. The Taylor expansion of ln($z$) for $z\in D(e^{\mi\frac{\pi}{4}},0.9)$ at point $a=e^{\mi\frac{\pi}{4}}$ with $k$ terms and remainder expression (cf. \cite{tca2020}) is:
\begin{align}
\tn{ln}(z)=\sum^{k}_{n=0}\frac{1}{n!} \tn{ln}^{(n)}(a)(z-a)^{n}+R_k(z)=\frac{\pi}{4}\mi+\sum^{k}_{n=1}&\frac{(-1)^{n-1}}{na^{n}}(z-a)^{n}+R_k(z),
\end{align}
where ln$^{(n)}(z)=(-1)^{n-1}(n-1)!z^{-n}$ is the $n$-th derivative of ln$(z)$, and the remainder is
\begin{align}\label{o2w}
R_k(z)=\frac{1}{2\mi\pi}\oint_{\partial D(a,0.95)}\frac{\tn{ln}(w)}{w-z}(\frac{z-a}{w-a})^{k}dw,
\end{align}
where $\partial D$ is the boundary of $D$. Let $M=\max\limits_{c\in \partial D(a,0.95)} \{|\tn{ln}(c)|\}$, then $M=\max\limits_{c\in \partial D(a,0.95)}$\\$\left|\tn{ ln}|c|+\mi\tn{Arg}(c) \right|$ $\leq \sqrt{1.95^2+\pi^2} < 4$. Let $y=\frac{z-a}{w-a}$, then for any $z\in D(a,0.9)$, any $w\in\partial D(a,0.95)$, $|y|\leq \frac{18}{19}$ and $|w-z|\geq0.05$ . Therefore, the remainder (\ref{o2w}) can be bounded by
\begin{align}
|R_k(z)|\leq\frac{1}{2\pi}(2\pi \times 0.95)\frac{M}{|w-z|}y^{k}\leq 76 (\frac{18}{19})^{k},
\end{align}
where $2\pi\times 0.95$ is the length of integral curve $\partial D(a,0.95)$.

As a result, ln$(z)$ can be approximated by $P(z)=\frac{\pi}{4}\mi+\sum^{k}_{n=1}\frac{(-1)^{n-1}}{n a^n}(z-a)^{n}$, to precision \textnormal{negl}($k$) for $z\in D(a ,0.9)$, where $a=e^{\mi\frac{\pi}{4}}$. For any $z_1$, $z_2 \in D(a,0.9)$,
\begin{align}\label{4.20}
&|P(z_1)-P(z_2)|\leq\sum^{k}_{n=1}|\frac{1}{na^{n}}|\ |(z_1-a)^{n}-(z_2-a)^{n}|  \nonumber\\
&= \sum^{k}_{n=1} \frac{1}{n}|z_1-z_2||(z_1-a)^{n-1}+(z_1-a)^{n-2}(z_2-a)+...+(z_2-a)^{n-1}|\nonumber\\
&< \sum^{k}_{n=1} 0.9^{n-1}|z_1-z_2|< 10 |z_1-z_2|.
\end{align} $\hfill\blacksquare$
\end{proof}

As a corollary of Lemma \ref{c1}, one can homomorphically evaluate the Arg($z$) for $z\in \C$ when $|z|-1$ is small.
\begin{cor}\label{c231}
Let $D_s=D(e^{\mi\pi(\frac{s}{2}-\frac{1}{4})},0.9)$ for $s=1,2,3,4$. Given two bitwise encrypted binary fractions MHE.Enc\\$(a, b)$, where $a,\ b\in[-1,1]$ such that $z=a+b\mi\in \displaystyle\cup^{4}_{s=1}D_s$, one can efficiently prepare a ciphertext MHE.Enc$(d)$, where $d\in[-1,1)$ such that $[d-\frac{\tn{Arg}(z)}{\pi}]_2=\tn{negl}(k)$, i.e., $d$ is $k$-negligibly close to $\frac{\tn{Arg}(z)}{\pi}$ in the ring $\R/2\Z$.
\end{cor}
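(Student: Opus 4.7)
The plan is to reduce the computation of $\tn{Arg}(z)/\pi$ to a polynomial evaluation on the single disc $D_1$, where Lemma \ref{c1} provides a degree-$k$ polynomial approximation of $\ln$, by exploiting the $\pi/2$-rotational symmetry of the four discs $D_s$. First I would use the encrypted sign bits $s_a, s_b$ of $a$ and $b$---directly available since $a, b$ are bitwise encrypted in signed binary form---to identify the quadrant $s \in \{1,2,3,4\}$ containing $z$. An elementary algebraic check shows that whenever $z \in \cup_s D_s$ lies in quadrant $s$, in fact $z \in D_s$: for example, when $a, b \geq 0$ and $z \in D_2$, expanding the two disc-membership conditions gives $a^2 + b^2 < \sqrt{2}(b-a) - 0.19 \leq \sqrt{2}(a+b) - 0.19$, hence $z \in D_1$; the other cases follow by the analogous inequality. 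Defining $w := (-\mi)^{s-1} z$, one unpacks $w = |a| + |b|\mi$ if $s_a = s_b$ and $w = |b| + |a|\mi$ otherwise, so $\tn{Enc}(w)$ is produced from $\tn{Enc}(a), \tn{Enc}(b)$ by zeroing the sign bits and conditionally swapping the real and imaginary bit-registers under the control of $s_a \oplus s_b$, a constant-depth Boolean operation. By construction $w \in D_1$ and $\tn{Arg}(z)/\pi \equiv \tn{Arg}(w)/\pi + o_s \pmod 2$, where the offset $o_s \in \{0, 1/2, 1, -1/2\}$ is a fixed function of the two sign bits.

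Next I would apply the polynomial $P$ of Lemma \ref{c1} to $\tn{Enc}(w)$. Since $P$ has degree $k$ with fixed complex coefficients (approximable to $k$-bit precision from the closed-form Taylor expression), its homomorphic evaluation decomposes, via Horner's scheme, into $O(k)$ encrypted additions and constant-times-encrypted complex multiplications on $O(k)$-bit fractions---standard Boolean circuits implementable under MHE. Let $I$ denote the imaginary part of the resulting ciphertext; Lemma \ref{c1} directly yields $|I - \tn{Arg}(w)| = \tn{negl}(k)$. I then multiply $I$ by a $k$-bit approximation of $1/\pi$ to obtain $\tn{Enc}(e)$ with $|e - \tn{Arg}(w)/\pi| = \tn{negl}(k)$, homomorphically compute $\tn{Enc}(o_s)$ from the two sign-bit ciphertexts using its $2$-bit truth table, and output $\tn{Enc}(d) := \tn{Enc}([e + o_s]_2)$. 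The definition of $o_s$ then ensures $[d - \tn{Arg}(z)/\pi]_2 = \tn{negl}(k)$, as required.

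The main technical obstacle is the error bookkeeping of the polynomial evaluation: the encrypted $w$ differs from the ideal value by a bit-truncation error that may be amplified through the $O(k)$ arithmetic steps computing $P(w)$. Lemma \ref{c1} already controls the Taylor-approximation error of $P$ relative to $\ln$, but the propagation of bit-level rounding errors through the arithmetic circuit also needs to be bounded. A straightforward induction along Horner's recursion, using that the coefficients of $P$ form a geometric series bounded uniformly in $k$ (since $P$ is the truncated Taylor expansion of $\ln$ on a disc of radius $0.9$), shows that the circuit has polynomial-in-$k$ operator norm. Hence working throughout at $\Theta(k)$-bit precision keeps the intermediate arithmetic error, the truncation errors of $1/\pi$ and of the coefficients of $P$, and the Lipschitz-amplified perturbation from $\tn{Enc}(w)$ all negligible in $k$. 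Combining these estimates with the approximation bound of Lemma \ref{c1} delivers the final $\tn{negl}(k)$ bound on $[d - \tn{Arg}(z)/\pi]_2$.
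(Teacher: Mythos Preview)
Your proposal is correct and reaches the same conclusion as the paper, but by a genuinely different route. The paper does \emph{not} rotate $z$ into $D_1$; instead it keeps $z$ where it is and homomorphically evaluates four separate Taylor expansions $P_j$, one centered at each $e^{\mi\theta_j}$ for $j=1,2,3,4$, then multiplexes the results via the encrypted indicator $\Delta_{j,l_{a,b}}$ computed from the sign bits, i.e.\ it evaluates $f(a,b)=\sum_{j=1}^4 \Delta_{j,l_{a,b}} P'_j(a,b)$ and reduces mod $2$. Your approach replaces this four-fold polynomial evaluation by a single evaluation of $P$ on the rotated point $w=(-\mi)^{s-1}z$, followed by adding a two-bit quadrant offset $o_s$; the rotation itself collapses to a sign-clear plus a controlled swap on the bit registers, which is cheap. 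Both approaches hinge on the same two ingredients---using the encrypted sign bits to locate the quadrant, and Lemma~\ref{c1} for the polynomial approximation of $\ln$---so neither is more general, but yours is arithmetically lighter (one degree-$k$ polynomial instead of four) and makes the $\pi/2$-symmetry explicit. The paper's multiplexing has the minor advantage of avoiding the quadrant-containment verification you carry out (that $z$ in quadrant $s$ and in $\cup_s D_s$ forces $z\in D_s$), since it simply picks whichever expansion is centered in the correct quadrant. Your error discussion is also more careful than the paper's at this point; the paper defers the Lipschitz-propagation argument to the proof of Lemma~\ref{6zq}.
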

\begin{proof}
The main idea is to use the sign bits of $a$, $b$ to determine the disc $D_s$ on which to perform the Taylor expansion. In the binary fraction representation, the sign bit of a positive number or zero is $0$, and the sign bit of a negative number is $1$. Let $\delta_{a}$ denote the sign bit of binary fraction $a$. It can be verified that if choosing $l_{a,b}=\delta_{a}-\delta_{b}-2\delta_{a}\delta_{b}+1$ mod 4 then $a+b \mi \in D_{l_{a,b}}$. By a proof similar to that of Lemma \ref{c1}, the following Taylor expansion satisfies $|P_{l_{a,b}}(z)-|z|-\mi\tn{Arg}(z)|=\tn{negl}(k)$ for $z\in D_{l_{a,b}}$:\footnote{Notice that if $a+b\mi=-1$, then $l_{a,b}=2$. This is consistent with that Arg($-1$)=$\pi$ can be approximated by the imaginary part of $P_{2}(-1)$.}
\begin{align}\label{cxz1}
P_{l_{a,b}}(z)=\mi \theta_{l_{a,b}} +\sum^{k}_{n=1}\frac{(-1)^{n-1}}{ne^{n\mi \theta_{l_{a,b}}}}(z-e^{\mi \theta_{l_{a,b}}})^{n},
\end{align}
where for $l_{a,b}=1,2,3,4$, $\theta_1=\frac{\pi}{4}$, $\theta_2=\frac{3\pi}{4}$, $\theta_3=\frac{-3\pi}{4}$, $\theta_4=\frac{-\pi}{4}$, respectively. For any $z=z_1+z_2\mi\in D_{l_{a,b}}$, the degree-$k$ real polynomial
\begin{align}
P'_{l_{a,b}} (z_1,z_2)=\frac{1}{\pi}Im(P_{l_{a,b}}(z_1+\mi z_2))
\end{align}
satisfies
\begin{align}\label{v12}
|P'_{l_{a,b}}(z_1,z_2)-\frac{1}{\pi}\tn{Arg}(z_1+z_2\mi)|=\tn{negl}(k).
\end{align}

Now, with MHE.Enc($a,b$) at hand, by homomorphic arithmetics on the encrypted sign bits of $a$, $b$, one first produces MHE.Enc($l_{a,b}$). Then by homomorphic evaluation of the following conditional function in $a$, $b$:
\begin{align}\label{cx5}
f(a, b)=\sum^{4}_{j=1}\Delta_{j,l_{a,b}}P'_j(a,b), \quad \quad \tn{where  if   } j=l_{a,b} \tn{ then } \Delta_{j,l_{a,b}}=1, \tn{otherwise   } \Delta_{j,l_{a,b} }=0,
\end{align}
one can obtain MHE.Enc($P'_{l_{a,b}}(a,b)$). Next, one can use MHE.Enc($P'_{l_{a,b}}(a,b)$) to prepare \\
MHE.Enc($[P'_{l_{a,b}}(a,b)]_2$). This can be easily done in bit-wise encryption scheme. Then the corollary holds by setting $d=[P'_{l_{a,b}}(a,b)]_2$, because $[d-\frac{\tn{Arg}(z)}{\pi}]_2=[[P'_{l_{a,b}}(a,b)]_2-\frac{\tn{Arg}(z)}{\pi}]_2=\tn{negl}(k)$. $\hfill\blacksquare$
\end{proof}

With the encryption of the approximate gate key at hand, one can homomorphically prepare the encryption of the approximate Euler angles as follows:

\begin{lem}\label{6zq}
Let $\vt=(t_1,...,t_4)\in\mS^3$. Given MHE.Enc$(\vec{t}')$, where $\vec{t}'\in\R^4$ such that $||\vt-\vt'||_{\infty}=\tn{negl}(k)$, one can efficiently prepare the encrypted approximate Euler angles MHE.Enc$(\alpha_0,\beta_0,\gamma_0)$, where $\alpha_0,\gamma_0\in[0,1), \beta_0\in[0,\frac{1}{2})$ such that $U(\alpha_0,\beta_0,\gamma_0)$ is, after ignoring a global factor, within negl($k$) L$^{\infty}$-distance to $U_{\vt}$ .
\end{lem}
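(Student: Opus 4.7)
The plan is to homomorphically evaluate the closed-form formulas in (\ref{w2}) expressing the Euler angles in terms of the quaternion coordinates, combining two ingredients: Cheon et al.'s iterative circuits for homomorphic square root and reciprocal (which, as the footnote recalls, converge doubly-exponentially in the iteration count and hence reach precision $\tn{negl}(k)$ at only $O(\log k)$ depth when refreshed via bootstrapping), together with the encrypted argument-extraction of Corollary \ref{c231} for a complex number near the unit circle.

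First I would compute, from the bitwise ciphertext $\textnormal{MHE.Enc}(\vt')$, encryptions of $r_1 := \sqrt{t_1'^2+t_3'^2}$ and $r_2 := \sqrt{t_2'^2+t_4'^2}$, by evaluating the two-term quadratic polynomial for the radicand and then the square-root circuit to enough iterations that the per-value precision is $\tn{negl}(k)$. Because $\vt\in\mS^3$ and $\|\vt-\vt'\|_\infty=\tn{negl}(k)$, the resulting $r_1,r_2$ lie within $\tn{negl}(k)$ of $\sqrt{t_1^2+t_3^2}$ and $\sqrt{t_2^2+t_4^2}$; in particular $r_1+\mi r_2$ sits in the first quadrant within $\tn{negl}(k)$ of the unit circle, so it lies in the disc $D(e^{\mi\pi/4},0.9)$. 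Feeding it into Corollary \ref{c231} yields an encryption of $\beta_0$ with $\beta_0$ negligibly close to $\tn{Arg}(r_1+\mi r_2)/\pi\in[0,1/2]$, matching the $\cos(\pi\beta),\sin(\pi\beta)$ relations of (\ref{w2}).

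Next, to extract $\alpha_0$, I would rewrite the formula for $e^{2\pi\mi\alpha}$ in (\ref{w2}) by clearing denominators with the complex conjugate $t_1-t_3\mi$, obtaining
\begin{align*}
e^{2\pi\mi\alpha}\;=\;\frac{(-t_4+t_2\mi)(t_1-t_3\mi)}{\sqrt{t_1^2+t_3^2}\,\sqrt{t_2^2+t_4^2}}\,.
\end{align*}
I would homomorphically evaluate the two degree-two polynomials giving the real and imaginary parts of the numerator directly from $\textnormal{MHE.Enc}(\vt')$, compute an encryption of $1/(r_1 r_2)$ via Cheon et al.'s iterative reciprocal circuit (after rescaling $r_1 r_2$ into the convergence interval of the algorithm), and multiply, obtaining an encrypted complex number that is $\tn{negl}(k)$-close to $e^{2\pi\mi\alpha}$. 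Corollary \ref{c231} then returns an encrypted $d$ with $d$ negligibly close to $\tn{Arg}/\pi$ modulo~$2$; setting $\alpha_0=(d/2)\bmod 1$ gives the required ciphertext. The calculation for $\gamma_0$ is identical, starting instead from $(t_4+t_2\mi)(t_1-t_3\mi)$ with the opposite global sign dictated by (\ref{w2}).

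The main obstacle is the near-degenerate regime where $t_1^2+t_3^2$ or $t_2^2+t_4^2$ is close to $0$: the reciprocal circuit loses its convergence guarantee, and tiny perturbations in $\vt'$ can move $\alpha$ and $\gamma$ macroscopically (this is just the familiar Euler-angle singularity at $\beta=1/2$). I would handle this by a homomorphic case split on the magnitude of $r_1$ (respectively $r_2$), as in (\ref{cx5}) of Corollary \ref{c231}: in the generic branch proceed as above, and in the degenerate branch exploit the fact that at $\beta_0=1/2$ only $\alpha_0+\gamma_0$ is observable, so one may set $\gamma_0=0$ and take $\alpha_0$ directly from the argument of $-t_4+t_2\mi$ (or symmetrically), or alternatively apply a Lemma \ref{2.2}-style perturbation to stay outside the singular locus. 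Once each intermediate quantity is within $\tn{negl}(k)$ of its true value, the target bound $\|U(\alpha_0,\beta_0,\gamma_0)-e^{\mi\delta}U_{\vt}\|_\infty=\tn{negl}(k)$ follows from the Lipschitz dependence of the entries of $U(\alpha,\beta,\gamma)$ in (\ref{3.11}) on $(\alpha,\beta,\gamma)$, together with a triangle inequality combining the approximation errors from the square-root, reciprocal, and argument subroutines.
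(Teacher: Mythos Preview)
Your approach is essentially the paper's: compute $r_1,r_2$ via the homomorphic square-root circuit, extract $\beta_0$ from the argument of $r_1+\mi r_2$, clear denominators in (\ref{w2}) exactly as you do (the paper writes out the same four real quantities in (\ref{vcx})), apply Corollary~\ref{c231} for $\alpha_0,\gamma_0$, and branch on the degenerate case. One small slip in your degenerate branch: when $t_1^2+t_3^2\approx 0$ (so $\beta\approx 1/2$), the phase-invariant combination is $\gamma-\alpha$, not $\alpha+\gamma$, and it is recovered from the \emph{ratio} $(t_4+t_2\mi)/(-t_4+t_2\mi)=e^{2\pi\mi(\gamma-\alpha+1/2)}$, not from the argument of $-t_4+t_2\mi$ alone; the paper sets $\alpha_0=0$ and reads $\gamma_0$ from that ratio.
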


%\hspace{-0.5cm}\emph{Proof $($sketch$)$.} Let $\alpha,\gamma\in[0,1), \beta\in[0,\frac{1}{2})$ be as defined in (\ref{w2}), such that $U(\alpha,\beta,\gamma)\xlongequal{\tn{i.g.p.f}}U_{\vt}$. For short, denote $\tilde{\vt}=(\tilde{t}_1,\tilde{t}_2)=(\sqrt{t^2_1+t^2_3},\sqrt{t^2_2+t^2_4})$, and denote $\tilde{\vt'}=(\tilde{t'}_1,\tilde{t'}_2)=(\sqrt{t'^2_1+t'^2_3}, \sqrt{t'^2_2+t'^2_4})$. Notice that $e^{\mi\pi\beta}=\tilde{t}_1+\mi\tilde{t}_2$, and $\pi\beta\in[0,\frac{\pi}{2})$. By Lemma \ref{c1}, on input MHE.Enc$(\tilde{\vt})$, one can produce the MHE encryption of a negl$(k)$-approximation of $\beta$. Now that with the approximation MHE.Enc$(\vt')$ at hand, where $||\vt'-\vt||_{\infty}=\tn{negl}(k)$, one first homomorphically computes the negl$(k)$-approximation of $\tilde{\vt'}$ by homomorphic square root \cite{cheon2019numerical}, followed by using Lemma \ref{c1} to homomorphically evaluate Arg to get a ciphertext MHE.Enc$(\beta_0)$. By the Lipschitz continuity shown in Lemma \ref{c1}, $\beta_0$ is a negl$(k)$-approximation of $\beta$.

%Similarly, by combining $(\ref{w2})$, Corollary \ref{c231}, homomorphic square root and inverse \cite{cheon2019numerical}, using MHE.Enc$(\vt')$ allows to homomorphically compute $\alpha_0$, $\gamma_0$, which is the approximation of $\alpha$, $\gamma$ respectively, such that $U(\alpha_0,\beta_0,\gamma_0)$ is, after ignoring a global factor, within negl($k$) L$^{\infty}$-distance from $U_{\vt}$. $\hfill\blacksquare$\\
%Add the above Proof (sketch), and move the following to Appendix?

\emph{(Sketch Proof).}
Let $\alpha,\gamma\in[0,1), \beta\in[0,\frac{1}{2})$ be as defined in (\ref{w2}), such that $U(\alpha,\beta,\gamma)$\\$\xlongequal{\tn{i.g.p.f}}U_{\vt}$. By (\ref{w2}), $\sqrt{t^2_1+t^2_3}+\mi \sqrt{t^2_2+t^2_4}=e^{\mi\pi\beta}\in D(e^{\mi\frac{\pi}{4}},0.8)$. By homomorphically computing $P(\sqrt{t^2_1+t^2_3}+\mi \sqrt{t^2_2+t^2_4})$ where $P$ is as in Lemma \ref{c1}, one can produce an encrypted negl$(k)$-approximation of $\beta$. When given MHE.Enc$(\vt')$ where $||\vt-\vt'||_{\infty}=\tn{negl}(k)$ so that $|\sqrt{t^2_1+t^2_3}-\sqrt{t'^2_1+t'^2_3}|=\tn{negl}(k)$, by the Lipschitz continuity of $P$, one can still get an encrypted negl$(k)$-approximation of $\beta$. Since the following Lipschitz continuity of $P_j$ in (\ref{cxz1}) holds for $j=1,2,3,4$:
\begin{align}
|P_{j}(z_1)-P_{j}(z_2)|\leq 10|z_1-z_2|, \quad \quad  \quad  \forall z_1, z_2 \in D(e^{\mi\theta_j},0.9),
\end{align}
now with MHE.Enc$(\vt')$, by Corollary \ref{c231}, one can homomorphically compute encrypted approximations to the Euler angles $\gamma$, $\alpha$ according to (\ref{w2}). The complete proof is as following:

\begin{proof}
Let $\alpha,\gamma\in[0,1), \beta\in[0,\frac{1}{2})$ be as defined in (\ref{w2}), such that $U(\alpha,\beta,\gamma)\xlongequal{\tn{i.g.p.f}}U_{\vt}$. Denote
\begin{align}
\widetilde{\vt}=(\widetilde{t}_1,\widetilde{t}_2)=(\sqrt{t^2_1+t^2_3},\sqrt{t^2_2+t^2_4}), \quad \widetilde{\vt'}=(\widetilde{t'_1},\widetilde{t'_2})=(\sqrt{t'^2_1+t'^2_3}, \sqrt{t'^2_2+t'^2_4}).
\end{align}
We begin with computing MHE.Enc$(\beta_0)$, the encryption of a negl$(k)$-approximation of $\beta$. Notice that $\pi\beta\in[0,\frac{\pi}{2})$ and $\widetilde{t}_1+\mi\widetilde{t}_2=e^{\mi\pi\beta}\in D(e^{\mi\frac{\pi}{4}},0.9)$. By Lemma \ref{c1}, there is a degree-$k$ complex polynomial such that $|P(\widetilde{t}_1+\mi\widetilde{t}_2)-\mi\pi\beta|=\tn{negl}(k)$ as follows:
\begin{align}\label{cxq}
P(z)=\frac{\pi}{4}\mi+\sum^{k}_{n=1}\frac{(-1)^{n-1}}{na^{n}}(z-a)^{n},  \qquad \tn{      where $a=e^{\mi\frac{\pi}{4}}$}, \quad z \in D(a,0.9).
\end{align}
The degree-$k$ real polynomial $P_I(t_1,t_2)=\frac{1}{\pi}Im\left( P(t_1+\mi t_2) \right)$ satisfies $|P_I(\widetilde{t}_1,\widetilde{t}_2)-\beta|=$negl($k$).

We prove that for any $\vh=(h_1,h_2)\in\R^2$ such that $||\vh-\widetilde{\vt}||_{\infty}$=negl($k$), it holds that $|P_{I}(\vh)-\beta|=$negl($k$). By setting $z_1=\widetilde{t}_1+\mi\widetilde{t}_2$, $z_2=h_1+\mi h_2$, we first show that $|P(z_1)-P(z_2)|=\tn{negl}(k)$. By $|z_1-a|=|e^{\mi\pi\beta}-a|\leq0.8$ and $||\vh-\widetilde{\vt}||_{\infty}$=negl($k$), there must exist $K_0\in \N$ such that for all $k\geq K_0$, $|z_2-a|\leq|z_2-z_1|+|z_1-a|=\tn{negl}(k)+0.8<0.9$. By Lemma \ref{c1}, for any $k>K_0$, $|P(z_1)-P(z_2)|=\tn{negl}(k)$. Then,
\begin{align}\label{we1}
|P_{I}(\vh)-\beta|&\leq |P_{I}(\vh)-P_{I}(\widetilde{\vt})|+|P_{I}(\widetilde{\vt})-\beta| \nonumber\\
&\leq \frac{1}{\pi}|P(h_1+\mi h_2)-P(\widetilde{t}_1+\mi\widetilde{t}_2)|+\tn{negl}(k)=\tn{negl}(k).
\end{align}

Now with MHE.Enc($\vec{t'}$), by homomorphic square root computation (cf. \cite{cheon2019numerical}), one can efficiently prepare an encrypted approximation of $\vec{\widetilde{t'}}$, denoted by MHE.Enc($\vec{\widetilde{T'}}$), such that $||\vec{\widetilde{T'}}-\vec{\widetilde{t'}}||_{\infty}=$negl($k$), and so
\begin{align}
||\vec{\widetilde{T'}}-\vec{\widetilde{t}}||_{\infty}\leq||\vec{\widetilde{T'}}-\vec{\widetilde{t'}}||_{\infty}+||\vec{\widetilde{t'}}-\vec{\widetilde{t}}||_{\infty}=\tn{negl}(k).
\end{align}
Setting $\vh=\vec{\widetilde{T'}}$, by (\ref{we1}), it holds that
$|P_{I}(\vec{\widetilde{T'}})-\beta|=\tn{negl}(k)$. Set $\beta_0=P_{I}(\vec{\widetilde{T'}})$. Then $|\beta_0-\beta|=\tn{negl}(k)$.

%\begin{align}\label{cx1}
%|\beta_0-\beta|\leq|P_{I}(\vec{\widetilde{T'}})-P_{I}(\widetilde{\vt})|+|P_{I}(\widetilde{\vt})-\beta|= \tn{negl}(k),
%\end{align}
%where last equation comes from (\ref{zwe2})-(\ref{we2}) with substituting $\widetilde{\vt'}=\vec{T}'$

Next, we compute MHE.Enc($\alpha_0,\gamma_0$). Assume initially that $t_1+t_3\mi$, $t_2+t_4\mi$ are not $k$-negligibly close to $0$, so $t^2_1+t^2_3\neq0,\ t^2_2+t^2_4\neq0,\ t'^2_1+t'^2_3\neq0$ and $t'^2_2+t'^2_4\neq0$. By (\ref{w2}),
\begin{align}\label{cx1}
e^{2\pi\mi\alpha}=\frac{-t_4+t_2\mi}{t_1+t_3\mi}\frac{\sqrt{t^2_1+t^2_3}}{\sqrt{t^2_2+t^2_4}}&=\hat{t}_1+\mi\hat{t}_2,\quad \tn{where} \nonumber\\
& \hat{t}_1=\frac{-t_1t_4+t_2t_3}{\sqrt{t^2_1+t^2_3}\sqrt{t^2_2+t^2_4}}, \quad \hat{t}_2=\frac{-t_1t_2+t_3t_4}{\sqrt{t^2_1+t^2_3}\sqrt{t^2_2+t^2_4}},
\end{align}
\begin{align}\label{cx2}
e^{2\pi\mi\gamma}=\frac{-t_4-t_2\mi}{t_1+t_3\mi}\frac{\sqrt{t^2_1+t^2_3}}{\sqrt{t^2_2+t^2_4}}&=\hat{t}_3+\mi\hat{t}_4, \quad \tn{where}\nonumber\\
 &  \hat{t}_3=\frac{t_1t_4+t_2t_3}{\sqrt{t^2_1+t^2_3}\sqrt{t^2_2+t^2_4}}, \quad \hat{t}_4=\frac{t_1t_2-t_3t_4}{\sqrt{t^2_1+t^2_3}\sqrt{t^2_2+t^2_4}}.
\end{align}

We prove that for any $\vg=(g_1,g_2)\in\R^2$ and $d_g\in[0,1)$ such that $g_1+g_2\mi=e^{-2\pi\mi d_g}$, using MHE.Enc$(\vg)$ allows to prepare MHE.Enc($d'_g$) with $[d'_g-d_g]_1=\tn{negl(k)}$. By Corollary \ref{c231}, on input MHE.Enc($\vg$), one can produce a ciphertext MHE.Enc($d$) where $d\in[-1,1)$, such that $[d-\frac{1}{\pi}\tn{Arg}(e^{2\pi\mi d_g})]_2=$ negl($k$), namely, $[d-2 d_g]_2=$negl($k$), and $[\frac{d}{2}-d_g]_1=$negl($k$). By homomorphic evaluation based on MHE.Enc($d$), one can continue to produce an encryption of a number $d'_g\in[0,1)$ such that $[d'_g]_1=\frac{d}{2}$. It holds that $[d'_g-d_g]_1=[\frac{d}{2}-d_g]_1=\tn{negl}(k)$.

Furthermore, we have the Lipschitz continuity of $P_j$ in (\ref{cxz1}) for $j=1,2,3,4$ as follows:
\begin{align}
|P_{j}(z_1)-P_{j}(z_2)|\leq 10|z_1-z_2|, \quad \quad  \quad  \forall z_1, z_2 \in D(e^{\mi\theta_{j}},0.9).
\end{align}
By combining the estimates in (\ref{we1}) and Corollary \ref{c231}, when given not $\tn{MHE.Enc}(\vg)$, but instead an encrypted approximate MHE.Enc($\vh$) where $||\vh-\vg||_{\infty}=\tn{negl}(k)$, one can produce a ciphertext MHE.Enc$(d''_g)$ such that $[d''_g-d_g]_1=\tn{negl}(k)$. Then by $e^{2\pi\mi d_g}=g_1+\mi g_2$,
\begin{align}\label{cq4}
|e^{2\pi\mi d''_g}-(g_1+\mi g_2)|=\tn{negl}(k).
\end{align}

%More specifically, the convergence rate of $\frac{1}{x}$ in $x\in[p(k),1]$, where $p(k)>\tn{negl}(k)$, is $(1-p(k))^{2^k}=(1-p(x))^{\frac{1}{p(x)}p(k)2^{k}}$

Below, we prove that with MHE.Enc$(\vt')$ at hand, one can prepare MHE.Enc($\alpha_0,\gamma_0$) such that $U(\alpha_0,\beta_0,\gamma_0)$ is, after ignoring a global factor, within negl($k$) L$^{\infty}$-distance to $U_{\vt}$. First, one can prepare MHE.Enc$(\hat{\vec{T'}})$, where $\hat{\vec{T'}}\in\R^4$ is an approximation to
\begin{align}\label{vcx}
\hat{\vt'}&=(\hat{t'_1},\hat{t'_2},\hat{t'_3},\hat{t'_4})\nonumber\\
&=(\frac{-t'_1t'_4+t'_2t'_3}{\sqrt{t'^2_1+t'^2_3}\sqrt{t'^2_2+t'^2_4}},\frac{-t'_1t'_2+t'_3t'_4}{\sqrt{t'^2_1+t'^2_3}\sqrt{t'^2_2+t'^2_4}},\frac{t'_1t'_4+t'_2t'_3}{\sqrt{t'^2_1+t'^2_3}\sqrt{t'^2_2+t'^2_4}},\frac{t'_1t'_2-t'_3t'_4}{\sqrt{t'^2_1+t'^2_3}\sqrt{t'^2_2+t'^2_4}}) \end{align}
such that $||\hat{\vt'}-\hat{\vec{T}'}||_{\infty}=$negl($k$).\footnote{When $ t'^2_1+t'^2_3$ is not negligibly small, the convergence rates of inverse and square root algorithms in \cite{cheon2019numerical} for homomorphically computing $\frac{1}{\sqrt{t'^2_1+t'^2_3}}$ are exponential. By Cauchy-Schwarz inequality, $||\hat{\vt'}||_{\infty}\leq 1$. Then, the exponential convergence rate is sufficient to guarantee that a ciphertext MHE.Enc$(\hat{\vec{T'}})$ with $||\hat{\vec{T}'}-\hat{\vt'}||_{\infty}=$negl($k$) can be prepared in time poly($k$) .} By the argument leading to (\ref{cq4}), if setting $\vg=(\hat{t}'_1,\hat{t}'_2)$ and the approximation $\vh=(\hat{T}'_1,\hat{T}'_2)$, or $\vg=(\hat{t}'_3,\hat{t}'_4)$ and $\vh=(\hat{T}'_3,\hat{T}'_4)$, then using MHE.Enc($\hat{\vec{T'}}$), one can prepare ciphertexts MHE.Enc($\alpha_0$), MHE.Enc($\gamma_0$) such that

\begin{align}\label{cx3}
|e^{2\pi\mi\alpha_{0}}-(\hat{t}'_1+\mi\hat{t}'_2)|=\tn{negl}(k), \qquad \qquad  |e^{2\pi\mi\gamma_{0}}-(\hat{t}'_3+\mi\hat{t}'_4)|=\tn{negl}(k).
\end{align}

From $|\beta_0-\beta|= \tn{negl}(k)$, one gets $|\sin \pi\beta_0-\sqrt{t^2_2+t^2_4}|$=negl($k$) and  $|\cos \pi \beta_0-\sqrt{t^2_1+t^2_3}|$\\=negl($k$), so
\begin{equation}\label{cx4}
|\sin  \pi\beta_0-\sqrt{t'^2_2+t'^2_4}=\tn{negl}(k), \quad  \quad |\cos  \pi\beta_0-\sqrt{t'^2_1+t'^2_3}|=\tn{negl}(k).
\end{equation}
By (\ref{cx1}), (\ref{cx2}) and (\ref{vcx}), we have
\begin{align}\label{cx7}
\hat{t}'_1+\mi\hat{t}'_2=\frac{-t'_4+t'_2\mi}{t'_1+t'_3\mi}\frac{\sqrt{t'^2_1+t'^2_3}}{\sqrt{t'^2_2+t'^2_4}}, \quad \quad \quad  \hat{t}'_3+\mi\hat{t}'_4=\frac{-t'_4-t'_2\mi}{t'_1+t'_3\mi}\frac{\sqrt{t'^2_1+t'^2_3}}{\sqrt{t'^2_2+t'^2_4}} .
\end{align}
By combining (\ref{cx3}) and (\ref{cx4}), (\ref{cx7}),
\begin{align}
&||U(\alpha_0,\beta_0,\gamma_0)-e^{\mi\delta}U_{\vt'}||_{\infty}\nonumber\\
= &\left\| \left[                %左括号
  \begin{array}{cc}   % 该矩阵一共3列，每一列都居中放置
\cos(\pi\beta_0)-\sqrt{t'^2_1+t'^2_3}, & -\sin(\pi\beta_0)e^{2\pi\mi \gamma_0}- (t'_4+t'_2 \mi \frac{\sqrt{t'^2_1+t'^2_3}}{t'_1+t'_3\mi}) \\
\sin(\pi\beta_0)e^{2\pi\mi\alpha_0}-(-t'_4+t'_2 \mi\frac{\sqrt{t'^2_1+t'^2_3}}{t'_1+t'_3\mi} ),  & \cos(\pi\beta_0)e^{2\pi\mi(\alpha_0+\gamma_0)}-(t'_1-t'_3\mi\frac{\sqrt{t'^2_1+t'^2_3}}{t'_1+t'_3\mi})
  \end{array}
\right]  \right\|_{\infty}\nonumber\\
=&\tn{negl}(k),
\end{align}
where $e^{\mi\delta}=\frac{\sqrt{t'^2_1+t'^2_3}}{t'_1+\mi t'_3}$. Then,
\begin{align}
||U(\alpha_0,\beta_0,\gamma_0)-e^{\mi\delta}U_{\vt}||_{\infty}=\tn{negl}(k).
\end{align}

In the above homomorphic calculations of $\alpha_0$, $\gamma_0$, we simply assume that $t^2_1+t^2_3$, $t^2_2+t^2_4$ are all not too small, i.e., $t^2_1+t^2_3\neq\tn{negl}(k),\ t^2_2+t^2_4\neq\tn{negl}(k)$. If $t^2_1+t^2_3=$negl($k$), then $t^2_2+t^2_4=1-$negl($k$), and by (\ref{qo1}),
\begin{align}
\frac{t_4+t_2\mi}{-t_4+t_2\mi}=e^{2\pi\mi(\gamma-\alpha+1/2)}.
\end{align}
Set $\alpha_0=0$ and $\widetilde{\gamma}_0=\frac{1}{2\pi}\tn{Arg}(\frac{t_4+t_2\mi}{-t_4+t_2\mi})+\frac{1}{2} \mod 1$. Since $|t_4+t_2\mi|=|-t_4+t_2\mi|=1-\tn{negl}(k)$ and $||\vt-\vt'||_{\infty}=\tn{negl}(k)$, with MHE.Enc($\vt'$) at hand, by homomorphic division (cf. \cite{cheon2019numerical}), one can efficiently prepare an encrypted negl($k$)-approximation to $\frac{t_4+t_2\mi}{-t_4+t_2\mi}$, and then homomorphically evaluate Arg to produce MHE.Enc($\gamma_0$) such that $|\gamma_0-\widetilde{\gamma}_0|=\tn{negl}(k)$.
By combining $e^{2\pi\mi\widetilde{\gamma}_0}=-\frac{t_4+t_2\mi}{-t_4+t_2\mi}$, $t^2_1+t^2_3=\tn{negl}(k)$ and $|\beta_0-\beta|=\tn{negl}(k)$,
\begin{align}
&||U(0,\beta_0,\widetilde{\gamma_0})-\frac{\sqrt{t^2_2+t^2_4}}{-t_4+t_2\mi}U_{\vt}||_{\infty}=\nonumber\\
&\left\| \left[                %左括号
  \begin{array}{cc}   % 该矩阵一共3列，每一列都居中放置
\tn{negl}(k), & -\sin(\pi\beta_0)e^{2\pi\mi \gamma_0}- \frac{t_4+t_2\mi}{-t_4+t_2\mi}\sqrt{t^2_2+t^2_4} \\
\sin(\pi\beta_0)-\sqrt{t^2_2+t^2_4},  & \tn{negl}(k)
  \end{array}
\right]\right\|_{\infty}
=\tn{negl}(k).
\end{align}
The obtained 3-tuple $(\alpha_0,\beta_0,\gamma_0)$ satisfies the requirement of the lemma:
\begin{align}
&||U(\alpha_0,\beta_0,\gamma_0)-\frac{\sqrt{t^2_2+t^2_4}}{-t_4+t_2\mi}U_{\vt}||_{\infty}=\nonumber\\
&||U(\alpha_0,\beta_0,\gamma_0)- U(0,\beta_0,\widetilde{\gamma_0})||_{\infty} + ||U(0,\beta_0,\widetilde{\gamma_0})-\frac{\sqrt{t^2_2+t^2_4}}{-t_4+t_2\mi}U_{\vt}||_{\infty}
=\tn{negl}(k).
\end{align}
The case of $t^2_2+t^2_4=$negl($k$) is similar. $\hfill\blacksquare$
\end{proof}

% the arithmetic of $P$ is in terms of homomorphic evaluation.

%By ratios, the maximal terms $k_0$ in $T(3/4)$ should satisfy
%\begin{align}
%\hspace{-1cm}\frac{(3/4)^2(2k_0+1)^2}{(2k_0-N+4)(2k_0-N+3)}\leq1\ \rm{and}\ \frac{(3/4)^2(2k_0-1)^2}{(2k_0-N+2)(2k_0-N+1)}\geq1.
%\end{align}
%It yields that $\lfloor k_0/2\rfloor\leq N \leq\lfloor k_0/2\rfloor+5$, and $2(N-5)\leq k_0\leq 2N+1$. Thus,
%\begin{align}
%T(3/4)\leq \frac{[ (2k_0-1)!! ]^2}{(2k_0-N+2)!}(3/4)^{2k_0-N+1}(2\times(2N+1) +\sum^{\infty}_{j=1}(\frac{6}{7})^j)\nonumber\\
%\end{align}

\end{document}